%BeginFileInfo
%%Publisher=ARXIV
%%Project=AAP
%%Manuscript=AAP940
%EndFileInfo
%
% Institute of Mathematical Statistics (IMI)
% Journal "The Annals of Applied Probabability"

%secthm,seceqn,secfloat,nameyear,number,noautosecdot
\documentclass[aap,MSNbibl,dvips]{arximspdf}

% settings
%

% article settings
\doi{10.1214/13-AAP940} %kopijuoti is PTS
\volume{24}
\issue{3}
\pubyear{2014}
\firstpage{1002}
\lastpage{1048}

\makeatletter
\newcommand{\rrvert}{\vert}
\newcommand{\llvert}{\vert}
\newtheorem{lemma}{Lemma}
\newtheorem{corollary}{Corollary}
\newtheorem{theorem}{Theorem}
\newtheorem{proposition}{Proposition}
\newproclaim{remark}{Remark}
\newproclaim{example}{Example}
\newproclaim{definition}{Definition}
\newcommand{\eqref}[1]{(\ref{#1})}
\makeatother

\begin{document}
\begin{frontmatter}

\title{Asymptotically optimal discretization of hedging strategies with jumps}
\runtitle{Asymptotically optimal discretization}

\begin{aug}
\author[A]{\fnms{Mathieu} \snm{Rosenbaum}\corref{}\ead[label=e1]{mathieu.rosenbaum@polytechnique.edu }}
\and
\author[B]{\fnms{Peter} \snm{Tankov}}
\runauthor{M. Rosenbaum and P. Tankov}
\affiliation{LPMA, Universit\'e Pierre et Marie Curie and LPMA,\break
Universit\'e Paris Diderot}
\address[A]{LPMA\\
Universit\'e Pierre et Marie Curie\\
(Paris 6)\\
Case courrier 188, 4 place Jussieu\\
75252 Paris Cedex 05\\
France} %adresu isvedimo komanda gale!
\address[B]{LPMA\\
Universit\'e Paris Diderot\\
(Paris 7)\\
Case courrier 7012\\
75205 Paris Cedex 13\\
France}
\end{aug}

% HISTORY:
\received{\smonth{8} \syear{2011}}
\revised{\smonth{5} \syear{2013}}

% ABSTRACT
%
\begin{abstract}
In this work, we consider the hedging error due to
discrete trading in models with jumps. Extending an approach
developed by Fukasawa [In \textit{Stochastic Analysis with Financial Applications} (2011) 331--346
Birkh\"auser/Springer Basel AG] for continuous processes, we propose
a framework enabling us to (asymptotically) optimize the
discretization times. More precisely, a discretization rule is
said to be optimal if for a given cost function, no strategy has
(asymptotically, for large cost) a lower mean square
discretization error for a smaller cost. We focus on
discretization rules based on hitting times and give explicit
expressions for the optimal rules within this class.
\end{abstract}

% KEYWORDS
% Pirmas kwd is didziosios raides
%
\begin{keyword}[class=AMS]
\kwd{60H05}
\kwd{91G20}
\end{keyword}
\begin{keyword}
\kwd{Discretization of stochastic integrals}
\kwd{asymptotic optimality}
\kwd{hitting times}
\kwd{option hedging}
\kwd{semimartingales with jumps}
\kwd{Blumenthal--Getoor index}
\end{keyword}

\end{frontmatter}

%s1 #&#
\section{Introduction}\label{sec1}
A basic problem in mathematical finance is how to replicate a random
claim with $\mathcal F_T$-measurable payoff $H_T$ with a portfolio
involving only the underlying asset $Y$ and cash.
When $Y$ follows a diffusion process of the form
%
%e1 #&#
%
\begin{equation}
dY_t = \mu(t,Y_t)\,dt + \sigma(t,Y_t)
\,dW_t,\label{diff}
\end{equation}
it is known that under minimal assumptions, a random payoff depending
only on the terminal value of the asset $H_T = H(Y_T)$ can be
replicated with the so-called delta hedging strategy. This means that the
number of units of the underlying asset to hold at time $t$ is equal to $X_t
= \frac{\partial P(t,Y_t)}{\partial Y}$,
where $P(t,Y_t)$ is the price of the option, which is uniquely defined
in such a model. However, to implement such a strategy, the hedging
portfolio must be readjusted continuously, which is of course
physically impossible and irrelevant because of the presence of
microstructure effects and transaction costs. For this reason,
the optimal strategy is always replaced with a piecewise constant one,
leading to a discretization
error. The relevant questions are then: (i) how big is this
discretization error, and (ii) when are the good times to readjust the
hedge?

Assume first that the hedging portfolio is readjusted at
regular intervals of length $h = \frac{T}{n}$. A result by
Zhang \cite{zhang.couverture} (see also
\cite{bertsimas.kogan.lo.00,hayashi.mykland.05}) then shows that
for Lipschitz continuous payoff functions, assuming zero interest
rates, the discretization error
\[
\mathcal E_T^n = \int_0^T
X_t \,dY_t - \int_0^T
X_{h[t/h]} \,d Y_t
\]
satisfies
%
%e2 #&#
%
\begin{equation}
\lim_{h\to0} n E\bigl[\bigl(\mathcal E^n_T
\bigr)^2\bigr] = \frac{T}{2}E \biggl[\int_0^T
\biggl(\frac{\partial^2 P}{\partial Y^2} \biggr)^2 \sigma(s,Y_s)^4
\,ds \biggr].\label{zhang}
\end{equation}
Of course, it is intuitively clear that readjusting the portfolio at
regular deterministic
intervals is not optimal. However, the optimal strategy for fixed $n$
is very difficult to compute.

Fukasawa \cite{fukasawa.09b} simplifies this problem by
assuming that the hedging portfolio is readjusted at high
frequency. The performance of different families of strategies
can then be compared based on their asymptotic behavior as the
number of readjustment dates $n$ tends to infinity, rather than
the performance for fixed $n$. Consider a
sequence of discretization strategies
\[
0 = T^n_0 < T^n_1 < \cdots<
T^n_j < \cdots,
\]
with $\sup_{j} |T^n_{j+1}-T^n_j|\to0$ as $n\to\infty$, and
let $N^n_T:= \max\{j\geq0; T^n_j \leq T\}$ be the total
number of readjustment dates on the interval $[0,T]$ for given $n$.
%To compare discretization rules, we need a functional which
%on the asymptotic behavior and
%discretization error and the cost of readjusting the portfolio.
To compare two such sequences in terms of their
asymptotic behavior for large $n$, Fukasawa~\cite{fukasawa.09b} uses the
functional
%
%e3 #&#
%
\begin{equation}
\lim_{n\to\infty} E\bigl[N^n_T\bigr] E\bigl[
\bigl\langle\mathcal E^n\bigr\rangle_T\bigr],\label{fuklim}
\end{equation}
where $\langle\mathcal E^n\rangle$ is the quadratic variation of
the semimartingale $(\mathcal E^n_t)_{t\geq0}$. He finds that
when the underlying asset is a continuous semimartingale, the
functional \eqref{fuklim} admits a nonzero lower bound over all
such sequences, and exhibits a specific sequence which attains
this lower bound and is therefore called \emph{asymptotically
efficient}.

In the diffusion model \eqref{diff}, the asymptotically
efficient sequence takes the form
%
%e4 #&#
%
\begin{eqnarray}\label{fukrule}
T^n_{j+1} &=& \inf\biggl\{t>T^n_j;
|X_t - X_{T^n_j}|^2 \geq h_n
\frac{\partial^2 P(T^n_j,Y_{T^n_j})}{\partial
Y^2}\biggr\},
\nonumber
\\[-8pt]
\\[-8pt]
\nonumber
 X_t &=& \frac{\partial P(t,Y_t)}{\partial
Y},
\end{eqnarray}
where $h_n$ is a deterministic sequence with $h_n \to0$. In this case,
%
%e5 #&#
%
\begin{equation}
\lim_{n\to\infty} E\bigl[N^n_T\bigr] E\bigl[
\bigl\langle\mathcal E^n\bigr\rangle_T\bigr] =
\frac{1}{6}E \biggl[\int_0^T
\frac{\partial^2 P}{\partial Y^2} \sigma (s,Y_s)^2 \,ds
\biggr]^2,\label{diffeff}
\end{equation}
whereas for readjustment at equally spaced dates, formula
\eqref{zhang} yields
%
%e6 #&#
%
\begin{equation}
\lim_{n\to\infty} E\bigl[N^n_T\bigr] E\bigl[
\bigl\langle\mathcal E^n\bigr\rangle_T\bigr] =
\frac{T}{2}E \biggl[\int_0^T \biggl(
\frac{\partial^2 P}{\partial Y^2} \biggr)^2 \sigma(s,Y_s)^4
\,ds \biggr].\label{diffunif}
\end{equation}
Using the Cauchy--Schwarz inequality, we then see that the
asymptotically efficient discretization leads to a gain of at least a
factor $3$, compared to readjustment
at regularly spaced points.

Remark that the discretization scheme \eqref{fukrule} is
very different from the classical approximation schemes for stochastic
differential equations such as Euler or Milstein schemes. In order to
be implemented it requires the continuous observation of $(X_t)$ and
$(Y_t)$, which of course makes sense in the mathematical finance
context because the prices are, essentially, continuously observable
and the need
for discretization is due to the presence of transaction costs.

While the above approach is quite natural and provides
very explicit results, it fails to take into account important
factors of market reality. First, the asymptotic functional
\eqref{fuklim} is somewhat ad hoc, and does not reflect any
specific model for the transaction costs. Yet, transaction costs
are one of the main reasons why continuous (or almost continuous)
readjustments are not used. Therefore, they should be the
determining factor for any discretization algorithm. On the other
hand, the continuity assumption, especially at relatively high
frequencies, is not realistic. Indeed, it is well known that jumps
in the price occur quite frequently and have a significant impact
on the hedging error. It can even be argued that high-frequency
financial data are best described by pure jump processes; see
\cite{finestructure}.

The objective of this paper is therefore two-fold.
First, we develop a framework for characterizing the asymptotic
efficiency of discretization strategies which takes into account
the transaction costs. Second, we remove the continuity assumption
in order to understand the effect of the activity of small jumps
(often quantified by the Blumenthal--Getoor index) on the optimal
discretization strategies.

Models with jumps correspond to incomplete markets,
where the hedging issue is an approximation problem,
%
%e7 #&#
%
\begin{equation}
\min_X E \biggl(c + \int_0^T
X_{t-} \,dY_t - H_T \biggr)^2,\label{quadcrit}
\end{equation}
where $Y$ is now a semimartingale with jumps. The optimal strategy
$X^*$ for this problem is known to exist for any $H_T \in L^2$; see
\cite{cont.al.05,follmer.sondermann.86,follmer.schweizer.91,schweizer.01,kallsen.hubalek.al.06,cerny.kallsen.07}.
If the expectation in \eqref{quadcrit} is computed under a
martingale probability measure, then for any admissible strategy
$X'$,
%
%e8 #&#
%
\begin{eqnarray}\label{pythagore}
E \biggl(c + \int_0^T X'_{t-}
\,dY_t - H_T \biggr)^2 &=& E \biggl(\int
_0^T \bigl(X'_{t-}-X^*_{t-}
\bigr) \,dY_t \biggr)^2
\nonumber
\\[-8pt]
\\[-8pt]
\nonumber
&&{}+ E \biggl(c + \int
_0^T X^*_{t-} \,dY_t -
H_T \biggr)^2.
\end{eqnarray}
Indeed, $\int X_{t-}^* \,dY_t$ is essentially the orthogonal projection of
$H_T$ on the subspace of $L^2$ constituted by the stochastic integrals
of the form $\int X_{t-} \,dY_t$ where $X_{t-}$ is an admissible hedging
strategy.
Therefore, the quadratic hedging problem \eqref{quadcrit} and the
discretization problem can be studied separately. Given that the
quadratic hedging problem has already been studied by many authors,
in this paper we concentrate on the discretization problem.

Our goal is to study and compare discretization {rules}
for stochastic integrals of the form
\[
\int_0^TX_{t-} \,dY_t,
\]
where $X_t$ and $Y_t$ are semimartingales
with jumps, with the aim of identifying asymptotically optimal
{rules}. In particular we wish to understand the impact of the small
jumps of $X$ on the discretization error, and therefore we assume
that $X$ has no continuous local martingale part; see Remark \ref{purejumprem}.

A \emph{discretization rule} is a family of stopping
times $(T_i^\varepsilon)_{i\geq0}^{\varepsilon>0}$ parameterized
by a nonnegative integer $i$ and a positive real $\varepsilon$,
such that for every $\varepsilon>0$, $0 = T^\varepsilon_0 <
T^\varepsilon_1 < T^\varepsilon_2 < \cdots$. For a fixed discretization rule
and a fixed $\varepsilon$, we let $\eta^\varepsilon(t) =
\sup\{T^\varepsilon_i\dvtx T^\varepsilon_i \leq t\}$ and $N_T^{\varepsilon} =
\sup\{i\dvtx T^\varepsilon_i \leq T\}$. Motivated by decomposition~\eqref{pythagore}, we measure the performance of a
discretization rule with the $L^2$ error functional
%
%e9 #&#
%
\begin{equation}
\mathcal E(\varepsilon):= E \biggl[ \biggl(\int_0^T
(X_{t-} - X_{\eta
(t)-}) \,dY_t \biggr)^2
\biggr].\label{errfunc}
\end{equation}
Also, to each discretization rule we associate a family of cost
functionals of the form
%
%e10 #&#
%
\begin{equation}
\mathcal C^\beta(\varepsilon) = E \biggl[\sum
_{i\geq1: T^\varepsilon_i
\leq T} |X_{T^\varepsilon_i} - X_{T^\varepsilon_{i-1}}|^\beta
\biggr],\label{costfunc}
\end{equation}
with $\beta\in[0,2]$.
The case $\beta=0$ corresponds to a fixed cost per transaction, and
the case $\beta=1$ corresponds to a fixed cost per unit of asset.
{Other values of $\beta$ often appear in the market
microstructure literature where one considers that transaction costs
are explained by the shape of the order book.}

In our
framework, a discretization rule is said to be optimal
for a given cost functional if no strategy has (asymptotically,
for large costs) a lower discretization error and a smaller
cost.

Motivated by the representation
\eqref{fukrule} and the readjustment rules used by market
practitioners, we focus on discretization strategies based on the
exit times of $X$ out of random intervals
%
%e11 #&#
%
\begin{equation}
T^\varepsilon_{i+1} = \inf\bigl\{t>T^\varepsilon_i
\dvtx X_t\notin (X_{T^\varepsilon_i}- \varepsilon\underline{a}_{T^\varepsilon_i},
X_{T^\varepsilon
_i}+\varepsilon\overline{a}_{T^\varepsilon_i}) \bigr\},\label{discrule}
\end{equation}
where $(\overline a_t)_{t\geq0}$ and $(\underline a_t)_{t\geq
0}$ are positive $\mathbb F$-adapted c\`adl\`ag
processes.

In Theorems \ref{err.thm} and \ref{cost.thm}, we characterize
explicitly the asymptotic behavior of the errors and
costs associated to these random discretization rules, by showing that,
under suitable assumptions,
\begin{eqnarray*}
\lim_{\varepsilon\to0} \varepsilon^{-2} \mathcal E(\varepsilon)
&=& E \biggl[\int_0^T {A_t}
\frac{f(\underline a_t, \overline
a_t)}{g(\underline a_t, \overline a_t)} \,dt \biggr],
\\
\lim_{\varepsilon\to0} \varepsilon^{\alpha-\beta} \mathcal
C^{ \beta
}(\varepsilon) &=& E \biggl[\int_0^T
{\lambda_t} \frac{u^\beta(\underline a_t, \overline
a_t)}{g(\underline a_t, \overline a_t)} \,dt \biggr],
\end{eqnarray*}
where, for $\underline a, \overline a \in(0,\infty)$,
\begin{eqnarray*}
f(\underline a, \overline a) &=& E \biggl[\int_0^{\tau^*}
\bigl(X^*_t\bigr)^2 \,dt \biggr],\qquad g(\underline a, \overline
a) = E\bigl[\tau^*\bigr]\quad \mbox {and}\\
u^\beta(\underline a, \overline a)
&=& E\bigl[\bigl|X^*_{\tau^*}\bigr|^\beta\bigr]<\infty,
\end{eqnarray*}
with $\tau^* = \inf\{t\geq0\dvtx X^*_t \notin(-\underline a, \overline a)
\}$, where $X^*$ is a strictly $\alpha$-stable process determined from
$X$ by a limiting procedure, and the processes $A$ and $\lambda$ are
determined from the semimartingale characteristics of $X$ and $Y$.

This allows us to determine the asymptotically optimal
intervals as
solutions to a simple optimization problem (Proposition \ref{optbar.prop}). In
particular, we show that in the case where the cost functional is
given by the expected number of discretization dates, the error
associated to our optimal strategy with the cost equal to $N$,
converges to zero as $N\to\infty$ at a faster rate than the error
obtained by readjusting at $N$ equally spaced dates.

{As applications of our method, we consider the
discretization of the hedging strategy for a European option in an
exponential L\'evy model (Proposition \ref{levy.prop}) and the
discretization of the Merton
portfolio strategy (Proposition~\ref{merton.prop}). In the option
hedging problem,} we obtain an
explicit representation for the optimal discretization dates,
which is similar to \eqref{fukrule}, but includes two ``tuning''
parameters: an index which determines the effect of transaction
costs (fixed, proportional, etc.) and the Blumenthal--Getoor index
measuring the activity of small jumps.

This paper is structured as follows. In Section~\ref{prelim}, we introduce our framework and in particular the notion
of asymptotic optimality based on the limiting
behavior of the error and cost functionals. The assumptions on the
processes $X$
and $Y$ and on the admissible discretization rules are also stated here.
Section~\ref{main.thm} contains the main results of this paper
which characterize the limiting behavior of the error and the cost
functionals, {and Sections~\ref{barrier.sec} to \ref{explevy.sec}
provide explicit examples of optimal
discretization strategies in various contexts.} Sections~\ref{proofthm} and
\ref{proofthm2} contain the proofs of the main results and Section~\ref{prooflm} gathers some technical lemmas needed in Section~\ref{proofthm2}.

%s2 #&#
\section{Framework}\label{prelim}
%We consider a probability sace $(\Omega,\mathcal F, \mathbb P)$ and a
%filtration $\mathbb F = (\mathcal F_t)_{t\geq0}$, to which are
%adapted a one-dimensional Brownian motion $W$ and a Poisson random
%measure $J$ on $[0,\infty)\times\mathbb R$ with intensity $dt \times
%$\int_\mathbb R(1\wedge x^2)\nu(dx)<\infty$. $\tilde J$ will denote the
%compensated version of $J$.

%For every pair of processes $(\overline a,\underline a)$, there
%is a family of discretization strategies parameterized by
%$\varepsilon>0$.

%pa2.subsection.subsubsection.1 #&#
\textit{Asymptotic comparison of discretization rules}.
We are \mbox{interested} in comparing different discretization rules, as
defined in the \hyperref[sec1]{Intro-}\break \hyperref[sec1]{duction}, for the
stochastic integral
\[
\int_0^T X_{t-} \,dY_t,
\]
where $X$ and $Y$ are semimartingales, in terms of their limiting
behavior when the number of discretization points tends to
infinity.

The performance of a given discretization rule is assessed by
the error functional $\mathcal E(\varepsilon) \dvtx (0,\infty)\to
[0,\infty)$ {(which measures the discretization error associated
to this rule)} and a cost functional $\mathcal C^{\beta}(\varepsilon) \dvtx (0,\infty)\to[0,\infty)$ {(which measures the corresponding
transaction cost)}, as defined in \eqref{errfunc} and \eqref{costfunc}.
% $\mathcal C^\varepsilon_T$: $(\overline a,\underline a, \varepsilon)
% \mapsto[0,\infty)$.
We assume that the cost functional is such that
\[
\lim_{\varepsilon\downarrow0} \mathcal C^{\beta}(\varepsilon) = + \infty.
\]
For $C>0$ sufficiently large, we define
\[
\varepsilon(C) = \inf\bigl\{\varepsilon>0 \dvtx \mathcal C^\beta(
\varepsilon)<C\bigr\}
\]
and $\overline{\mathcal E}(C):=\mathcal E(\varepsilon(C))$.
%
%de1 #&#
\begin{definition}\label{assdom}
We say
that the discretization rule $A$ asymptotically dominates the rule $B$ if
\[
\limsup_{C\to\infty} \frac{\overline{\mathcal E}^A(C)}{\overline
{\mathcal E}^B(C)}\leq 1.
\]
\end{definition}
To apply Definition \ref{assdom}, the following
{simple result} will be very useful.
%
%le1 #&#
\begin{lemma}\label{errorcost.lm}
Assume that for a given discretization rule, the cost and error
functionals are such that there exist
$a>0$ and $b>0$ with
%
%e12 #&#
%
\begin{equation}
\lim_{\varepsilon\downarrow0} \varepsilon^{-a}\mathcal E(\varepsilon)
= \hat{\mathcal E} \quad\mbox{and}\quad \lim_{\varepsilon\downarrow0} \varepsilon^{b}
\mathcal C^{\beta
}(\varepsilon) = \hat{\mathcal C} \label{errorcost}
\end{equation}
for some {positive} constants $\hat{\mathcal E}$ and $\hat{\mathcal
C}$. Then
\[
\overline{ \mathcal E}(C) \sim C^{-{a}/{b}} (\hat{\mathcal
C})^{{a}/{b}} \hat{\mathcal E} \qquad\mbox{as } C\to\infty.
\]
\end{lemma}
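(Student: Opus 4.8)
The plan is to invert the cost asymptotics to get the asymptotic behaviour of $\varepsilon(C)$ as $C\to\infty$, then substitute into the error asymptotics. First I would note that the hypothesis $\lim_{\varepsilon\downarrow 0}\varepsilon^b \mathcal C^\beta(\varepsilon) = \hat{\mathcal C}$ with $\hat{\mathcal C}>0$ forces $\mathcal C^\beta(\varepsilon)\sim \hat{\mathcal C}\,\varepsilon^{-b}$ as $\varepsilon\downarrow 0$, which in particular gives $\mathcal C^\beta(\varepsilon)\to\infty$, consistent with the standing assumption. The key step is then to show that the generalized inverse $\varepsilon(C) = \inf\{\varepsilon>0:\mathcal C^\beta(\varepsilon)<C\}$ satisfies $\varepsilon(C)\sim (\hat{\mathcal C}/C)^{1/b}$ as $C\to\infty$. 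This is a standard regular-variation / inverse-function argument: for any $\delta\in(0,1)$ there is $\varepsilon_\delta>0$ such that $(1-\delta)\hat{\mathcal C}\varepsilon^{-b}\le \mathcal C^\beta(\varepsilon)\le (1+\delta)\hat{\mathcal C}\varepsilon^{-b}$ for $0<\varepsilon\le\varepsilon_\delta$; comparing the level $C$ against these bounds pins $\varepsilon(C)$ between $((1-\delta)\hat{\mathcal C}/C)^{1/b}$ and $((1+\delta)\hat{\mathcal C}/C)^{1/b}$ for $C$ large, and letting $\delta\downarrow 0$ gives the claim.

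Once $\varepsilon(C)\sim (\hat{\mathcal C}/C)^{1/b}$ is established, I would plug this into the error asymptotics. From $\lim_{\varepsilon\downarrow 0}\varepsilon^{-a}\mathcal E(\varepsilon)=\hat{\mathcal E}$ and $\varepsilon(C)\to 0$ we get
$$
\overline{\mathcal E}(C) = \mathcal E(\varepsilon(C)) \sim \hat{\mathcal E}\,\varepsilon(C)^a \sim \hat{\mathcal E}\,\bigl(\hat{\mathcal C}/C\bigr)^{a/b} = C^{-a/b}\,(\hat{\mathcal C})^{a/b}\,\hat{\mathcal E},
$$
which is exactly the assertion. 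The composition of asymptotic equivalences is legitimate here precisely because $\mathcal E(\varepsilon)/\varepsilon^a$ has a finite nonzero limit (so it is bounded above and below near $0$), and $\varepsilon(C)$ stays in the regime $\varepsilon\downarrow 0$.

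The only genuinely delicate point is the inversion step: one must be careful that $\varepsilon(C)$, defined as an infimum, indeed lands in the interval where the two-sided bound on $\mathcal C^\beta$ holds, and that no pathology of $\mathcal C^\beta$ away from $0$ interferes — but since we only need the behaviour as $C\to\infty$, and $\mathcal C^\beta(\varepsilon)<C$ for all sufficiently small $\varepsilon$ once $C$ is large, the infimum is attained in the small-$\varepsilon$ region and the estimate goes through. Everything else is routine manipulation of $\sim$.
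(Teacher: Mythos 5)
Your argument is correct: the inversion of the cost asymptotics to get $\varepsilon(C)\sim(\hat{\mathcal C}/C)^{1/b}$, followed by substitution into the error asymptotics, is exactly the standard argument this lemma calls for, and you handle the one delicate point (that the infimum defining $\varepsilon(C)$ is controlled by the two-sided bound valid only near $0$) correctly. The paper itself omits the proof, labelling the lemma a ``simple result,'' so there is nothing to contrast with; your write-up supplies the intended elementary argument.
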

We shall consider discretizations based on the
hitting times of the process~$X$. Recall that such a
discretization\vadjust{\goodbreak}
rule is
characterized by a
pair of positive \mbox{$\mathbb F$-adapted} c\`adl\`ag
processes $(\overline a_t)_{t\geq0}$ and $(\underline a_t)_{t\geq
0}$, and the discretization dates are then defined by \eqref
{discrule}.

%re1 #&#
\begin{remark}
Consider the discretization rules $A = (\underline a,\overline a)$ and
$B = (k\underline a, k\overline a)$ with $k>0$. These two strategies
satisfy $\overline{\mathcal E}^A (C) = \overline{\mathcal E}^B (C)$
for all $C>0$. Therefore, the optimal strategies will be determined up
to a multiplicative constant.
\end{remark}

%pa2.subsection.subsubsection.2 #&#
\textit{Assumptions on the processes $X$ and $Y$}.
Our first main result describing the behavior of the
error functional will be obtained under the assumptions {$(\mathit{HY})$,
$(\mathit{HX})$ and $(\mathit{HX}^1_{\mathrm{loc}})$} stated below.
\begin{longlist}[($\mathit{HX}^\rho_{\mathrm{loc}}$)]
\item[$(\mathit{HY})$] We assume that the process $Y$ is an $\mathbb F$-local
martingale, whose predictable
quadratic variation satisfies $\langle Y \rangle_t = \int_0^t A_s \,ds$,
where the process $(A_t)$ is c\`adl\`ag and locally bounded.

%In the nonmartingale case, this term does not necessarily dominate
%the others (consider e.g., the case when $X_t = t$).

\item[$(\mathit{HX})$] The process $X$ is a semimartingale
defined via the stochastic representation
%
%e13 #&#
%
\begin{eqnarray}\label{defX.eq}
X_t&=& X_0 + \int_0^t
b_s \,ds + \int_0^t \int
_{|z|\leq1} z(M-\mu) (ds\times dz)
\nonumber
\\[-8pt]
\\[-8pt]
\nonumber
&&{} + \int_0^t
\int_{|z|>1} z M(ds\times dz),
\end{eqnarray}
where $M$ is the jump measure of $X$, and $\mu$ is its predictable
compensator, absolutely continuous with respect to the Lebesgue
measure in time, $\mu(dt\times dz) = dt \times\mu_t(dz)$, where the
kernel $\mu_t(dz)$ is such that for some $\alpha\in(1,2)$ there exist
positive c\`adl\`ag
processes $(\lambda_t)$ and $(\widehat K_t)$ and constants $c_+\geq0$
and $c_-\geq0$ with
$c_++c_->0$ and, almost surely for all $t\in[0,T]$,
%
%e14 #&#
%e15 #&#
%
\begin{eqnarray}\qquad
x^\alpha\mu_t((x,\infty)) &\leq&\widehat
K_t \quad\mbox{and}\quad x^\alpha\mu_t((-\infty,-x)
) \leq\widehat K_t \qquad\mbox{for all $x>0$;}\label{hxbnd}
\\
\label{hxalpha}
x^\alpha\mu_t((x,\infty)) &\to& c_+
\lambda_t \quad\mbox{and}\quad x^\alpha\mu_t((-
\infty,-x)) \to c_- \lambda_t
\nonumber
\\[-8pt]
\\[-8pt]
\eqntext{\mbox{when } x\to0.}
\end{eqnarray}

\item[($\mathit{HX}^\rho_{\mathrm{loc}}$)] There exists a L\'evy measure
$\nu(dx)$ such that, almost surely, for all $t$, the kernel
$\mu_t(dz)$ is absolutely continuous with respect to $\lambda_t
\nu(dz)$,
%
%e16 #&#
%
\begin{equation}
\mu_t(dz) = K_t(z) \lambda_t
\nu(dz)\label{ascomp}
\end{equation}
for a random function ${K_t(z)>0}$.
Moreover, there exists an increasing sequence of stopping times $(\tau
_n)$ with
$\tau_n \to T$ such that for every $n$,
%
%e17 #&#
%
\begin{equation}
% dt<K_n,
\int_0^{\tau_n} \int
_{\mathbb R} \bigl|\sqrt{K_t(z)}-1\bigr|^{2\rho} \nu(dz)
\,dt<{ C_n,}\label{integrK}
\end{equation}
$\frac{1}{ C_n} \leq\lambda_t \leq
{ C_n}$, { $\widehat K_t \leq C_n$}
and $|b_t|\leq{ C_n}$ for $0\leq t \leq\tau_n$
and some constant ${ C_n}>0$.
\end{longlist}

%Eventually, for our second theorem, we will need the following slight
%extension of ($\mathit{HX}^\rho_{\mathrm{loc}}$):

%$$\int_{|z|>x}|z|\nu(dz)\leq Cx^{1-\alpha}.$$

%re2 #&#
\begin{remark}[(Concerning the assumptions on the process $Y$)]
The assumption that $Y$ is a local martingale
greatly simplifies the treatment of quadratic hedging problems in
various settings because it allows us to reduce the problem of
minimizing the global quadratic risk to myopic local risk
minimization. In particular, under this assumption, the
error functional \eqref{errfunc} becomes
\[
\mathcal E(\varepsilon) = E \biggl[\int_0^T
(X_t - X_{\eta^{\varepsilon
}(t)})^2 A_t \,dt \biggr].
\]
While it may be unrealistic to assume that the stock
price process is a local martingale \emph{for computing the hedging
strategy}, in the present study we have a different objective. We are
looking for the asymptotically optimal rule to discretize a \emph{given}
strategy, that is, the rule which minimizes, asymptotically for
large number of discretization dates, the principal term of the
discretization error. In the case of equally spaced discretization
dates, it is known (see \cite{tankov.voltchkova.09} for a proof in the
context of It\^o semimartingales with jumps) that this principal term
does not depend on the drift part of the processes $X$ and
$Y$. We conjecture that the same kind of behavior holds in the context
of random rebalancing dates, which means that the drift terms do not
need to be taken into account when computing asymptotically optimal
discretization rules. Our methodology allows us to determine
asymptotically optimal discretization for a given process $X$, which
may correspond, for example, to a quadratic hedging strategy computed
in the nonmartingale setting.
\end{remark}

%re3 #&#
\begin{remark}[(Concerning the assumptions on the process $X$)]\label{purejumprem}
\begin{itemize}[$-$]
\item[$-$] In this paper, we focus on semimartingales
for which the local martingale part is purely discontinuous, with
the aim of determining the effect of small jumps on the
convergence rate of the discretization error. Therefore, we do not
include a continuous local martingale part in the dynamics of $X$.
Indeed, it would asymptotically dominate
the purely discontinuous part as shown in Proposition \ref{toy} in the \hyperref[app]{Appendix}.
The dynamics of $Y$ can, in principle, include such a continuous
local martingale part, however in the usual financial models, when
$X$ has no continuous local martingale part, this is also the case
for $Y$. Note that from the practical viewpoint, many exponential L\'
evy models popular among academics and practitioners (Variance Gamma,
CGMY, Normal inverse Gaussian etc.) do not include a continuous
diffusion part.

\item[$-$]Assumption $(\mathit{HX})$ defines the structure of the
integrand (hedging strategy) $X$, by saying that the small jumps of
$X$ ressemble those of an $\alpha$-stable process, modulated by a
random intensity process $(\lambda_t)$. This assumption introduces the
fundamental parameters which will appear in our limiting results: the
coefficients $\alpha$, $c_+$ and $c_-$ and the intensity process
$\lambda$. These parameters are determined uniquely up to multiplying
$\lambda$ by a positive constant and dividing $c_+$ and $c_-$ by the
same constant. Note also that these parameters can be estimated from
market data; see \cite{woerner,aj4,fig1,fig2}.

\item[$-$] The parameter $\alpha$ measures the activity of
small jumps of the process $X$. In the case where $X$ is a L\'evy
process, the parameter $\alpha$ coincides with the Blumenthal--Getoor
index of $X$; see \cite{bg61}.

\item[$-$]The assumption $1<\alpha<2$ implies that
$X$ has infinite variation and ensures that the local behavior of
the process is determined by the jumps rather than by the drift
part; see \cite{rosenbaum.tankov.10}. Note that in a recent
statistical study on liquid assets~\cite{aj4}, the jump activity
index defined similarly to our parameter $\alpha$ was estimated
between~$1.4$ and $1.7$. However, this assumption does exclude some
interesting models and other statistical studies find that
this parameter can be smaller than one for certain asset classes
\cite{cont.mancini.09,belomestny.10}.

\item[$-$] The assumption $(\mathit{HX}^\rho_{\mathrm{loc}})$ is a technical
integrability condition. In the sequel, we shall always impose
$(\mathit{HX}^1_{\mathrm{loc}})$ and sometimes also $(\mathit{HX}^\rho_{\mathrm{loc}})$ with
$\rho>1$. The representation \eqref{ascomp} of the compensator $\mu$
of the jump measure of $X$ implies that the jump part of $X$ is
locally equivalent to a time-changed L\'evy process. Indeed,
time-changing the process with a continuous increasing process
$\Lambda_t = \int_0^t \lambda_s \,ds$ has the effect of multiplying the
compensator by $\lambda_t$, and making a change of probability measure
with density given by \eqref{girsdens.eq} has the effect of dividing
the compensator by $K_t(z)$. The objects $\nu(dz)$ and $K_t(z)$ in this
representation are not unique, but they do not
appear in our limiting results.
In particular, it is easy to show that the L\'evy measure
$\nu$ necessarily satisfies a stable-like condition similar to \eqref{hxalpha},
%
%e18 #&#
%
\begin{equation}
x^\alpha\nu((x,\infty)) \to c_+ \quad\mbox{and}\quad x^\alpha
\nu((-\infty,-x)) \to c_- \qquad\mbox{when } x\to0.\label{halpha}
\end{equation}
Indeed, there exists a constant $c>0$ such that
\[
c(\sqrt{f} -1)^2 \geq (f-1)^2 \mathbf1_{|f-1|\leq{1}/{2}} +
|f-1| \mathbf1_{|f-1|>{1}/{2}}\qquad \mbox{for all $f>0$}.
\]
From this simple inequality, and denoting $I_t = \int_{\mathbb R}
(\sqrt{K_t(z)}-1)^2 \nu(dz)$, one can easily deduce, using the
Cauchy--Schwarz inequality that for another
constant~$C$,
\[
\biggl\llvert \int_x^\infty\nu(dz) - \int
_x^\infty K_t(z)\nu(dz) \biggr\rrvert
\leq C I_t + C \biggl\{\int_x^\infty
\nu(dz) \biggr\}^{{1}/ {2}} I_t^{{1}/{2}},
\]
and also that
\[
\biggl\llvert \biggl(\int_x^\infty\nu(dz)
\biggr)^{{1}/{2}} - \biggl(\int_x^\infty
K_t(z)\nu(dz) \biggr)^{{1}/{2}} \biggr\rrvert \leq C
I_t
\]
for yet another constant $C$. By \eqref{integrK}, under
$(\mathit{HX}^1_{\mathrm{loc}})$, $I_t<\infty$ for almost all $t$. For any such
$t$, we
can multiply the above inequality with $x^{\alpha/2}$ and take the
limit $x\to0$; we then get
\[
\lim_{x\to0} x^{\alpha} \int_x^\infty
\nu(dz) = \lim_{x\to0} x^{\alpha} \int_x^\infty
K_t(z)\nu(dz),
\]
but the latter limit is equal to $c_+$ by assumption \eqref{hxalpha}.
Moreover, it is always possible with no loss of generality to choose
$\nu$
so that it also satisfies
%
%e19 #&#
%
\begin{equation}
x^\alpha\nu((x,\infty)) + x^\alpha\nu((-
\infty,-x)) \leq C\label{hbnd}
\end{equation}
for some constant $C<\infty$ and all $x>0$. Indeed, by property
\eqref{halpha}, it is enough to show this for all $x\geq\varepsilon$
with some $\varepsilon>0$. But for this, it is enough to take
\[
K_t(z) = \frac{\hat K_t}{\lambda_t}\qquad \mbox{for } |z|\geq \varepsilon
\]
and use \eqref{hxbnd}. Such a choice clearly does not violate
condition \eqref{integrK}.
In the sequel we shall assume that $\nu$
has been chosen in such a way.
\end{itemize}
\end{remark}

%
%ex1 #&#
\begin{example}
In applications, the process $X$ is often defined as solution to a
stochastic differential equation rather than through its
semimartingale characteristics. We now give an example of an SDE which
satisfies our assumptions. Let $X$ be
the solution of an SDE driven by a Poisson random measure
%
%e20 #&#
%
\begin{eqnarray}\label{sde.eq}
X_t &=& X_0 + \int_0^t
\bar b_s \,ds + \int_0^t \int
_{|z|\leq1} \gamma_s(z) \tilde N(ds \times dz)
\nonumber
\\[-8pt]
\\[-8pt]
\nonumber
&&{} + \int
_0^t \int_{|z|> 1}
\gamma_s(z) N(ds \times dz),
\end{eqnarray}
where $N$ is a Poisson random measure with intensity measure $dt
\times\bar\nu(dz)$, $\tilde N$ is the corresponding compensated measure,
and $\gamma\dvtx  [0,T] \times\Omega\times\mathbb R \to\mathbb R$ is
a predictable random function.

%pr1 #&#
\begin{proposition}\label{sde.prop}
Assume that $\bar\nu$ is a L\'evy
measure which has a compact support $U$ such that $0\in  \operatorname{int}
U$ and admits a density also
denoted by $\bar\nu(x)$, which is continuous outside any neighborhood of
zero and is such that
%
%e21 #&#
%
\begin{eqnarray}\label{stable.strong}
x^{\alpha+1} \bar\nu(x) = \alpha c_+ + O(x) \quad\mbox{and}\quad x^{\alpha+1}\bar
\nu(-x) = \alpha c_- + O(x)
\nonumber
\\[-8pt]
\\[-8pt]
\eqntext{\mbox{when } x\downarrow0}
\end{eqnarray}
for some $\alpha\in(1,2)$ and constants $c_+> 0$ and
$c_-> 0$.

Suppose furthermore that for all $\omega\in\Omega$ and $t\in
[0,T]$, $\gamma_t(z)$ is twice differentiable with respect to
$z$, $\gamma'_t(z)>0$ for all $z\in U$, $\gamma_t(0) = 0$, and
there exists an increasing sequence of stopping\vadjust{\goodbreak} times $(\tau_n)$
with $\tau_n \to T$ and a sequence of positive constants $(C_n)$ with
$C_n<\infty$ for all $n$, such that for every $n$, almost surely,
%
%e22 #&#
%
\begin{eqnarray}\label{extracond}
|b_t|\leq C_n, \qquad\frac{1}{C_n} \leq
\gamma'_t(z) \leq C_n \quad\mbox{and}\quad\bigl |
\gamma^{\prime\prime}_t(z)\bigr|<C_n
\nonumber
\\[-8pt]
\\[-8pt]
\eqntext{\mbox{for all } 0\leq t
\leq\tau_n, z\in U.} %&\nu(\gamma^{-1}_t(z)) \leq\frac{C_n}{ |z|^{1+\delta}}, |z|>
\end{eqnarray}
Then the process $X$ satisfies the assumption $(\mathit{HX})$ with $\lambda_t =
\gamma'_t(0)^\alpha$ and the assumption $(\mathit{HX}^\rho_{\mathrm{loc}})$ for
all $\rho\geq1$.\vspace*{-1pt}
\end{proposition}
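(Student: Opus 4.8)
The plan is to compute the semimartingale characteristics of $X$ explicitly from the SDE \eqref{sde.eq} and then check $(HX)$ and $(HX^\rho_{\mathrm{loc}})$ by hand. Since $N$ has predictable compensator $dt\,\bar\nu(dz)$ and $\gamma$ is predictable, the jump measure $M$ of $X$ is the image of $N$ under $(t,z)\mapsto(t,\gamma_t(z))$, so $X$ is an It\^o semimartingale whose jump compensator is $dt\,\mu_t(dz)$, with $\mu_t$ the image of $\bar\nu$ under $\gamma_t$. First I would recast \eqref{sde.eq} in the canonical form \eqref{defX.eq}: the only discrepancy is the truncation, $\{|z|\le1\}$ in the driving parametrization versus $\{|w|\le1\}$ in terms of the jumps $w=\gamma_t(z)$ of $X$. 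Because $\gamma_t$ is continuous with $\gamma_t(0)=0$ and, on $[0,\tau_n]$, Lipschitz with constant $C_n$, the set $\{|z|\le1,\,|\gamma_t(z)|>1\}\cup\{|z|>1,\,|\gamma_t(z)|\le1\}$ stays bounded away from the origin, hence has finite $\bar\nu$-mass; the associated correction is then a finite variation term that I absorb into the drift, giving \eqref{defX.eq} with $b_t=\bar b_t+(\text{correction})$ and $|b_t|\le C_n'$ on $[0,\tau_n]$ by \eqref{extracond}.

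Next I would analyse the tails of $\mu_t$. Since $\gamma_t$ restricted to $U$ is a strictly increasing bijection onto an interval containing $0$ in its interior, a change of variables gives $\mu_t((x,\infty))=\bar\nu((\gamma_t^{-1}(x),\infty))$ for small $x>0$ and a Lebesgue density $\mu_t(dw)=\frac{\bar\nu(\gamma_t^{-1}(w))}{\gamma_t'(\gamma_t^{-1}(w))}\,dw$ on the range of $\gamma_t$. A second order Taylor expansion at $0$, using $|\gamma_t''|\le C_n$ on $U$, yields $\gamma_t^{-1}(z)=z/\gamma_t'(0)+O(z^2)$, uniformly in $\omega$ and $t\le\tau_n$. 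Combined with the tail asymptotics contained in \eqref{stable.strong}, namely $x^\alpha\bar\nu((x,\infty))\to c_+$ as $x\downarrow0$ and $x^\alpha\bar\nu((x,\infty))\le C_{\bar\nu}$ for all $x>0$ (the bound coming from the compact support and the continuity away from $0$), this gives $x^\alpha\mu_t((x,\infty))\to c_+\gamma_t'(0)^\alpha$ and $x^\alpha\mu_t((x,\infty))\le C_{\bar\nu}\,C_n^\alpha$ on $[0,\tau_n]$, and symmetrically for the left tail with $c_-$. Hence $(HX)$ holds with $\lambda_t=\gamma_t'(0)^\alpha$, with $\widehat K_t$ the (locally bounded) supremum of the two normalised tails, and with the constants $c_\pm$ of \eqref{stable.strong}; positivity of $c_\pm$, and c\`adl\`ag regularity of $\lambda$ and $\widehat K$, are inherited from the hypotheses and the regularity of $\gamma$ in $t$.

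For $(HX^\rho_{\mathrm{loc}})$ I would take $\nu=\bar\nu$, so that $\mu_t(dz)=K_t(z)\lambda_t\nu(dz)$ with $K_t(z)=\frac{\bar\nu(\gamma_t^{-1}(z))}{\gamma_t'(0)^\alpha\,\gamma_t'(\gamma_t^{-1}(z))\,\bar\nu(z)}$ on the range of $\gamma_t$ (extended by a fixed positive value elsewhere, which is harmless since that region is bounded away from $0$). The bounds $1/C_n\le\lambda_t\le C_n$, $\widehat K_t\le C_n$ and $|b_t|\le C_n$ on $[0,\tau_n]$ follow from the previous two steps and \eqref{extracond}. The substance is \eqref{integrK}: I split $\int_{\mathbb R}$ at $|z|=\delta_n$ with $\delta_n>0$ small. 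For $|z|>\delta_n$ the integrand is bounded on $[0,\tau_n]$ and $\nu(\{|z|>\delta_n\})<\infty$, so that part is controlled. For $|z|\le\delta_n$, inserting $\gamma_t^{-1}(z)=z/\gamma_t'(0)+O(z^2)$ and the strong asymptotics $\bar\nu(x)=\alpha c_+x^{-\alpha-1}(1+O(x))$ from \eqref{stable.strong} into the formula for $K_t$ produces, after the cancellation $\gamma_t'(0)^{\alpha+1}/(\gamma_t'(0)^\alpha\cdot\gamma_t'(0))=1$, the estimate $|K_t(z)-1|\le C_n''|z|$ uniformly in $\omega$ and $t\le\tau_n$; hence $|\sqrt{K_t(z)}-1|\le C_n''|z|$ and $\int_{|z|\le\delta_n}|\sqrt{K_t(z)}-1|^{2\rho}\nu(dz)\le(C_n'')^{2\rho}\int_{|z|\le\delta_n}|z|^{2\rho}\bar\nu(dz)<\infty$, the last integral being finite because $2\rho>\alpha$ for every $\rho\ge1$ (recall $\alpha<2$). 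Integrating in $t$ over $[0,\tau_n]$ gives \eqref{integrK}, simultaneously for all $\rho\ge1$.

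The main obstacle is the uniform control, over $\omega$ and $t\le\tau_n$, of $\gamma_t^{-1}$ near the origin and of the ratio $\bar\nu(\gamma_t^{-1}(z))/\bar\nu(z)$: one has to propagate both the $O(x)$ remainders in \eqref{stable.strong} and the bound on $\gamma_t''$ through the change of variables in order to obtain \emph{simultaneously} the exact tail limit needed for $(HX)$ and the quantitative bound $K_t(z)=1+O(|z|)$ needed for $(HX^\rho_{\mathrm{loc}})$ --- this is precisely why the proposition imposes the strong stable condition \eqref{stable.strong} rather than the weaker limit appearing in $(HX)$. The remaining points --- the reduction to \eqref{defX.eq}, the adaptedness and c\`adl\`ag property of $b$, $\lambda$ and $\widehat K$, and the handling of $\nu$ and $K_t$ away from the origin --- are routine bookkeeping.
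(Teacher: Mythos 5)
Your argument follows the paper's proof almost step for step: pushforward of $\bar\nu$ under $\gamma_t$ to identify $\mu_t$, Taylor expansion of $\gamma_t^{-1}$ at the origin to get the tail limits and the bound $\widehat K_t\leq C\max_U(\gamma'_t)^\alpha$ for $(HX)$, then the density ratio $K_t$, a splitting of the integral in \eqref{integrK} at a small threshold, and the quantitative estimate $|K_t(z)-1|\leq c_n|z|$ near the origin combined with $2\rho>\alpha$. The extra care you take with the truncation mismatch between \eqref{sde.eq} and \eqref{defX.eq} is a point the paper passes over silently, and it is handled correctly.

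There is, however, one step that fails as written: the choice $\nu=\bar\nu$ in the verification of $(HX^\rho_{\mathrm{loc}})$. The kernel $\mu_t$ is supported on $\gamma_t(U)$, while $\bar\nu$ is supported on $U$, and nothing in the hypotheses forces $\gamma_t(U)=U$ (take $U=[-1,1]$ and $\gamma_t(z)=2z$: then $\mu_t$ charges $[-2,2]\setminus[-1,1]$, where $\bar\nu$ vanishes, so $\mu_t\not\ll\bar\nu$; conversely if $\gamma_t(U)\subsetneq U$ the density is forced to vanish on a set of positive $\nu$-measure, contradicting $K_t(z)>0$). Your parenthetical fix --- extending $K_t$ by a positive constant outside the range of $\gamma_t$ --- does not help, because the product $K_t(z)\lambda_t\bar\nu(z)$ is still zero wherever $\bar\nu$ vanishes and still cannot reproduce $\mu_t$ there. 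The paper avoids this by taking for $\nu$ a Lévy density that is strictly positive on all of $\mathbb R$, continuous outside any neighbourhood of zero and satisfying \eqref{stable.strong}; since $\nu$ and $\bar\nu$ then agree to first order near the origin, your small-$|z|$ estimate $|K_t(z)-1|\leq c''_n|z|$ goes through unchanged, and on $\{|z|>\delta_n\}\cap\gamma_t(U)$ the ratio is bounded on $[0,\tau_n]$ because the denominator $\nu$ is bounded below on the relevant compact set (which is exactly where strict positivity of $\nu$ is needed, and where $\bar\nu$ would again let you down). With that one substitution the proof is complete and coincides with the paper's.
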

The proof of this result is given in Appendix \ref{proof1}.\vspace*{-1pt}
\end{example}
%

%pa2.subsection.subsubsection.3 #&#
\textit{Assumptions on the discretization rules}.
Our first main result (asymptotics of the error functional) requires
the following assumptions on the discretization rule~$(\underline a,
\overline a)$:
\begin{longlist}[$(\mathit{HA}_{\mathrm{loc}})$]
\item[$(\mathit{HA})$]{The integrability condition}
\[
E \biggl[\sup_{0\leq s \leq T}\max(\underline a_s,
\overline a_s)^2\int_0^T
A_t \,dt \biggr] <\infty.
\]
\item[$(\mathit{HA}_{\mathrm{loc}})$]
There exists an increasing sequence of stopping times $(\tau_n)$ with
$\tau_n \to T$ such that for every $n$,
$\frac{1}{ C_n} \leq\underline a_t,\overline a_t \leq
{ C_n}$ for $0\leq t \leq\tau_n$
and some constant ${ C_n}>0$.

To obtain our second main result concerning the behavior
of the cost functional, we shall need the following additional
technical assumptions:
\item[$(\mathit{HA}_2)$] For some $\delta\in(0,1)$ with $\beta(1+\delta
)<\alpha$,
\begin{eqnarray*}
&&E \biggl[\sup_{0\leq s\leq T}\bigl(\max\bigl\{\underline
a_s^{\beta
-1},\overline a_s^{\beta-1}\bigr
\}^{1+\delta}+\max\bigl\{\underline a_s^{(1+\delta)\beta
-1},\overline
a_s^{(1+\delta)\beta-1}\bigr\}\bigr)\int_0^T
|b_s|^{1+\delta}\,ds \biggr]
\\
&&\qquad{} + E \biggl[\sup_{0\leq s
\leq T}\max\{\underline a_s,
\overline a_s\}^{(\beta\vee(2-\alpha))(1+\delta)}\min\{\underline a_s,
\overline a_s\}^{((\beta-2)\wedge(-\alpha))(1+\delta)}\\
&&\hspace*{249pt}{}\times \int_0^T
\widehat {K}_s^{1+\delta}\,ds \biggr] <\infty.%\label{intcond}
\end{eqnarray*}
\item[$(\mathit{HA}_2')$]
For some $\delta\in(0,1)$,
\begin{eqnarray*}
&&E \biggl[\sup_{0\leq s\leq T}\min(\underline
a_s,\overline a_s)^{-\alpha(1+\delta)} \int
_0^T \widehat{K}_t^{1+\delta}
\,dt
\\
&&\quad{}+\sup_{0\leq s\leq T}\min(\underline a_s,\overline
a_s)^{-1-\delta} \int_0^T
|b_t|^{1+\delta} \,dt \biggr] <\infty.
\end{eqnarray*}
\end{longlist}

%re4 #&#
\begin{remark}
Condition $(\mathit{HA}'_2)$ replaces condition $(\mathit{HA}_2)$ in the case
$\beta=0$.
For given $\beta$ and given processes $X$ and $Y$, we shall call a
discretization rule $(\underline a, \overline a)$ satisfying
assumptions $(\mathit{HA})$, $(\mathit{HA}_{\mathrm{loc}})$ and $(\mathit{HA}_2)$ (if $\beta>0$) or
assumptions $(\mathit{HA})$, $(\mathit{HA}_{\mathrm{loc}})$ and $(\mathit{HA}'_2)$ (if $\beta=0$) \emph{an
admissible discretization rule}.\looseness=-1\vadjust{\goodbreak}
\end{remark}

%s3 #&#
\section{Main results}
In this section, we first characterize the asymptotic behavior of the
error and cost functionals for small $\varepsilon$. From these results
we then derive the asymptotically optimal discretization strategies
using Lemma
\ref{errorcost.lm}.
%s3.1 #&#
\subsection{Asymptotic behavior of the error and cost
functionals}\label{main.thm}
%
%th1 #&#
\begin{theorem}\label{err.thm}
$\!\!\!\!$Under assumptions {$(\mathit{HY})$, $(\mathit{HX})$, $(\mathit{HX}^1_{\mathrm{loc}})$,
$(\mathit{HA})$ and $(\mathit{HA}_{\mathrm{loc}})$,}
%
%e23 #&#
%
\begin{equation}
\lim_{\varepsilon\to0} \varepsilon^{-2} \mathcal E(\varepsilon)
= E \biggl[\int_0^T {A_t}
\frac{f(\underline a_t, \overline
a_t)}{g(\underline a_t, \overline a_t)} \,dt \biggr],\label{err.eq}
\end{equation}
where, for $\underline a, \overline a \in(0,\infty)$,
\[
f(\underline a, \overline a) = E \biggl[\int_0^{\tau^*}
\bigl(X^*_t\bigr)^2 \,dt \biggr],\qquad g(\underline a, \overline
a) = E\bigl[\tau^*\bigr]
\]
with $\tau^* = \inf\{t\geq0\dvtx X^*_t \notin(-\underline a, \overline a)
\}$, where $X^*$ is a strictly $\alpha$-stable process with L\'evy
density
\[
\nu^*(x) = \frac{c_+ 1_{x>0} + c_- 1_{x<0}}{|x|^{1+\alpha}},\qquad x\neq0,
\]
and the constants $c_-$ and $c_+$ are defined in assumption
$(\mathit{HX})$ [equation \eqref{hxalpha}].
\end{theorem}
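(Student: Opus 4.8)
The plan is to localize, to split the error functional into the contributions of the successive excursions of $X$ between consecutive discretization times, to identify the asymptotics of one such contribution by comparison with the strictly $\alpha$-stable process $X^*$, and finally to average these contributions into the integral in \eqref{err.eq}. Since $Y$ is a local martingale with $\langle Y\rangle_t=\int_0^t A_s\,ds$ by $(HY)$, the error functional is $\mathcal E(\varepsilon)=E\big[\int_0^T (X_t-X_{\eta^\varepsilon(t)})^2 A_t\,dt\big]$. Using the localizing sequences $(\tau_n)$ from $(HX^1_{\mathrm{loc}})$ and $(HA_{\mathrm{loc}})$ one first reduces to the case where $\lambda$, $\widehat K$, $|b|$, $\underline a$ and $\overline a$ are bounded above and away from $0$ on $[0,\tau_n]$: the part of the integral over $(\tau_n,T]$ is at most $\varepsilon^2\sup_{s\le T}\max(\underline a_s,\overline a_s)^2\int_0^T A_t\,dt$, which is integrable by $(HA)$ and hence negligible uniformly in $\varepsilon$, so it suffices to prove \eqref{err.eq} with $T$ replaced by $\tau_n$ and let $n\to\infty$.

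\textbf{One excursion.} Write $\mathcal E(\varepsilon)=E\big[\sum_{i\ge0}G^\varepsilon_i\big]$ with $G^\varepsilon_i=\int_{T^\varepsilon_i\wedge T}^{T^\varepsilon_{i+1}\wedge T}(X_t-X_{T^\varepsilon_i})^2 A_t\,dt$, and condition on $\mathcal F_{T^\varepsilon_i}$. By $(HX^1_{\mathrm{loc}})$, the jump part of $X$ becomes, after the deterministic time change $\Lambda_t=\int_0^t\lambda_s\,ds$ and a locally equivalent change of measure erasing the factor $K_t(z)$, a L\'evy process whose L\'evy measure $\nu$ obeys the stable-type relation \eqref{halpha}; its drift is asymptotically negligible because $\alpha>1$, and its jumps of size bounded away from $0$ are negligible on a time window of length $O(\varepsilon^\alpha)$. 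Consequently, by the classical functional convergence of small-jump processes to stable processes together with the scaling identity $X^*_{ct}\overset{d}{=}c^{1/\alpha}X^*_t$, the rescaled process $u\mapsto\varepsilon^{-1}\big(X_{T^\varepsilon_i+\varepsilon^\alpha u/\lambda_{T^\varepsilon_i}}-X_{T^\varepsilon_i}\big)$ converges in law, in the Skorokhod sense and conditionally on $\mathcal F_{T^\varepsilon_i}$, to $X^*$. Applying the continuous mapping theorem to the exit time of the interval $(-\underline a_{T^\varepsilon_i},\overline a_{T^\varepsilon_i})$ and to the functional $\int_0^{\cdot}(X^*_s)^2\,ds$ — which are a.s.\ continuous at the paths of $X^*$ under the non-degeneracy $c_++c_->0$ — and undoing the rescaling, one obtains as $\varepsilon\to0$
$$
\varepsilon^{-2}E\big[G^\varepsilon_i\mid\mathcal F_{T^\varepsilon_i}\big]\ \approx\ \varepsilon^{\alpha}\,\frac{A_{T^\varepsilon_i}}{\lambda_{T^\varepsilon_i}}\,f\big(\underline a_{T^\varepsilon_i},\overline a_{T^\varepsilon_i}\big),\qquad E\big[T^\varepsilon_{i+1}-T^\varepsilon_i\mid\mathcal F_{T^\varepsilon_i}\big]\ \approx\ \varepsilon^{\alpha}\,\frac{g\big(\underline a_{T^\varepsilon_i},\overline a_{T^\varepsilon_i}\big)}{\lambda_{T^\varepsilon_i}},
$$
with remainders that vanish after summation over $i$.

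\textbf{Averaging.} Taking the quotient of these two relations cancels the factor $\varepsilon^\alpha/\lambda_{T^\varepsilon_i}$, whence $\varepsilon^{-2}E[G^\varepsilon_i\mid\mathcal F_{T^\varepsilon_i}]\approx E[T^\varepsilon_{i+1}-T^\varepsilon_i\mid\mathcal F_{T^\varepsilon_i}]\cdot A_{T^\varepsilon_i}f(\underline a_{T^\varepsilon_i},\overline a_{T^\varepsilon_i})/g(\underline a_{T^\varepsilon_i},\overline a_{T^\varepsilon_i})$, and so
$$
\varepsilon^{-2}\mathcal E(\varepsilon)\ \approx\ E\Big[\sum_{i\ge0}\big(T^\varepsilon_{i+1}\wedge T-T^\varepsilon_i\wedge T\big)\,\frac{A_{T^\varepsilon_i}\,f(\underline a_{T^\varepsilon_i},\overline a_{T^\varepsilon_i})}{g(\underline a_{T^\varepsilon_i},\overline a_{T^\varepsilon_i})}\Big].
$$
Because $\sup_i|T^\varepsilon_{i+1}-T^\varepsilon_i|\to0$ and $t\mapsto A_t f(\underline a_t,\overline a_t)/g(\underline a_t,\overline a_t)$ is c\`adl\`ag and bounded on $[0,\tau_n]$ — the boundedness of $f/g$ coming from the bounds on $\underline a,\overline a$ together with moment bounds on $\tau^*$ and on $\sup_{s\le\tau^*}|X^*_s|$, valid since $\alpha>1$ and $c_++c_->0$ — the bracketed expression is a Riemann sum converging a.s.\ to $\int_0^{\tau_n}A_t f(\underline a_t,\overline a_t)/g(\underline a_t,\overline a_t)\,dt$; dominated convergence (with domination furnished by $(HA)$) then gives \eqref{err.eq} with $\tau_n$ in place of $T$, and letting $n\to\infty$ concludes.

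\textbf{Main obstacle.} The delicate point is the ``one excursion'' step: passing from weak convergence of the rescaled paths to convergence of the \emph{second moment} $\varepsilon^{-\alpha-2}E[G^\varepsilon_i\mid\mathcal F_{T^\varepsilon_i}]$ and of the \emph{expected exit time} $\varepsilon^{-\alpha}E[T^\varepsilon_{i+1}-T^\varepsilon_i\mid\mathcal F_{T^\varepsilon_i}]$ demands uniform integrability of the rescaled exit time and of the rescaled running maximum before exit, uniformly over the random adapted data $(\underline a_{T^\varepsilon_i},\overline a_{T^\varepsilon_i},\lambda_{T^\varepsilon_i},A_{T^\varepsilon_i})$, over $i$ and over $\varepsilon$. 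These uniform bounds must be drawn from the stable-type \emph{upper} estimate \eqref{hxbnd} on $\mu_t$ and the integrability condition \eqref{integrK} in $(HX^1_{\mathrm{loc}})$, and one must also control the weak dependence between successive excursions, where the strong Markov property — available after the localizing time change and change of measure — comes in. By contrast, the negligibility of the drift and of the large jumps on the $\varepsilon^\alpha$ time scale, and the negligibility of the last incomplete excursion straddling $T$, are comparatively routine.
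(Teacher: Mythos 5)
Your proposal follows essentially the same route as the paper's proof: localization to bounded coefficients, decomposition into excursions between hitting times, reduction to a L\'evy process via the time change $\Lambda_t=\int_0^t\lambda_s\,ds$ and the Girsanov change of measure removing $K_t(z)$, convergence of the rescaled excursion to the strictly $\alpha$-stable process with the attendant uniform-integrability and uniform-in-$(\underline a,\overline a)$ convergence issues (which the paper resolves by sandwiching the exit time between drift-shifted barriers, bounding the first large-jump time by a compound-Poisson argument based on \eqref{hxbnd}, and invoking Dini's theorem), and finally the Riemann-sum step using the ratio $(T_{i+1}-T_i)/E_{\mathcal F_{T_i}}[T_{i+1}-T_i]$. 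The plan is correct and you have correctly identified the genuinely delicate point.
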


%th2 #&#
\begin{theorem}\label{cost.thm}
We use the notation of Theorem \ref{err.thm}.
\begin{longlist}[(ii)]
\item[(i)] Let assumptions {$(\mathit{HY})$, $(\mathit{HX})$,
$(\mathit{HX}^{1}_{\mathrm{loc}})$, $(\mathit{HA})$, $(\mathit{HA}_{\mathrm{loc}})$ and $(\mathit{HA}_2')$}
be satisfied. Then
%
%e24 #&#
%
\begin{equation}
\lim_{\varepsilon\to0} \varepsilon^{\alpha} \mathcal
C^0(\varepsilon) = E \biggl[\int_0^T
\frac{{\lambda_t}}{g(\underline a_t, \overline a_t)} \,dt \biggr].\label{cost0.eq}
\end{equation}
\item[(ii)] Let $\beta\in(0,\alpha)$, and assume that {$(\mathit{HY})$,
$(\mathit{HX})$, $(\mathit{HX}^{1}_{\mathrm{loc}})$,
$(\mathit{HX}^{\rho}_{\mathrm{loc}})$ (for some $\rho>
\frac{\alpha}{\alpha-\beta}\vee2$), $(\mathit{HA})$, $(\mathit{HA}_{\mathrm{loc}})$ and
$(\mathit{HA}_2)$} hold true.
Then
%
%e25 #&#
%
\begin{equation}
\lim_{\varepsilon\to0} \varepsilon^{\alpha-\beta} \mathcal
C^{ \beta
}(\varepsilon) = E \biggl[\int_0^T
{\lambda_t} \frac{u^\beta(\underline a_t, \overline
a_t)}{g(\underline a_t, \overline a_t)} \,dt \biggr],\label{cost.eq}
\end{equation}
where
\[
u^\beta(\underline a, \overline a) = E\bigl[\bigl|X^*_{\tau^*}\bigr|^\beta
\bigr]<\infty.
\]
\end{longlist}
\end{theorem}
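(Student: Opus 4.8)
The plan is to treat the cost functional $\mathcal{C}^\beta(\varepsilon) = E\big[\sum_{i\geq 1: T_i^\varepsilon \leq T} |X_{T_i^\varepsilon}-X_{T_{i-1}^\varepsilon}|^\beta\big]$ as a sum over the discretization cells and to identify its $\varepsilon \to 0$ limit cell-by-cell, using the same scaling philosophy as in the proof of Theorem \ref{err.thm}. On each cell $[T_{i-1}^\varepsilon, T_i^\varepsilon]$, the increment $X_{T_i^\varepsilon} - X_{T_{i-1}^\varepsilon}$ is the overshoot of the exit of $X$ from the interval $(X_{T_{i-1}^\varepsilon} - \varepsilon \underline a, X_{T_{i-1}^\varepsilon}+\varepsilon\overline a)$. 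After the spatial rescaling $x \mapsto x/\varepsilon$ and the time rescaling dictated by assumption $(HX)$ (time runs at rate $\lambda_t$, and the local $\alpha$-stable scaling links time and space as $\mathrm{time} \sim \mathrm{space}^\alpha$), the rescaled process started afresh at time $T_{i-1}^\varepsilon$ converges (locally, over a short time window) to the strictly $\alpha$-stable process $X^*$ with Lévy density $\nu^*$; this is exactly the limiting procedure already used for the error. Hence the rescaled overshoot $\varepsilon^{-1}(X_{T_i^\varepsilon}-X_{T_{i-1}^\varepsilon})$ converges in law to $X^*_{\tau^*}$ where $\tau^* = \inf\{t: X^*_t \notin (-\underline a_{T_{i-1}^\varepsilon},\overline a_{T_{i-1}^\varepsilon})\}$, so that $|X_{T_i^\varepsilon}-X_{T_{i-1}^\varepsilon}|^\beta \approx \varepsilon^\beta |X^*_{\tau^*}|^\beta$ with expectation $\approx \varepsilon^\beta u^\beta(\underline a_{T_{i-1}^\varepsilon},\overline a_{T_{i-1}^\varepsilon})$. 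Simultaneously, the expected number of cells visited in a window $[t,t+dt]$ behaves like $dt$ divided by the expected sojourn time in one cell, which after rescaling is $\varepsilon^\alpha g(\underline a_t,\overline a_t)/\lambda_t$ (the factor $\lambda_t$ from the time-change, $\varepsilon^\alpha$ from the $\alpha$-stable scaling). Multiplying these two contributions and summing over cells gives the Riemann-sum convergence to $E[\int_0^T \lambda_t u^\beta(\underline a_t,\overline a_t)/g(\underline a_t,\overline a_t)\,dt]$ after renormalizing by $\varepsilon^{\alpha-\beta}$; the case $\beta=0$ simply uses $u^0 \equiv 1$ and gives \eqref{cost0.eq}.

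Concretely, I would first reduce to a localized setting by stopping at $\tau_n$ from $(HX^\rho_{\mathrm{loc}})$ and $(HA_{loc})$, so that $\lambda_t$, $\widehat K_t$, $\underline a_t$, $\overline a_t$ are all bounded above and below, then recover the general statement by letting $n\to\infty$ using the integrability built into $(HA)$, $(HA_2)$ (resp. $(HA_2')$). Next, I would establish the single-cell asymptotics: using the local equivalence of the jump part of $X$ to a time-changed Lévy process afforded by representation \eqref{ascomp}, change measure via the density in \eqref{girsdens.eq} and time-change by $\Lambda_t = \int_0^t\lambda_s\,ds$ to reduce to a genuine Lévy process whose small jumps are $\alpha$-stable-like by \eqref{halpha}, then apply a scaling limit theorem (functional convergence of the rescaled Lévy process to $X^*$, together with convergence of exit times and overshoots, which requires that $X^*$ does not exit $(-\underline a,\overline a)$ by creeping — true since $\alpha$-stable processes have no Gaussian part and jump across the boundary, so $\tau^*$ and $X^*_{\tau^*}$ are continuous functionals of the path in the Skorokhod sense on the relevant event). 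The role of the stronger assumption $(HX^\rho_{\mathrm{loc}})$ with $\rho > \frac{\alpha}{\alpha-\beta}\vee 2$ in part (ii) is to control the Radon–Nikodym derivative in $L^p$ for $p$ large enough that the change of measure does not destroy the uniform integrability of $\varepsilon^{-\beta}|X_{T_i^\varepsilon}-X_{T_{i-1}^\varepsilon}|^\beta$; one also needs $u^\beta(\underline a,\overline a) = E[|X^*_{\tau^*}|^\beta] < \infty$, which follows from the fact that the overshoot of an $\alpha$-stable process over a fixed level has tails of order $x^{-\alpha}$, hence finite moments of every order $\beta < \alpha$.

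The main obstacle, as in the companion result, is converting these heuristic single-cell asymptotics into a rigorous Riemann-sum statement, i.e. controlling the error uniformly over the (random, $\varepsilon$-dependent) partition and interchanging limits with the sum and the outer expectation. Two difficulties stand out. First, one must show that the number of ``bad'' cells — those straddling a large jump of $X$, or those where the c\`adl\`ag coefficients $\lambda$, $\underline a$, $\overline a$ have varied appreciably within the cell — makes an asymptotically negligible contribution; this uses the locally bounded number of large jumps and a modulus-of-continuity argument combined with the a priori estimate $N_T^\varepsilon = O_P(\varepsilon^{-\alpha})$ (itself a by-product of the cost estimate for $\beta = 0$). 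Second, and more delicate in part (ii), one needs a uniform-integrability bound on $\sum_i \varepsilon^{\beta-\alpha}|X_{T_i^\varepsilon}-X_{T_{i-1}^\varepsilon}|^\beta$ that dominates the quantity one wants to pass to the limit in; this is where $(HA_2)$ enters, bounding moments of the overshoot through the $\widehat K_t$ and $|b_t|$ terms with the stated powers of $\underline a_t, \overline a_t$, and where the drift contribution $\int b_s\,ds$ on a cell (of length $\sim\varepsilon^\alpha$) must be shown to be negligible relative to the $\varepsilon$-scale overshoot — valid precisely because $\alpha > 1$. I expect these uniform bounds, rather than the identification of the limit, to occupy the bulk of the proof, and they are presumably the reason the technical lemmas of Section \ref{prooflm} are invoked.
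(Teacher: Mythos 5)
Your proposal follows essentially the same route as the paper: localization to bounded coefficients, change of measure via \eqref{girsdens.eq} (with $(HX^\rho_{\mathrm{loc}})$, $\rho>\frac{\alpha}{\alpha-\beta}\vee 2$, used exactly as you say to keep the density under control through H\"older), time change by $\Lambda$, identification of the single-cell overshoot limit through the rescaled stable process, the ratio $(T_{i+1}-T_i)/E_{\mathcal F_{T_i}}[\tilde\tau_i]$ as the cell-counting device, and a Riemann-sum passage, with $(HA_2)$ (resp.\ $(HA'_2)$) supplying the uniform integrability of $\varepsilon^{\alpha-\beta}\sum_i|X_{T_i}-X_{T_{i-1}}|^\beta$. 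You correctly flag that the bulk of the work lies in the overshoot moment bound (It\^o's formula with a truncated power function) and the martingale-decomposition/Burkholder argument behind the uniform bounds, which is precisely what the technical lemmas of Section \ref{prooflm} carry out.
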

%

%
%re5 #&#
\begin{remark}
Theorems \ref{err.thm} and \ref{cost.thm} enable us to
apply Lemma \ref{errorcost.lm} and conclude that \emph{for any
admissible discretization rule based on hitting times}, the error
functional for fixed cost behaves, for large costs, as
\[
\overline{\mathcal E}(C) \sim C^{-{2}/{(\alpha-\beta)}} E \biggl[\int
_0^T {A_t} \frac{f(\underline a_t, \overline a_t)}{g(\underline a_t,
\overline a_t)} \,dt
\biggr] E \biggl[\int_0^T {
\lambda_t} \frac{u^\beta(\underline a_t,
\overline a_t)}{g(\underline a_t, \overline a_t)} \,dt \biggr]^{{2}/{(\alpha-\beta)}}.
\]
When the cost is equal to the expected number of rebalancings
($\beta=0$), the error converges to zero at the rate $C^{-{2}/{\alpha}}$.
On the other hand, for equidistant rebalancing dates,
under sufficient regularity, the $L^2$ discretization error of the
quadratic hedging strategy in exponential L\'evy models is
\emph{inversely proportional} to the number of rebalancings; see \cite
{broden.tankov.09}. This means that while in diffusion models,
asymptotically optimal hedging reduces the error without modifying
the rate at which the error decreases with the number of rebalancings
[cf. equations \eqref{diffeff} and \eqref{diffunif}],
in pure jump models, any discretization based on hitting times, and a
fortiori the optimal discretization, also
improves the rate of convergence.
\end{remark}
%

%s3.2 #&#
\subsection{Application: Computing the optimal barriers}
\label{barrier.sec}
In this section, we suppose that the assumptions of Theorem
\ref{cost.thm} [part (i) or (ii), depending on $\beta$] are
satisfied. {In view of Lemma \ref{errorcost.lm}, we shall use the
following definition of an asymptotically optimal discretization
rule.
%
%de2 #&#
\begin{definition}
A discretization rule
$(\underline a,\overline a)$ is said to be asymptotically optimal if it
is admissible, and for any other admissible rule $(\underline
a',\overline
a')$,
%
%e26 #&#
%e27 #&#
%
\begin{eqnarray}
\label{assoptstab}&& E \biggl[\int_0^T
{A_t} \frac{f(\underline a_t, \overline
a_t)}{g(\underline a_t, \overline a_t)} \,dt \biggr] E \biggl[\int_0^T
{\lambda_t} \frac{u^\beta(\underline a_t,
\overline a_t)}{g(\underline a_t, \overline a_t)} \,dt \biggr]^{{2}/{(\alpha-\beta)}}
\nonumber
\\[-8pt]
\\[-8pt]
\nonumber
&&\qquad\leq E \biggl[\int_0^T {A_t}
\frac
{f(\underline a'_t, \overline a'_t)}{g(\underline a'_t, \overline a'_t)} \,dt \biggr] E \biggl[\int_0^T
{\lambda_t} \frac{u^\beta(\underline a'_t,
\overline a'_t)}{g(\underline a'_t, \overline a'_t)} \,dt \biggr]^{{2}/{(\alpha-\beta)}}.
\end{eqnarray}
\end{definition}
The following result simplifies the characterization of
such rules.

%pr2 #&#
\begin{proposition}\label{optbar.prop}
Let $(\underline a,\overline a)$ be an admissible discretization rule, and
assume that there exists $c>0$ such that for any other admissible rule
$(\underline a',\overline
a')$,
%
%e28 #&#
%
\begin{equation}
\label{assoptlagrange} {A_t} \frac{f(\underline
a_t, \overline a_t)}{g(\underline a_t, \overline a_t)} + c {\lambda_t}
\frac{u^\beta(\underline a_t, \overline a_t)}{g(\underline
a_t, \overline a_t)} \leq{A_t} \frac{f(\underline
a'_t, \overline a'_t)}{g(\underline a'_t, \overline a'_t)} + c {
\lambda_t} \frac{u^\beta(\underline a'_t, \overline a'_t)}{g(\underline
a'_t, \overline a'_t)}
\end{equation}
a.s. for all $t\in[0,T]$. Then the rule $(\underline a,\overline a)$
is asymptotically optimal.
\end{proposition}
\begin{pf}
By the nature of assumptions $(\mathit{HA})$, $(\mathit{HA}_{\mathrm{loc}})$ and $(\mathit{HA}_2)$
[resp., $(\mathit{HA}'_2)$], for all $\kappa>0$,
the rule $(\kappa\underline a, \kappa\overline a)$ is admissible. In
addition, by the
scaling property of strictly stable processes,
\begin{eqnarray*}
f(\kappa\underline a_t, \kappa\overline a_t) &=&
\kappa^{2+\alpha} f(\underline a_t,\overline a_t), \qquad g(
\kappa\underline a_t, \kappa \overline a_t) =
\kappa^{\alpha} g(\underline a_t,\overline a_t),\\
u^\beta(\kappa\underline a_t, \kappa\overline
a_t) &=& \kappa^{\beta} u^\beta(\underline
a_t,\overline a_t).
\end{eqnarray*}
Using these identities in the left-hand side of \eqref{assoptlagrange}
and the fact that \eqref{assoptlagrange} holds for any $(\underline
a',\overline a')$,
integrating both sides and taking the expectation, we get
\begin{eqnarray*}
&& E \biggl[\int_0^T {A_t}
\frac{f(\underline
a_t, \overline a_t)}{g(\underline a_t, \overline a_t)}\,dt \biggr]+ c \kappa^{\alpha-\beta} E \biggl[\int
_0^T {\lambda_t}
\frac{u^\beta(\kappa\underline a_t, \kappa
\overline a_t)}{g(\kappa\underline
a_t, \kappa\overline a_t)}\,dt \biggr]
\\
&&\qquad\leq E \biggl[\int_0^T {A_t}
\frac
{f(\kappa'\underline
a'_t, \kappa'\overline a'_t)}{g(\kappa'\underline a'_t, \kappa
'\overline a'_t)}\,dt \biggr] + c E \biggl[\int_0^T
{\lambda_t} \frac{u^\beta(\kappa'\underline a'_t, \kappa
'\overline a'_t)}{g(\kappa'\underline
a'_t, \kappa'\overline a'_t)}\,dt \biggr].
\end{eqnarray*}
Under the assumptions of Theorems \ref{err.thm} and \ref{cost.thm}, all
expectations above are finite. Indeed, the limiting error functional
is finite by assumption $(\mathit{HA})$ since clearly $f(\underline a, \overline
a)\leq\max(\underline a, \overline a)^2 g(\underline a, \overline
a)$. The finiteness of the limiting cost functional is shown by
applying Lemma \ref{overshoot.lm} to the limiting strictly stable
process to obtain a bound on the function $u^\beta$ and then using
assumption $(\mathit{HA}_2)$ or $(\mathit{HA}'_2)$.

Now, choose $\kappa$ so that
\[
E \biggl[\int_0^T {\lambda_t}
\frac{u^\beta(\kappa\underline a_t, \kappa
\overline a_t)}{g(\kappa\underline
a_t, \kappa\overline a_t)}\,dt \biggr] = 1 \quad\Rightarrow\quad \kappa= E \biggl[\int
_0^T {\lambda_t}
\frac{u^\beta(\underline a_t,
\overline a_t)}{g(\underline
a_t, \overline a_t)}\,dt \biggr]^{{1}/{(\beta-\alpha)}}
\]
and $\kappa'$ so that
\begin{eqnarray*}
&& E \biggl[\int_0^T {\lambda_t}
\frac{u^\beta(\kappa'\underline a'_t, \kappa
'\overline a'_t)}{g(\kappa'\underline
a'_t, \kappa'\overline a'_t)}\,dt \biggr] = \kappa^{\alpha-\beta}\\
&&\quad\Rightarrow\quad\displaystyle\kappa' = \frac{1}{\kappa} E \biggl[\int_0^T
{\lambda_t} \frac{u^\beta(\underline a'_t, \overline a'_t)}{g(\underline
a'_t, \overline a'_t)}\,dt \biggr]^{{1}/{(\alpha-\beta)}}.
\end{eqnarray*}
This yields
\[
E \biggl[\int_0^T {A_t}
\frac{f(\underline
a_t, \overline a_t)}{g(\underline a_t, \overline
a_t)}\,dt \biggr] \leq\bigl(\kappa'\bigr)^2 E
\biggl[\int_0^T {A_t}
\frac
{f(\underline
a'_t, \overline a'_t)}{g(\underline a'_t, \overline a'_t)}\,dt \biggr].
\]
Substituting the expression for $\kappa'$, we finally obtain \eqref{assoptstab}.
\end{pf}

The above result shows that we may look for optimal barriers as
$\underline a$ and $\overline a$ as minimizers of
%
%e29 #&#
%
\begin{equation}
\min \biggl\{{A_t} \frac{f(\underline
a_t, \overline a_t)}{g(\underline a_t, \overline a_t)} + c {\lambda_t}
\frac{u^\beta(\underline a_t, \overline a_t)}{g(\underline a_t,
\overline a_t)} \biggr\},\label{lagmin}
\end{equation}
provided that the resulting $\underline a_t$ and $\overline a_t$
are admissible. Moreover if $(\underline
a,\overline a)$ is the solution of \eqref{lagmin}, then the scaling
property shows that the solution of
\[
\min \biggl\{{A_t} \frac{f(\underline
a_t, \overline a_t)}{g(\underline a_t, \overline a_t)} + c' {\lambda
_t} \frac{u^\beta(\underline a_t, \overline a_t)}{g(\underline a_t,
\overline a_t)} \biggr\}
\]
is given by $(\kappa\underline a, \kappa\overline a)$ with $\kappa=
(c'/c )^{{1}/{(\alpha-\beta+2)}}$. If $c'>c$, then
$\kappa>1$, resulting in a smaller cost functional and a bigger error
functional. Therefore, in practice $c$ may be chosen by the trader
depending on the maximum acceptable cost:} the bigger $c$, the smaller
will be the cost of the strategy and, consequently the bigger its
error.

The functions $f$, $g$ and $u$ appearing above must in general
be computed numerically. However, when {the constants $c_+$
and $c_-$ in \eqref{hxalpha} are equal}, which
corresponds for example to the CGMY model very popular in practice
\cite{finestructure}, the results are completely explicit, as will be
shown in the next paragraph.
%s3.3 #&#
\subsection{Locally symmetric L\'evy measures}
\label{localsym.sec}
In this section we discuss a case important in applications,
when the asymptotically optimal strategy can be computed explicitly in
terms of $A$ and $\lambda$.

%pr3 #&#
\begin{proposition}\label{sym.prop}
Let the cost functional be of the form \eqref{costfunc} with $\beta
\in[0,1]$. Let the processes $X$ and $Y$ satisfy the assumptions
$(\mathit{HY})$, $(\mathit{HX})$ with $c_+=c_-$, $(\mathit{HX}^{1}_{\mathrm{loc}})$ and $(\mathit{HX}^{\rho}_{\mathrm{loc}})$ with
$\rho>\frac{\alpha}{\alpha-\beta}\vee2$ (if $\beta>0$). Assume that
the processes
$A$, $b$ and $\lambda$ satisfy the following integrability conditions for
some $\delta>0$:
\begin{eqnarray*}
E \biggl[ \biggl(\sup_{0\leq t \leq T} \frac{\lambda_t}{A_t}
\biggr)^{{2}/{(2+\alpha-\beta)}} \int_0^T A_t
\,dt \biggr]&<&\infty,
\\
E \biggl[ \biggl(\inf_{0\leq t \leq T}
\frac{\lambda_t}{A_t} \biggr)^{{(1+\delta)(\beta-\alpha)}/{(2+\alpha
-\beta)}} \int_0^T
\widehat{K}_t^{1+\delta} \,dt \biggr]&<&\infty,
\end{eqnarray*}
and, if $\beta=1$,
\[
E \biggl[ \biggl(\sup_{0\leq t \leq T} \frac{\lambda_t}{A_t}
\biggr)^{\delta} \int_0^T
|b_t|^{1+\delta} \,dt \biggr]<\infty,
\]
or, if $\beta<1$,
\[
E \biggl[ \biggl(\inf_{0\leq t \leq T} \frac{\lambda_t}{A_t}
\biggr)^{(\beta-1)(1+\delta)} \int_0^T
|b_t|^{1+\delta} \,dt \biggr]<\infty.
\]
Then the strategy given by
\[
\underline a_t = \overline a_t = c \biggl(
\frac{\lambda_t}{A_t} \biggr)^{{1}/{(2+\alpha-\beta)}}
\]
is asymptotically optimal.
\end{proposition}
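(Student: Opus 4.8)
The plan is to reduce the proof of Proposition \ref{sym.prop} to an application of the Lagrangian optimality criterion of the previous Proposition, by carrying out the pointwise minimization \eqref{lagmin} explicitly in the locally symmetric case. First I would invoke the symmetry hypothesis $c_+=c_-$: in this case the driving stable process $X^*$ is a symmetric $\alpha$-stable process, so the optimal exit interval should itself be symmetric, i.e. we may look for minimizers of the form $\underline a = \overline a = a$. Writing $f(a,a)=a^{2+\alpha}\varphi$, $g(a,a)=a^{\alpha}\gamma$ and $u^\beta(a,a)=a^{\beta}\psi_\beta$, where $\varphi=f(1,1)$, $\gamma=g(1,1)$, $\psi_\beta=u^\beta(1,1)$ are the (finite, positive, dimensionless) constants obtained by the scaling relations recalled in the proof of the previous Proposition, the pointwise objective \eqref{lagmin} restricted to symmetric barriers becomes
\[
A_t\,\frac{\varphi}{\gamma}\,a^2 + c\,\lambda_t\,\frac{\psi_\beta}{\gamma}\,a^{\beta-\alpha},
\]
a convex function of $a>0$ (since $2>0$ and $\beta-\alpha<0$). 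Differentiating in $a$ and setting the derivative to zero gives $2A_t\varphi a = c(\alpha-\beta)\lambda_t\psi_\beta a^{\beta-\alpha-1}$, i.e. $a^{2+\alpha-\beta}=\frac{c(\alpha-\beta)\psi_\beta}{2\varphi}\cdot\frac{\lambda_t}{A_t}$, which is exactly the announced form $\underline a_t=\overline a_t=c'(\lambda_t/A_t)^{1/(2+\alpha-\beta)}$ after absorbing all the dimensionless constants into a single constant (renamed $c$ in the statement, which is harmless by the Remark that optimal strategies are defined only up to a multiplicative constant).

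The next step is to show that the symmetric minimizer just found is in fact a minimizer over all (not necessarily symmetric) pairs $(\underline a,\overline a)$, i.e. that it satisfies \eqref{assoptlagrange}. The clean way is to observe that the pointwise objective in \eqref{lagmin}, as a function of $(\underline a,\overline a)$, is minimized among all pairs with a \emph{given value of $g(\underline a,\overline a)=E[\tau^*]$} by the symmetric pair with that value of $g$: this is because, for a symmetric stable process, among all intervals $(-\underline a,\overline a)$ with prescribed mean exit time the symmetric one both minimizes $f(\underline a,\overline a)=E\int_0^{\tau^*}(X^*_t)^2dt$ and minimizes $u^\beta(\underline a,\overline a)=E[|X^*_{\tau^*}|^\beta]$ for $\beta\in[0,1]$ — the latter being where the restriction $\beta\le 1$ enters. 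Granting these two auxiliary comparison facts, for any pair $(\underline a',\overline a')$ one first finds the symmetric pair $(\bar a,\bar a)$ with the same $g$; this pair has no larger value of the objective; then one optimizes over symmetric pairs as above. Hence the symmetric critical point solves \eqref{lagmin} pointwise and, by the previous Proposition, the rule is asymptotically optimal provided it is admissible.

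The remaining, and most tedious, step is verifying admissibility: that the candidate $\underline a_t=\overline a_t=c(\lambda_t/A_t)^{1/(2+\alpha-\beta)}$ satisfies $(HA)$, $(HA_{loc})$ and $(HA_2)$ (or $(HA'_2)$ when $\beta=0$). Here I would simply substitute $\max(\underline a_s,\overline a_s)=c(\lambda_s/A_s)^{1/(2+\alpha-\beta)}$ (and the equal minimum) into each of the integrability conditions defining those assumptions and check that they are implied by the hypotheses of the Proposition; e.g. $(HA)$ becomes $E[\sup_s(\lambda_s/A_s)^{2/(2+\alpha-\beta)}\int_0^T A_t\,dt]<\infty$, which is the first stated condition, while the $\widehat K$-term of $(HA_2)$ reduces (using $(\beta\vee(2-\alpha))-(2-\beta)\wedge(\alpha)$-type exponent bookkeeping, which collapses because $\underline a=\overline a$) to the second stated condition, and the $|b|$-terms give the third/fourth conditions according to whether $\beta=1$ or $\beta<1$. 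The $(HA_{loc})$ condition follows from the local bounds on $\lambda$ (via $(HX^1_{loc})$) and on $A$ (via $(HY)$ together with a localization keeping $A_t$ bounded away from $0$ and $\infty$).

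The main obstacle I anticipate is the two comparison facts about the symmetric stable process in the second paragraph — that among intervals with prescribed mean exit time, the symmetric one minimizes both $E\int_0^{\tau^*}(X^*)^2$ and $E[|X^*_{\tau^*}|^\beta]$ for $\beta\in[0,1]$. These are not completely trivial (the exit distribution of a symmetric stable process from an interval is explicit via the Blumenthal–Getoor–Ray formula, so they are in principle checkable) and the restriction $\beta\le 1$ must be used carefully, for instance through concavity of $x\mapsto|x|^\beta$ combined with symmetry of the exit law from a symmetric interval; getting the direction of the inequality right, and handling the constraint via a Lagrange/rearrangement argument rather than brute force, is the delicate part. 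Everything else is scaling arithmetic and routine substitution into the admissibility conditions.
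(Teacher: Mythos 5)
Your overall architecture matches the paper's: reduce to symmetric barriers, minimize the one--dimensional scaled objective $A_t\,C_1 a^2 + c\lambda_t\,C_2 a^{\beta-\alpha}$ to get $a_t\propto(\lambda_t/A_t)^{1/(2+\alpha-\beta)}$, and then check admissibility by substituting into $(HA)$, $(HA_{loc})$, $(HA_2)$/$(HA_2')$. The scaling arithmetic and the admissibility bookkeeping are exactly what the paper does.

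However, the step you yourself flag as the ``main obstacle'' --- the reduction from asymmetric to symmetric barriers --- is where your route breaks down, and not merely because it is delicate. You propose to slice the set of barriers by a fixed value of $g(\underline a,\overline a)=E[\tau^*]$ and to claim that on each such slice the symmetric pair minimizes both $f$ and $u^\beta$. By \eqref{tauab}, fixing $g$ means fixing the product $\underline a\,\overline a$. On that slice the claim for $f$ is true (the explicit formula of Proposition \ref{squarefunc} gives $f\propto (\underline a/\overline a+\overline a/\underline a)(1+\alpha/2)-\alpha$ at fixed product, minimized at $\underline a=\overline a$), but the claim for $u^\beta$ is not: the explicit exit-law formula shows that $u^\beta$ at fixed product $\underline a\,\overline a$ is \emph{not} minimized at the symmetric configuration (for instance, letting the interval degenerate with fixed product drives the overshoot moment in a non-monotone way, and the factor $(z+\underline a+\overline a)^{-\alpha/2}$ is \emph{maximized} at the symmetric point on that slice, which works against you). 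So the inequality you need on the fixed-$g$ slice fails, and the reduction does not go through as stated. The paper avoids this by slicing differently: it fixes the half-width $a_t=(\underline a_t+\overline a_t)/2$ and compares the \emph{ratios} $f/g$ and $u^\beta/g$ as functions of the asymmetry parameter $\theta_t=(\overline a_t-\underline a_t)/(\overline a_t+\underline a_t)$. On that slice the $(\underline a\,\overline a)^{\alpha/2}$ prefactors of $u^\beta$ and $g$ cancel and the factor $(z+\underline a+\overline a)^{-\alpha/2}$ is constant, so $u^\beta/g$ depends on $\theta$ only through $|z+a(1+\theta)|^{\beta-1}+|z+a(1-\theta)|^{\beta-1}$, which is minimized at $\theta=0$ by convexity of $x\mapsto x^{\beta-1}$ for $\beta\le 1$; and $f/g\propto a^2(1+\theta^2(1+\alpha))$ is likewise minimized at $\theta=0$. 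You should replace your fixed-$g$ comparison by this fixed-half-width comparison of the ratios (which is also exactly where the hypothesis $\beta\le 1$ enters); with that substitution the rest of your argument is sound.
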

\begin{pf}
The fact that $X$ satisfies $(\mathit{HX})$ with $c_+=c_-$ means that the
limiting process $X^*$ is a symmetric stable process.
%with characteristic function $E[e^{iuX^*_t}] = e^{t\sigma|u|^\alpha}$.
Let $(\underline a,\overline a)$ be an {admissible} discretization
rule. With a change of notation $a_t:= \frac{\underline a_t +
\overline a_t}{2}$ and $\theta_t = \frac{\overline a_t - \underline
a_t}{\overline a_t + \underline a_t}$ and using\vadjust{\goodbreak} the results from
Appendix \ref{appa} [Proposition \ref{squarefunc}, equations \eqref{tauab} and
\eqref{overshootab}], we can compute
\begin{eqnarray*}
\hspace*{-4pt}&&\frac{f(\underline a_t, \overline a_t)}{g(\underline a_t, \overline
a_t)} = \frac{\alpha}{(\alpha+2)(\alpha+1)}a_t^2 \bigl(1 +
\theta_t^2(1+\alpha) \bigr),
\\
\hspace*{-4pt}&&\frac{u^\beta(\underline a_t, \overline a_t)}{g(\underline a_t,
\overline
a_t)} \\
\hspace*{-4pt}&&\qquad=\frac{\sigma\Gamma(1+\alpha)\sin{\pi\alpha}/{2}}{\pi}
\\
\hspace*{-4pt}&&\qquad\quad{} \times\int_0^\infty z^{-\alpha/2}
(z+2a_t )^{-\alpha/2} \bigl(\bigl|z+a_t(1+
\theta_t)\bigr|^{\beta-1} + \bigl|z+a_t(1-\theta_t)\bigr|^{\beta-1}
\bigr)\,dz.
\end{eqnarray*}
For fixed $a_t$, both ratios are minimal when $\theta=0$
(for the second functional this follows from the convexity of the
function $x\mapsto x^{\beta-1}$ for $x\geq0$ and $\beta\leq
1$). Moreover,
from the structure of assumptions $(\mathit{HA})$, $(\mathit{HA}_{\mathrm{loc}})$ and $(\mathit{HA}_2)$
[resp., $(\mathit{HA}'_2)$], it is clear that the strategy obtained by taking
$\theta=0$, that is, the strategy $(a,a)$ is also
admissible. Therefore, the asymptotically optimal strategy, if it
exists, will be symmetric in this case. By the same arguments as in
the previous section, we can show that the optimal strategy, if it
exists, minimizes
\[
A_t \frac{f(a_t,a_t)}{g(a_t,a_t)} + c \lambda_t \frac{u^\beta
(a_t,a_t)}{g(a_t,a_t)}
\]
for each $t$. Plugging in the explicit expressions computed above, we
see that this functional is minimized by
\[
a_t = c \biggl( \frac{\lambda_t}{A_t} \biggr)^{{1}/{(2+\alpha-\beta)}}
\]
for a different constant $c$.
By the assumptions of the proposition, this strategy is admissible, which
completes the proof.
\end{pf}
%

%For this example, assume that the cost is simply given by the expected
%number of rebalancings: $\mathcal
%C^\varepsilon_T = E[N^\varepsilon_T]$. A similar analysis to that of
%the error functional then yields:
%$$
%= E\left[\int_0^T \frac{\lambda_tdt}{g(\underline a_t, \overline
% a_t)}\right] = c E\left[\int_0^T \lambda_t a_t^{-\alpha}(1-
%$$
%where $c$ is a constant.
%Similarly, the cost functional satisfies

%s3.4 #&#
\subsection{Exponential L\'evy models}
\label{explevy.sec}
{In this section we treat the case when the process $Y$ (the asset
price or the integrator) is the stochastic exponential of a L\'evy
process. More
precisely, throughout this section we assume that
\[
Y_t = Y_0 + \int_0^t
Y_{s-} \,dZ_s,
\]
where $Z$ is a martingale L\'evy process with no diffusion part and with
L\'evy measure $\nu$ which has a compact support $U\in(-1,\infty)$
with $0 \in\operatorname{int}U$ and admits a density $\bar\nu(x)$ which
is continuous outside any neighborhood of zero and satisfies~\eqref{stable.strong}. From the martingale property and the
boundedness of jumps of $Z$, it follows immediately that assumption
$(\mathit{HY})$ is satisfied with $A_t = Y_t^2 \int_{\mathbb R} z^2 \bar
\nu(dz)$. For the choice of the integrator $X$ we consider two
examples corresponding to the discretization of hedging strategies on
one hand and to the discretization of optimal investment policies on
the other hand.

%ex2 #&#
\begin{example}[(Discretization of hedging strategies)]
In this example we assume that the integrand $X$ (the hedging strategy)
is a
deterministic function of $Y$, which is indeed the case for classical
strategies (quadratic hedging, delta hedging) and European contingent
claims in exponential L\'evy
models; see \cite{kallsen.hubalek.al.06,broden.tankov.09}.

%pr4 #&#
\begin{proposition} \label{levy.prop}
Let $X_t = \phi(t,Y_t)$ with $\phi(t,y) \in C^{1,2}([0,T)\times\mathbb
R)$ such that for all $\bar Y>0$ and $T^*\in[0,T)$,
\[
\min_{(t,y)\in[0,T^*]\times[-\bar Y,\bar Y] } \frac{\partial
\phi(t,y)}{\partial y} >0.
\]
Then, assumptions $(\mathit{HY})$, $(\mathit{HX})$ and $(\mathit{HX}^{\rho}_{\mathrm{loc}})$ (for
all $\rho\geq1$) are satisfied with
\[
b_t = \frac{\partial\phi}{\partial t} (s,Y_s)+ \frac{\partial\phi}{\partial y}
(s,Y_{s})Y_{s}\int_{|z|>1}z\bar\nu(dz)\quad
\mbox{and}\quad \lambda_t = \biggl(Y_t \frac{\partial\phi}{\partial
y}(t,Y_t)
\biggr)^\alpha.
\]
Assume additionally that the function $\phi$ is such that the
integrability conditions of Proposition \ref{sym.prop} are satisfied
for some $\delta>0$.
Then the strategy given by
\[
\underline a_t = \overline a_t = c \biggl(
\frac{\partial\phi(t,Y_t)}{\partial
y} \biggr)^{{\alpha}/{(2+\alpha-\beta)}} Y_t^{{(\alpha-2)}/{(\alpha-\beta+2)}}
\]
is asymptotically optimal.
\end{proposition}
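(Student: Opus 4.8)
The plan is to verify the three assumptions $(HY)$, $(HX)$ and $(HX^{\rho}_{\mathrm{loc}})$ for this particular choice of $X$ by specializing the general SDE criterion (Proposition \ref{sde.prop}), and then to read off the optimal barriers from Proposition \ref{sym.prop}. First I would note that $(HY)$ has already been recorded at the start of the section: since $Z$ is a martingale L\'evy process with no diffusion part and compactly supported jumps, $Y$ is a local martingale with $\langle Y\rangle_t = \int_0^t A_s\,ds$ and $A_t = Y_t^2\int_{\mathbb R} z^2\bar\nu(dz)$, which is c\`adl\`ag and locally bounded because $Y$ is. So only the assumptions on $X$ require work.

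The core step is to express $X_t = \phi(t,Y_t)$ as a jump SDE of the form \eqref{sde.eq} and invoke Proposition \ref{sde.prop}. Applying It\^o's formula for semimartingales with jumps to $\phi(t,Y_t)$, and using $dY_t = Y_{t-}\,dZ_t$, one gets
\begin{align*}
X_t = X_0 + \int_0^t \bar b_s\,ds + \int_0^t\int_{\mathbb R}\big(\phi(s,Y_{s-}(1+z)) - \phi(s,Y_{s-})\big)\,\tilde N(ds\times dz),
\end{align*}
where $N$ is the jump measure of $Z$ with intensity $dt\times\bar\nu(dz)$, $\tilde N$ is its compensator, and $\bar b_s$ collects $\partial_t\phi$, the drift coming from $Y_{s-}\,dZ_s$ (a martingale part, hence no ds-term beyond the compensation of large jumps), and the It\^o correction terms; a direct computation gives the stated $b_t = \partial_t\phi(s,Y_s) + Y_s\,\partial_y\phi(s,Y_s)\int_{|z|>1} z\,\bar\nu(dz)$ after splitting $\tilde N$ into small and large jumps as in \eqref{defX.eq}. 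Thus the jump coefficient is $\gamma_s(z) := \phi(s,Y_{s-}(1+z)) - \phi(s,Y_{s-})$. One then checks the hypotheses of Proposition \ref{sde.prop} for this $\gamma$: $\gamma_s(0)=0$ is immediate; $\gamma'_s(z) = Y_{s-}\,\partial_y\phi(s,Y_{s-}(1+z))$, which is strictly positive on $U\subset(-1,\infty)$ by the assumed strict positivity of $\partial_y\phi$ on compact space-time sets (localizing via stopping times $\tau_n$ at which $Y$ stays bounded away from $-1$ and from $\infty$), so $\gamma'_s(0) = Y_{s-}\,\partial_y\phi(s,Y_{s-})$, giving $\lambda_t = \gamma'_t(0)^\alpha = (Y_t\,\partial_y\phi(t,Y_t))^\alpha$ as claimed; $\gamma''_s(z) = Y_{s-}^2\,\partial_{yy}\phi(s,Y_{s-}(1+z))$ is bounded on the localized set by the $C^{1,2}$ assumption; and the local bounds \eqref{extracond} on $|b_t|$, $\gamma'_t$, $\gamma''_t$ hold on $[0,\tau_n]$ for suitable constants $C_n$. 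Since $\bar\nu$ satisfies \eqref{stable.strong} by assumption, Proposition \ref{sde.prop} then yields $(HX)$ with this $\lambda$ and $(HX^{\rho}_{\mathrm{loc}})$ for all $\rho\geq 1$.

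With the characteristics identified, the last step is mechanical: we are in the locally symmetric case $c_+ = c_-$ (as part of the ambient assumption \eqref{stable.strong} with $c_+,c_->0$, and symmetry is what Proposition \ref{sym.prop} requires — here one should be slightly careful that Proposition \ref{levy.prop}'s hypotheses do impose $c_+=c_-$; if not stated it is inherited from the blanket assumption of the section or must be added), so Proposition \ref{sym.prop} applies once we verify its integrability conditions, which is exactly the "additional assumption on $\phi$" hypothesis. The optimal symmetric barrier from that proposition is $\underline a_t = \overline a_t = c\,(\lambda_t/A_t)^{1/(2+\alpha-\beta)}$; substituting $\lambda_t = (Y_t\,\partial_y\phi(t,Y_t))^\alpha$ and $A_t = Y_t^2\int z^2\bar\nu(dz)$ and absorbing the constant $\int z^2\bar\nu(dz)$ into $c$ gives
\begin{align*}
\underline a_t = \overline a_t = c\left(\frac{(Y_t\,\partial_y\phi(t,Y_t))^\alpha}{Y_t^2}\right)^{\frac{1}{2+\alpha-\beta}} = c\left(\frac{\partial\phi(t,Y_t)}{\partial y}\right)^{\frac{\alpha}{2+\alpha-\beta}} Y_t^{\frac{\alpha-2}{\alpha-\beta+2}},
\end{align*}
which is the asserted formula. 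The main obstacle is the second step: carefully carrying out the It\^o expansion of $\phi(t,Y_t)$ with jumps, correctly sorting the resulting terms into a genuine drift $b_t$ plus a compensated small-jump integral plus a large-jump integral in the form \eqref{defX.eq}, and confirming that the jump coefficient $\gamma_s(z)$ meets the smoothness and positivity requirements of Proposition \ref{sde.prop} uniformly on the localizing sequence — in particular that $1+z$ stays in the region where $\phi(s,\cdot)$ is $C^2$ with controlled derivatives, which is where the compact support $U\subset(-1,\infty)$ of $\nu$ and the local boundedness of $Y$ (and $1/Y$ away from where $Y$ could approach values making $Y(1+z)$ leave the good region) are essential.
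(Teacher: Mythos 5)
Your proposal follows exactly the paper's own route: It\^o's formula to write $X_t=\phi(t,Y_t)$ in the form \eqref{sde.eq} with $\gamma_t(z)=\phi(t,Y_t(1+z))-\phi(t,Y_t)$, verification of the hypotheses of Proposition \ref{sde.prop} via local boundedness of $Y$ and continuity of the derivatives of $\phi$, and then a direct application of Proposition \ref{sym.prop} with $\lambda_t=(Y_t\,\partial_y\phi)^\alpha$ and $A_t=Y_t^2\int z^2\bar\nu(dz)$. The argument is correct (and your remark that the symmetry $c_+=c_-$ required by Proposition \ref{sym.prop} must be read into the section's standing assumption \eqref{stable.strong} is a fair observation about a point the paper leaves implicit).
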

\begin{pf}
Applying It\^o's formula to $\phi(t,Y_t)$, we get
\begin{eqnarray*}
X_t &=& \phi(0,Y_0) + \int_0^t
b_s \,ds+ \int_0^t \int
_{|z|\leq1} \gamma_s(z) \tilde N(ds\times dz)\\
&&{} + \int
_0^t \int_{|z|> 1}
\gamma_s(z) N(ds\times dz)
\end{eqnarray*}
with $\gamma_t(z) = \phi(t,Y_{t} (1+z)) -
\phi(t,Y_{t})$, which means that we can apply Proposition~\ref{sde.prop}. The local boundedness conditions required by this
proposition follow from the local boundedness of $Y$ and the
continuity of the derivatives of $\phi$. The second statement is a
direct corollary of Proposition \ref{sym.prop}.
\end{pf}

%re6 #&#
\begin{remark}
Using the Cauchy--Schwarz inequality and the fact that $Y$ admits all
moments (because $Z$ has bounded jumps), one can show that the
following more compact condition implies the integrability conditions
of Proposition \ref{sym.prop}: for some $\delta>0$,
\begin{eqnarray*}
&&E \Bigl[ \Bigl(\sup_{x\in U, 0\leq t \leq T} \phi'_y
\bigl(t,Y_t(1+x)\bigr)+\sup_{0\leq t \leq T} \bigl|
\phi'_t(t,Y_t)\bigr| \Bigr)^{2+\delta}\\
&&\hspace*{118pt}{}+
\Bigl(\inf_{0\leq t \leq T} \phi'_t(t,Y_t)
\Bigr)^{-\alpha(2+\delta)} \Bigr]<\infty.
\end{eqnarray*}
This condition can be checked for specific strategies and specific
parametric L\'evy models using the explicit formulas
for the hedging strateigies given in
\cite{kallsen.hubalek.al.06,broden.tankov.09}, but these computations
are out of scope of the present paper.
\end{remark}

%re7 #&#
\begin{remark}
When $\beta=0$ and $\alpha\to2$, we find that
the optimal size of the rebalancing interval is proportional to
the square root of $\frac{\partial\phi(t,Y_t)}{\partial
Y}$ (the gamma), which is consistent with the results of Fukasawa
\cite{fukasawa.09b}, quoted in the \hyperref[sec1]{Introduction}.
\end{remark}
\end{example}

%ex3 #&#
\begin{example}[(Discretization of Merton's portfolio strategy)]
A widely popular portfolio strategy, which was shown by Merton
\cite{merton} to be optimal in the context of power
utility maximization, is the so called constant proportion strategy,
which consists of investing a fixed fraction of one's wealth into
the risky asset. Since the price of the risky asset evolves with time,
the number of units which corresponds to a given proportion varies,
and in practice the strategy must be discretized. Given the importance
of this strategy in applications, it is of interest to compute the
asymptotically optimal discretization rule in this setting.

Assuming zero interest rate, the value
$V_t$ of a portfolio which invests a proportion $\pi$ of the wealth
into the risky asset $Y$ and the rest into the risk-free bank account has
the dynamics
%
%e30 #&#
%
\begin{equation}
V_T = V_0 +\int_0^T
\pi V_{t-} \frac{dY_t}{Y_{t-}}= V_0 + \int
_0^T X_{t-} \,dY_t\qquad
\mbox{with } X_t = \pi\frac{V_t}{Y_t}.\label{merton.eq}
\end{equation}
The following result provides the asymptotically optimal
discretization rule for this integral.

%pr5 #&#
\begin{proposition}\label{merton.prop}
Assume that $U \subset (-\frac{1}{\pi},\infty )$ if $\pi>1$ and
$U\subset (-1, -\frac{1}{\pi} )$ if $\pi<0$. Then the strategy
given by
\[
\underline{a}_t = \overline{a}_t = c
V_t^{{\alpha}/{(2+\alpha-\beta)}} Y_t^{-{(2+\alpha)}/{(2+\alpha-\beta)}}
\]
is asymptotically optimal for the integral \eqref{merton.eq}.
\end{proposition}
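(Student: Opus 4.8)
The plan is to follow the route used in the proof of Proposition~\ref{levy.prop}: represent $X$ as the solution of an SDE of type \eqref{sde.eq} driven by the Poisson measure $N$ of $Z$, apply Proposition~\ref{sde.prop} to read off $\lambda$, and then invoke Proposition~\ref{sym.prop}. First I would note that $V_t = V_0 + \int_0^t \pi V_{s-}\,dZ_s = V_0\,\mathcal E(\pi Z)_t$, the stochastic exponential of $\pi Z$, and $Y_t = Y_0\,\mathcal E(Z)_t$, so $X_t = \pi V_t/Y_t$. Since $U\subset(-1,\infty)$ we have $1+\Delta Z_s>0$, hence $Y>0$; and the hypothesis $U\subset(-1/\pi,\infty)$ for $\pi>1$, resp.\ $U\subset(-1,-1/\pi)$ for $\pi<0$, is exactly what forces $1+\pi\Delta Z_s>0$, hence $V>0$. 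At a jump time $X_t/X_{t-} = (1+\pi\Delta Z_t)/(1+\Delta Z_t)$, so $\Delta X_t = X_{t-}\,g(\Delta Z_t)$ with $g(z):=(\pi-1)z/(1+z)$. Applying Yor's formula to the ratio of stochastic exponentials $\mathcal E(\pi Z)/\mathcal E(Z)$ (equivalently, It\^o's formula to $\pi V_t/Y_t$) then yields
\[
X_t = X_0 + \int_0^t b_s\,ds + \int_0^t\!\!\int_{|z|\le1}\!\gamma_s(z)\,\tilde N(ds,dz) + \int_0^t\!\!\int_{|z|>1}\!\gamma_s(z)\,N(ds,dz),
\]
with $\gamma_s(z) = X_{s-}\,g(z)$ and drift $b_s = w_0\,X_{s-}$, where $w_0$ is a constant depending only on $\pi$ and $\bar\nu$ (finite since $\bar\nu$ has compact support).

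Next I would verify the hypotheses of Proposition~\ref{sde.prop}. The conditions on $\bar\nu$ are part of the standing assumptions of this section. The function $g$ is $C^\infty$ near $U\subset(-1,\infty)$, $g(0)=0$, and $g'(z)=(\pi-1)(1+z)^{-2}$, $g''(z)=-2(\pi-1)(1+z)^{-3}$ are bounded on $U$; crucially,
\[
\gamma_s'(z) = X_{s-}\,g'(z) = \frac{\pi(\pi-1)}{(1+z)^2}\,\frac{V_{s-}}{Y_{s-}} > 0,
\]
because $\pi(\pi-1)>0$ under the standing restriction $\pi>1$ or $\pi<0$ and because $V,Y>0$. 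Finally, since $V,Y$ are strictly positive c\`adl\`ag processes, $X_{s-}=\pi V_{s-}/Y_{s-}$ is bounded above and below away from $0$ along an increasing sequence of stopping times $\tau_n\uparrow T$, which provides the local bounds on $b_s$, $\gamma_s'$ and $\gamma_s''$ required in \eqref{extracond}. Proposition~\ref{sde.prop} then gives $(HX)$ and $(HX^\rho_{\mathrm{loc}})$ for all $\rho\ge1$ with $\lambda_t = \gamma_t'(0)^\alpha = (\pi(\pi-1))^\alpha (V_t/Y_t)^\alpha$, while $(HY)$ holds with $A_t = Y_t^2\int z^2\,\bar\nu(dz)$ as already observed in this section.

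It then remains to apply Proposition~\ref{sym.prop}. With the above $A_t$ and $\lambda_t$,
\[
\Bigl(\frac{\lambda_t}{A_t}\Bigr)^{\frac{1}{2+\alpha-\beta}} = \Bigl(\frac{(\pi(\pi-1))^\alpha}{\int z^2\,\bar\nu(dz)}\Bigr)^{\frac{1}{2+\alpha-\beta}} V_t^{\frac{\alpha}{2+\alpha-\beta}}\,Y_t^{-\frac{2+\alpha}{2+\alpha-\beta}},
\]
and absorbing the constant prefactor into $c$ gives the announced rule. The integrability conditions of Proposition~\ref{sym.prop} should hold essentially for free: $\lambda_t/A_t$ is a constant times $V_t^\alpha Y_t^{-\alpha-2}$, the process $\widehat K_t$ produced by Proposition~\ref{sde.prop} is dominated by a constant times a power of $V_t/Y_t$, and $b_t=w_0 X_{t-}$; since $Z$ has bounded jumps, $V=V_0\mathcal E(\pi Z)$, $Y=Y_0\mathcal E(Z)$ and their reciprocals are exponentials of L\'evy processes with bounded jumps, so they and their suprema over $[0,T]$ (via Doob's inequality) have finite moments of every order, making each expectation in those conditions finite.

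The step I expect to be the main obstacle is the first one: writing $X=\pi V/Y$ in the form \eqref{sde.eq} with the correct jump coefficient $\gamma_s(z)$ and a genuinely locally bounded drift, and in particular controlling the sign of $\gamma_s'(0)$. This is exactly where the restrictions $\pi>1$ or $\pi<0$ (which make $\pi(\pi-1)>0$) and the support conditions on $U$ (which keep $V$ strictly positive, so that $X_{s-}$ does not change sign) are needed --- without them Proposition~\ref{sde.prop} does not apply. The remaining work --- the verification of \eqref{extracond}, the substitution into Proposition~\ref{sym.prop}, and the moment bounds --- is routine given that $Z$ has bounded jumps.
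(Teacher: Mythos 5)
Your proposal is correct and follows essentially the same route as the paper: It\^o/Yor's formula to put $X=\pi V/Y$ in the form \eqref{sde.eq} with $\gamma_s(z)=(\pi-1)X_{s-}z/(1+z)$, Proposition \ref{sde.prop} to get $(HX)$ and $(HX^\rho_{\mathrm{loc}})$ with $\lambda_t=|(\pi-1)X_t|^\alpha$, and the ``stochastic exponentials of L\'evy processes with bounded jumps have all moments'' argument to verify the integrability conditions of Proposition \ref{sym.prop}. The only cosmetic difference is that the paper handles the sign of $(\pi-1)X_s$ by passing to $-X$ when necessary rather than by restricting to $\pi(\pi-1)>0$ at the outset.
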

\begin{pf}
Applying the It\^o's formula, we find the dynamics of the integrator $X$,
\[
X_t = X_0 + (\pi-1) \int_0^t
\int_{U} \frac{X_{s-}z}{1+z} \tilde N(ds\times dz) + (1-\pi)
\int_0^t\int_{U}
\frac{ X_{s-}z^2}{1+z}\nu(dz) \,ds.
\]
Hence, $X$ can be written in the form of \eqref{sde.eq} with
\[
\gamma_s(z) = \frac{(\pi-1)X_{s-}z}{1+z}\quad \mbox{and}\quad \bar b_s
= (1-\pi) X_{s} \int_{\mathbb R} \biggl\{
\frac{z^2}{1+z} 1_{|z|\leq
1} + z1_{|z|>1} \biggr\} \nu(dz).
\]
Under the assumption of this proposition, the process $X$ does not
change sign, and we can assume without loss of generality that
$(\pi-1)X_s$ is always positive (otherwise all the computations can be
done for the process $-X$). Since $X$ is a stochastic exponential of a
L\'evy process with bounded jumps, it is locally bounded, which means
that by Proposition \ref{sde.prop}, $X$ satisfies the assumption
$(\mathit{HX})$ with
\[
\lambda_t = \gamma'_t(0)^\alpha=
\bigl|(\pi-1)X_{t-}\bigr|^\alpha
\]
and the assumption $(\mathit{HX}^{\rho}_\mathrm{loc})$ for all $\rho\geq
1$. Moreover, since the compensator of the jump measure of $X$ is
absolutely continous with respect to the Lebesgue measure (in time),
we can take $\lambda_t = |(\pi-1)X_{t}|^\alpha$. Also, one can choose
$\widehat K_t = C X_t$ for $C$ sufficiently large in condition
\eqref{hxbnd}.

To check the integrability conditions in Proposition \ref{sym.prop},
observe that the processes $A_t$, $\lambda_t$, $\widehat K_t$ and $b_t$
appearing in these conditions, are powers of stochastic exponentials of
L\'evy
processes with bounded jumps. They can therefore be represented as
ordinary exponentials of (other) L\'evy processes with bounded jumps,
but an exponential of a L\'evy process with bounded jumps admits all
moments, and its maximum on $[0,T]$ also admits all moments; see
Theorem 25.18 in \cite{sato}. Therefore, the integrability conditions
in Proposition \ref{sym.prop} follow by using the Cauchy--Schwarz
inequality, and the proof is completed by an application of this proposition.
\end{pf}
\end{example}
}
%s4 #&#
\section{Proof of Theorem \protect\texorpdfstring{\ref{err.thm}}{1}}\label{proofthm}
\emph{Step} 1. \textit{Reduction to the case of bounded coefficients}.
In the
proofs of Theorems \ref{err.thm} and \ref{cost.thm}, we will replace
the {local boundedness and integrability assumptions of these theorems}
with the following stronger one:
\begin{longlist}[($H'_\rho$)]
\item[($H'_\rho$)] There exists a constant {$B>0$ such that $\frac{1}{B}
\leq\lambda_t, \underline a_t, \overline a_t \leq B$, $|A_t| +
|b_t| + |\widehat K_t|\leq B$ for $0\leq t \leq T$. There exists a L\'evy
measure $\nu(dx)$ such that, almost surely for all $t$, the kernel
$\mu_t(dz)$ is absolutely continuous with respect to $\lambda_t
\nu(dz)$: $\mu_t(dz) = K_t(dz) \lambda_t \nu(dz)$ for a random
function $K_t(z)>0$.} Moreover the process $(Z_t)$ defined by
%
%e31 #&#
%
\begin{equation}
Z_t = \mathcal E \biggl(\int_0^{\cdot}
\bigl( \bigl(K_s(z) \bigr)^{-1}-1 \bigr) (M -
\mu)\label{girsdens.eq} (ds\times dz) \biggr)_t,
\end{equation}
is a martingale and satisfies
\[
E^Q\Bigl[\sup_{0\leq t \leq T} |Z_t|^{-\rho}
\Bigr] <\infty \quad\mbox{and}\quad E\Bigl[\sup_{0\leq t \leq T} Z_t
\Bigr] <\infty,
\]
where $Q$ is the probability measure defined by
\[
\frac{dQ}{dP}\Big|_{\mathcal F_T}:= Z_T.
\]
\end{longlist}
Indeed, we have the following lemma.\vadjust{\goodbreak}
%
%le2 #&#
\begin{lemma}\label{loc.lm}
Assume that \eqref{err.eq} holds under the assumptions {$(\mathit{HY})$, $(\mathit{HX})$}
and $(H'_1)$. Then Theorem \ref{err.thm} holds.
\end{lemma}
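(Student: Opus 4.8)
The plan is to prove Lemma \ref{loc.lm} by a standard localization argument, reducing the general case to the case of bounded coefficients covered by assumption $(H'_1)$.

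\paragraph{Approach.} The strategy is to use the increasing sequence of stopping times $(\tau_n)$ furnished by assumptions $(HX^1_{\mathrm{loc}})$ and $(HA_{loc})$ (taking a common subsequence), together with a further localization to handle the global integrability hypothesis $(HA)$. On each event $\{\tau_n = T\}$ (or, more precisely, on the stochastic interval $[0,\tau_n]$), the coefficients $\lambda_t$, $\underline a_t$, $\overline a_t$, $A_t$, $b_t$, $\widehat K_t$ are all bounded above and below by a constant $C_n$, so the modified process obtained by freezing the coefficients after $\tau_n$ satisfies $(H'_1)$ — one must also check that the change-of-measure density $Z_t$ in \eqref{girsdens.eq}, stopped at $\tau_n$, is a genuine martingale with the required integrability of $\sup |Z_t|^{-1}$ and $\sup Z_t$; this follows because after $\tau_n$ one replaces $K_s(z)$ by $1$, so $Z$ becomes constant, and on $[0,\tau_n]$ the integrability bound \eqref{integrK} with $\rho = 1$ controls the relevant exponential moments.

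\paragraph{Key steps.} First I would fix $n$ and define auxiliary processes $X^{(n)}$ and $Y^{(n)}$ which coincide with $X$ and $Y$ on $[0,\tau_n]$ and have bounded, well-behaved coefficients afterwards (e.g.\ drift zero, compensator $\lambda_t \nu(dz)$ frozen, $A_t$ frozen at its value at $\tau_n$), and likewise truncate the barriers $\underline a, \overline a$. One checks that $(X^{(n)}, Y^{(n)}, \underline a^{(n)}, \overline a^{(n)})$ satisfies $(HY)$, $(HX)$ and $(H'_1)$, so by hypothesis \eqref{err.eq} holds for this truncated system. Second, I would argue that the error functional $\mathcal E(\varepsilon)$ for the original system and $\mathcal E^{(n)}(\varepsilon)$ for the truncated one differ only through what happens after $\tau_n$; using the local-martingale form $\mathcal E(\varepsilon) = E[\int_0^T (X_t - X_{\eta^\varepsilon(t)})^2 A_t\, dt]$ and the fact that the two systems agree on $[0,\tau_n]$, the difference is supported on $\{\tau_n < T\}$ and is bounded, thanks to $(HA)$, by a quantity of the form $E[\mathbf 1_{\{\tau_n < T\}} \sup_{s} \max(\underline a_s, \overline a_s)^2 \int_0^T A_t\, dt]$, which is $o(\varepsilon^2 \cdot (\text{const}))$ — more precisely, of order $\varepsilon^2$ times a constant $\delta_n$ with $\delta_n \to 0$ as $n \to \infty$ by dominated convergence, since $\mathbf 1_{\{\tau_n < T\}} \to 0$ a.s. and the integrable envelope from $(HA)$ dominates. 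Third, one must verify that the right-hand side $E[\int_0^T A_t \frac{f(\underline a_t, \overline a_t)}{g(\underline a_t, \overline a_t)}\, dt]$ is likewise approximated by its truncated analogue; this again follows from $(HA)$ and the scaling bounds on $f/g$ (namely $f(\underline a, \overline a)/g(\underline a, \overline a) \leq C \max(\underline a, \overline a)^2$, obtainable from the stable scaling in the appendix). Letting $\varepsilon \to 0$ for fixed $n$ and then $n \to \infty$ gives the claim via a standard $\varepsilon$--$n$ interchange.

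\paragraph{Main obstacle.} The delicate point is the uniformity of the error estimates in $\varepsilon$: one needs that, on the complement $\{\tau_n < T\}$, the contribution to $\varepsilon^{-2}\mathcal E(\varepsilon)$ from the un-truncated tail is bounded \emph{uniformly in $\varepsilon \in (0,1]$} by an integrable random variable times a deterministic constant, so that the two limits can be interchanged. This is where assumption $(HA)$ enters crucially: it provides exactly the integrable dominating envelope $\sup_{s \leq T} \max(\underline a_s, \overline a_s)^2 \int_0^T A_t\, dt$ needed to control the tail contribution uniformly, via the elementary pathwise bound $(X_t - X_{\eta^\varepsilon(t)})^2 \leq \varepsilon^2 \max(\underline a_{\eta^\varepsilon(t)}, \overline a_{\eta^\varepsilon(t)})^2$ up to an overshoot term that is itself controlled by a localization lemma of the type of Lemma \ref{overshoot.lm}. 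Once this uniform domination is in place, the rest is a routine diagonal argument.
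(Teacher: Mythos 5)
Your proposal is correct and follows essentially the same route as the paper: localize via the stopping times from $(HX^1_{\mathrm{loc}})$ and $(HA_{loc})$, freeze the coefficients (and set $K\equiv 1$) after the stopping time so that $(H'_1)$ holds for the truncated system, apply the bounded-coefficient result, and control both the tail of $\varepsilon^{-2}\mathcal E(\varepsilon)$ and of the limit expression uniformly in $\varepsilon$ by the integrable envelope $\sup_s\max(\underline a_s,\overline a_s)^2\int_0^T A_t\,dt$ from $(HA)$, then interchange the limits. Two small remarks: no overshoot control is needed for the error functional since $(X_t-X_{\eta^\varepsilon(t)})^2\le\varepsilon^2\max(\underline a_{\eta^\varepsilon(t)},\overline a_{\eta^\varepsilon(t)})^2$ pathwise for $t$ strictly before the next exit time, and the integrability of $\sup_t Z_t$ and $\sup_t Z_t^{-1}$ required by $(H'_1)$ is not a direct consequence of \eqref{integrK} but is obtained in the paper by a further stopping at the level sets of $Z$ (and its reciprocal under the new measure), a refinement you should add to make the localization airtight.
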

\begin{pf}
First, observe that for every $n$,
\begin{eqnarray*}
&&E \biggl[ \biggl\{\int_0^{\tau_n} \int
_{\mathbb R} \bigl( \bigl(K_s(z) \bigr)^{-1}-1
\bigr)^2 M(ds\times dz) \biggr\}^{{1}/{2}} \biggr]
\\
&&\qquad\leq E \biggl[ \biggl\{\int_0^{\tau_n} \int
_{|K_s(z)^{-1}-1|\leq
{1}/{2}} \bigl( \bigl(K_s(z) \bigr)^{-1}-1
\bigr)^2 M(ds\times dz) \biggr\}^{{1}/{2}} \biggr]
\\
&&\qquad\quad{} + E \biggl[ \biggl\{\int_0^{\tau_n} \int
_{|K_s(z)^{-1}-1|>{1}/{2}} \bigl( \bigl(K_s(z) \bigr)^{-1}-1
\bigr)^2 M(ds\times dz) \biggr\}^{{1}/{2}} \biggr].
\end{eqnarray*}
Using the Cauchy--Schwarz inequality for the first term and the fact
that the second integral is a countable sum together with Proposition
II.1.28 in \cite{jacodshiryaev} for the second term, we see that this
last expression is finite since by assumption {$(\mathit{HX}^1_{\mathrm{loc}})$},
\begin{eqnarray*}
&&E \biggl[\int_0^{\tau_n} \int_{|K_s(z)^{-1}-1|\leq{1}/{2}}
\bigl( \bigl(K_s(z) \bigr)^{-1}-1 \bigr)^2
\mu(ds\times dz) \biggr]^{{1}/{2}}
\\
&&\qquad{} + E \biggl[\int_0^{\tau_n} \int
_{|K_s(z)^{-1}-1|>{1}/{2}}\bigl | \bigl(K_s(z) \bigr)^{-1}-1 \bigr|
\mu(ds\times dz) \biggr] < \infty.
\end{eqnarray*}
This implies that the process
\[
L_t = \int_0^t\int
_{\mathbb R} \bigl( \bigl(K_s(z) \bigr)^{-1}-1
\bigr) (M-\mu ) (ds\times dz)
\]
is a local martingale and satisfies $E[[L]^{{1}/{2}}_{T\wedge\tau_n}]
<\infty$ for every $n$; see Definition~II.1.27 in \cite{jacodshiryaev}.
The process $Z_t:= \mathcal E(L)_t$ is then
also well defined, and we take $\sigma_n:= \tau_n \wedge\inf\{t\dvtx Z_t \geq n\}$. Then
\begin{eqnarray*}
\sup_{0\leq t\leq T} Z_{t\wedge\sigma_n} &\leq& n + \bigl|\Delta
Z_{\sigma_n}\bigr|1_{\sigma_n \leq T} \leq n + [Z]^{{1}/{2}}_{\sigma_n
\wedge T} =
n + \biggl(\int_0^{\sigma_n \wedge T} Z_{t-}^2
d[L]_t \biggr)^{{1}/{2}}
\\
&\leq& n + n [L]^{{1}/{2}}_{\sigma_n \wedge T},
\end{eqnarray*}
the last term being integrable. Therefore, we can define a new probability
measure~$Q^n$ via
\[
\frac{dQ^n}{dP}\Big|_{\mathcal F_t} = Z_{t\wedge\sigma_n}.
\]
By Girsanov's theorem (Theorem III.5.24 in \cite{jacodshiryaev}),
$M$ is a random measure with predictable compensator $\mu^{Q^n}:= dt
\times
\lambda_t \nu(dz)$ under $Q^n$ on
$\{t\leq\sigma_n\}$ and
\[
Z^{-1}_{t\wedge\sigma_n} = \mathcal E \biggl(\int_0^\cdot
\bigl(K_s(z)-1\bigr) \bigl(M-\mu^{Q^n}\bigr) (ds\times dz)
\biggr)_{t\wedge\sigma_n}.
\]
Therefore, by similar arguments to above, we can find an increasing
sequence of stopping times $(\gamma_n)$ with $\gamma_n \to T$ and
such that both
\[
E\Bigl[\sup_{0\leq t \leq T} Z_{t\wedge\gamma_n}\Bigr] <\infty\quad \mbox{and}\quad
E^{Q^n}\Bigl[\sup_{0\leq t \leq T} Z^{-1}_{t\wedge\gamma
_n}
\Bigr] <\infty.
\]
Now we define $Y^n_t = Y_{t\wedge\gamma_n}$ and $X^n$ via
equation \eqref{defX.eq} replacing the coefficients $\lambda_t$,
$b_t$ and $K_t(z)$ with $\lambda^n_t:= \lambda_{t\wedge
\gamma_n}$, $b^n_t:= b_{t\wedge\gamma_n}$ and $K^n_t(z) = K_t(z)
1_{t\leq\gamma_n} + 1_{t>\gamma_n}$. Moreover, we define
$\underline a^n_t:= \underline a_{t\wedge\gamma_n}$, $\overline
a^n_t:= \overline a_{t\wedge\gamma_n}$. The stopping times
$T_i^{\varepsilon,n}$ and $\eta^n(t)$ are defined similarly. Note that
$A^n_t:= A_t 1_{t\leq
\gamma_n}$ satisfies $\int_0^t A^n_s \,ds = \langle
Y^n\rangle_t $, that $X^n$ coincides with $X$ on
the interval $[0,\gamma_n]$ and that the new coefficients satisfy
assumption $(H'_1)$. Consequently,
\begin{eqnarray*}
\lim_{\varepsilon\downarrow0} \varepsilon^{-2} E \biggl[\int
_0^{\gamma_n}(X_t - X_{\eta(t)})^2
A_t \,dt \biggr]&=&\lim_{\varepsilon\downarrow0} \varepsilon^{-2}
E \biggl[ \biggl(\int_0^{T}
\bigl(X^n_t - X^n_{\eta^n(t)}
\bigr)^2 \,dY^n_t \biggr)^2 \biggr]
\nonumber
\\
&=& E \biggl[\int_0^{T} {A^n_t}
\frac{f(\underline a^n_t, \overline a^n_t)}{g(\underline a^n_t,
\overline
a^n_t)}\,dt \biggr] \\
&= &E \biggl[\int_0^{\gamma_n}
{A_t} \frac{f(\underline a_t, \overline a_t)}{g(\underline a_t, \overline
a_t)} \,dt \biggr],%\label{doublelim}
\end{eqnarray*}
which implies, by assumption $(\mathit{HA})$, that
\[
E \biggl[\int_0^{\gamma_n} {A_t}
\frac{f(\underline a_t, \overline a_t)}{g(\underline a_t, \overline
a_t)} \,dt \biggr] \leq E \biggl[\sup_{0\leq s \leq T}\max(
\underline a_s,\overline a_s)^2\int
_0^{T} A_t \,dt \biggr]<+\infty,
\]
and so by Fatou's lemma,
\[
E \biggl[\int_0^{T} {A_t}
\frac{f(\underline a_t, \overline a_t)}{g(\underline a_t, \overline
a_t)} \,dt \biggr] \leq E \biggl[\sup_{0\leq s \leq T}\max(
\underline a_s,\overline a_s)^2\int
_0^{T} A_t \,dt \biggr]<+\infty.
\]
Therefore, by dominated convergence
\[
\lim_n E \biggl[\int_{\gamma_n}^T
{A_t} \frac{f(\underline a_t, \overline a_t)}{g(\underline a_t, \overline
a_t)} \,dt \biggr] = 0.
\]
On the other hand,
\[
\varepsilon^{-2} E\int_{\gamma_n}^T
(X_t - X_{\eta(t)})^2 A_t \,dt \leq E
\biggl[\sup_{0\leq s \leq T}\max(\underline a_s,\overline
a_s)^2\int_{\gamma_n}^T
A_t \,dt \biggr].
\]
The right-hand side does not depend on $\varepsilon$ and converges to
zero as $n\to\infty$ by the
dominated convergence theorem. Therefore, the left-hand side can be made
arbitrarily small independently of $\varepsilon$, and the result follows.
\end{pf}

\textit{Step} 2. \textit{Change of probability measure}.
We first prove the following important lemma.
%
%le3 #&#
\begin{lemma}\label{supzero}
Under the assumption $H'_{1}$, almost surely,
\[
\lim_{\varepsilon\to0} \sup_{i: T^\varepsilon_i \leq T }
\bigl(T^\varepsilon_{i+1}-T^\varepsilon_i\bigr) =
0.
\]
\end{lemma}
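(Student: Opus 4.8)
The plan is to argue by contradiction and reduce the statement to the fact that, under $(H'_1)$, the process $X$ almost surely has no nondegenerate interval of constancy. Throughout I assume, as one may without loss of generality, that the coefficients appearing in $(H'_1)$ are defined and satisfy $(H'_1)$ on $[0,T+1]$ rather than just on $[0,T]$, so that all the exit times $T^\varepsilon_{i+1}$ with $T^\varepsilon_i\le T$ that matter live on $[0,T+1]$.

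First I would establish that, $P$-a.s., for every pair of rationals $0\le s<t\le T+1$ the process $X$ has at least one jump on $(s,t]$; since a jump on $(s,t]$ forces $X$ not to be constant on $[s,t]$, this immediately gives the absence of intervals of constancy. Fix such $s<t$. By Girsanov's theorem, exactly as in the proof of Lemma~\ref{loc.lm}, under the equivalent measure $Q$ with $dQ/dP|_{\mathcal F_T}=Z_T$ the jump measure $M$ of $X$ has predictable compensator $\mu^Q(du\times dz)=\lambda_u\,du\,\nu(dz)$. For $x>0$ set $N^x_r:=M\big((s,s+r]\times\{|z|>x\}\big)$, the number of jumps of $X$ of absolute size $>x$ on $(s,s+r]$. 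Since $\nu(\{|z|>x\})<\infty$ ($\nu$ being a L\'evy measure) and $\lambda\le B$ by $(H'_1)$, the compensator $B^x_r:=\nu(\{|z|>x\})\int_s^{s+r}\lambda_u\,du$ is bounded and $N^x-B^x$ is a true $Q$-martingale. Applying optional stopping at $\tau\wedge t$, where $\tau:=\inf\{u>s:|\Delta X_u|>x\}$, and using that the counting process $N^x$ jumps only by one so that $N^x_{(\tau\wedge t)-s}\le 1$, I obtain
\[
\nu(\{|z|>x\})\,E^Q\!\Big[\textstyle\int_s^{\tau\wedge t}\lambda_u\,du\Big]\ \le\ 1 .
\]
On $A_{s,t}:=\{X\text{ has no jump on }(s,t]\}$ one has $\tau>t$, and by the lower bound $\lambda\ge 1/B$ in $(H'_1)$, $\int_s^{\tau\wedge t}\lambda_u\,du=\int_s^t\lambda_u\,du\ge (t-s)/B$; hence $Q(A_{s,t})\le B\big/\big((t-s)\,\nu(\{|z|>x\})\big)$. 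Finally $\nu(\{|z|>x\})\ge \nu((x,\infty))\vee\nu((-\infty,-x))\to\infty$ as $x\downarrow 0$, because by $(HX)$ and \eqref{halpha} one has $x^\alpha\nu((x,\infty))\to c_+$ and $x^\alpha\nu((-\infty,-x))\to c_-$ with $c_++c_->0$. Letting $x\downarrow 0$ gives $Q(A_{s,t})=0$, hence $P(A_{s,t})=0$ since $P\sim Q$, and a union over rational $s<t$ concludes this step.

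Next I would deduce the lemma. Suppose its conclusion fails; then there exist $\delta\in(0,1)$, a sequence $\varepsilon_n\downarrow 0$ and indices $i_n$ with $s_n:=T^{\varepsilon_n}_{i_n}\le T$ and $T^{\varepsilon_n}_{i_n+1}-s_n>2\delta$. By the definition \eqref{discrule} of the discretization rule, on $[s_n,T^{\varepsilon_n}_{i_n+1})$ the process $X$ stays in $\big(X_{s_n}-\varepsilon_n\underline a_{s_n},\,X_{s_n}+\varepsilon_n\overline a_{s_n}\big)$, an interval of length $\le 2\varepsilon_n B$ by $(H'_1)$; since $[s_n,s_n+\delta]\subset[s_n,T^{\varepsilon_n}_{i_n+1})$, the oscillation of $X$ on $[s_n,s_n+\delta]$ is $\le 2\varepsilon_n B$. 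Passing to a subsequence, $s_n\to s_\infty\in[0,T]$; choose $a,b$ with $s_\infty<a<b<s_\infty+\delta$, noting $s_\infty+\delta<T+1$ so that $[a,b]\subset[0,T+1]$. For $n$ large, $s_n<a$ and $s_n+\delta>b$, hence $[a,b]\subset[s_n,s_n+\delta]$ and the oscillation of $X$ on $[a,b]$ is $\le 2\varepsilon_n B\to 0$; thus $X$ is constant on $[a,b]$, contradicting the first step.

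The main obstacle is the first step: upgrading ``$X$ jumps on $(s,t]$ with positive probability'' (a soft consequence of the infinite total mass of the compensator near zero) to the almost sure statement, uniformly over a dense family of intervals. This is exactly why the argument is routed through $Q$: under $P$ the compensator $\mu_t(dz)$ is random and need not be uniformly (in $t$) comparable to the stable kernel along a fixed deterministic interval, whereas under $Q$ it equals $\lambda_t\,dt\,\nu(dz)$ with $\nu$ a fixed infinite-activity L\'evy measure and $\lambda_t\ge 1/B$, which reduces the bound on $Q(A_{s,t})$ to the elementary optional-stopping estimate above. The remaining ingredients---the oscillation characterization of intervals of constancy and the compactness argument in the last step---are routine.
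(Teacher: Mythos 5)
Your proof is correct and follows essentially the same route as the paper: argue by contradiction, extract convergent subsequences of the interval endpoints, and use the oscillation bound $2B\varepsilon_n$ on $[T^{\varepsilon_n}_{i_n},T^{\varepsilon_n}_{i_n+1})$ to force $X$ to be constant on a fixed nondegenerate interval. The only difference is that the paper dismisses this constancy with the one-line remark that $X$ is an infinite activity process, whereas you supply a complete justification of that step (Girsanov change to $Q$ so that the compensator becomes $\lambda_t\,dt\,\nu(dz)$ with $\lambda_t\geq 1/B$ and $\nu$ of infinite mass near zero, then optional stopping of the compensated jump-counting process at the first large jump), which correctly fills in the detail the paper leaves implicit.
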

\begin{pf}
In this proof, let us fix $\omega\in\Omega$. By way of contradiction,
assume that there exists a constant $C>0$ and a sequence
$\{\varepsilon_n\}_{n\geq0}$ converging to zero such that for every
$n$, there exists $i(n)$ with $T^{\varepsilon_n}_{i(n)+1} -
T^{\varepsilon_n}_{i(n)} > C$. From the sequences
$\{T^{\varepsilon_n}_{i(n)+1}\}_{n}$ and
$\{T^{\varepsilon_n}_{i(n)}\}_n$ we can extract two subsequences
$\{T^{\varepsilon_{\phi(n)}}_{i(\phi(n))+1}\}_{n}$ and
$\{T^{\varepsilon_{\phi(n)}}_{i(\phi(n))}\}_n$ converging to some
limiting values $T_1< T_2$. For $n$ big enough, there exists a nonempty
interval $\mathcal I$ which is a subset of both $(T_1,T_2)$ and
$(T^{\varepsilon_{\phi(n)}}_{i(\phi(n))+1},T^{\varepsilon_{\phi
(n)}}_{i(\phi(n))})$. Now
using that $\sup_{t,s \in
(T^{\varepsilon_{\phi(n)}}_{i(\phi(n))+1},T^{\varepsilon_{\phi
(n)}}_{i(\phi(n))})}
|X_t-X_s| \leq2B\varepsilon_{\phi(n)}$, we obtain that $\sup_{s,t\in
\mathcal I} |X_t-X_s|=0$, which cannot hold since $X$ is an
infinite activity process.
\end{pf}

Let $\Delta T_{i+1}=T_{i+1}\wedge T - T_i\wedge T$. The
goal of this step is to show that
%
%e32 #&#
%
\begin{eqnarray}\label{lim0}
&&\lim_{\varepsilon\downarrow0} \varepsilon^{-2} E \biggl[\int
_0^{T}(X_t - X_{\eta(t)})^2
A_t \,dt \biggr]
\nonumber
\\[-8pt]
\\[-8pt]
\nonumber
&&\qquad= \lim_{\varepsilon\downarrow0} E^Q
\Biggl[ \sum_{i=1}^\infty
Z^{-1}_{T_i\wedge T} A_{T_i\wedge T}\varepsilon^{-2} \int
_{T_i\wedge T}^{T_{i+1}\wedge T}(X_t - X_{T_i})^2
\,dt \Biggr].
\end{eqnarray}
We have
\begin{eqnarray*}
&&\varepsilon^{-2}E \biggl[\int_0^{T}(X_t
- X_{\eta(t)})^2 A_t \,dt \biggr]\\
&&\qquad=
\varepsilon^{-2}\sum_{i=0}^{+\infty}E
\biggl[\int_{T_i\wedge
T}^{T_{i+1}\wedge T}(X_t -
X_{T_i})^2 (A_t-A_{T_i}) \,dt \biggr]
\\
&&\qquad\quad{}+\varepsilon^{-2}\sum_{i=0}^{+\infty}E^Q
\biggl[Z^{-1}_{T_{i+1}\wedge
T}A_{T_i}\int_{T_i\wedge T}^{T_{i+1}\wedge T}(X_t
- X_{T_i})^2\,dt \biggr].
\end{eqnarray*}
Since for $t\in[T_i,T_{i+1})$, $(X_t - X_{T_i})^2\leq B^2\varepsilon
^{2}$, using the boundedness of $A$, \eqref{lim0}~will follow, provided
we show that
%
%e33 #&#
%
\begin{equation}
\label{lim1} \lim_{\varepsilon\downarrow0}\sum_{i=0}^{+\infty}E
\biggl[\int_{T_i\wedge T}^{T_{i+1}\wedge T}|A_t-A_{T_i}|\,dt
\biggr]=0
\end{equation}
and
%
%e34 #&#
%
\begin{equation}
\label{lim2}\lim_{\varepsilon\downarrow0}\sum_{i=0}^{+\infty}E^Q
\bigl[\bigl|Z^{-1}_{T_{i+1}\wedge T}-Z^{-1}_{T_{i}\wedge
T}\bigr|\Delta
T_{i+1} \bigr]=0.
\end{equation}
Limit \eqref{lim1} follows from the dominated convergence theorem ($A$
is bounded by assumption $(H'_1)$ and $A_{\eta(t)}\to A_t$ almost
everywhere on
$[0,T]$ since
$A$ is c\`adl\`ag and by Lemma \ref{supzero}).
Using the fact that $Z^{-1}$ has finite quadratic variation together
with Lemma \ref{supzero} and the Cauchy--Schwarz inequality, we get that,
in probability,
\[
\lim_{\varepsilon\downarrow0}\sum_{i=0}^{+\infty
}\bigl|Z^{-1}_{T_{i+1}\wedge T}-Z^{-1}_{T_{i}\wedge T}\bigr|
\Delta T_{i+1}=0.
\]
Then \eqref{lim2} follows from the integrability of
$\mbox{sup}_{t\in[0,T]}Z_t^{-1}$, which is part of assumption
$(H'_1)$.

\textit{Step} 3.
First, observe that by the dominated convergence theorem, since $\sup_i
\Delta T_i$ tends to zero, \eqref{lim0} is equal to
\begin{eqnarray}
S_1:=\lim_{\varepsilon\downarrow0} S_1^\varepsilon
\nonumber\\
\eqntext{\displaystyle\mbox {with } S_1^\varepsilon:=E^Q \Biggl[ \sum
_{i=0}^\infty 1_{T_i\leq T}
A_{T_i}Z^{-1}_{T_i}{\varepsilon^{-2}
E^{Q}_{\mathcal F_{T_i}} \biggl[\int_{T_i}^{T_{i+1}}(X_t
- X_{T_i})^2 \,dt \biggr]} \Biggr].}
\end{eqnarray}
For this expression to be well defined we extend the processes
$\lambda$, $b$, $\underline a$, $\overline a$ by arbitrary
constant values beyond $T$ and define the process $X$ for $t\geq
T$ accordingly.

Define a family of continuous increasing processes
$(\Lambda_s(t))_{t\geq0}$ indexed by $s\geq0$ by $\Lambda_s(t) =
\int_s^{s+t} \lambda_r \,dr$, the family of filtrations $\mathcal
G^i_t = \mathcal F_{T_i + t}$ and of processes $(\tilde
X^i_t)_{t\geq0}$ and $(\hat X^i_t)_{t\geq0}$ by
\[
\hat X^i_t = X_{T_i + \Lambda_{T_i}^{-1}(t)} - X_{T_i} -
\int_{T_i}^{T_i + \Lambda_{T_i}^{-1}(t)} b_s \,ds,\qquad \tilde
X^i_t = X_{T_i + \Lambda_{T_i}^{-1}(t)} - X_{T_i}.
\]
The process $(\hat X^i_t)_{t\geq0}$ is a $(G^i_t)$-semimartingale
with (deterministic) characteristics $(0,\nu,0)$ under $Q$, and
therefore, it is a $(G^i_t)$-L\'evy process under $Q$ (Theorem~II.4.15 in \cite{jacodshiryaev}).

Let $\tilde\tau_i = \inf\{t\geq0\dvtx \tilde X^i_{t} \notin
[-\underline a_{T_i} \varepsilon, \overline a_{T_i}
\varepsilon]\}$. Using a change of variable formula we obtain that
\[
\int_{T_i}^{T_{i+1}}(X_t -
X_{T_i})^2\,dt = \int_0^{\tilde\tau_i}
\frac{(\tilde X_s^i)^2}{\lambda(T_i + \Lambda^{-1}_{T_i}(s))} \,ds.
\]
Using the c\`adl\`ag property of $\lambda$ together with the various
boundedness assumptions and the integrability of $\sup_{0\leq t \leq
T} Z_t^{-1}$, we easily get that
\[
S_1 = \lim_{\varepsilon\downarrow0} E^Q \Biggl[ \sum
_{i=0}^\infty 1_{T_i\leq T}
\frac{A_{T_i}Z^{-1}_{T_i}}{\lambda_{T_i}}{\varepsilon^{-2} E^{Q}_{\mathcal F_{T_i}}
\biggl[\int_{0}^{\tilde\tau_{i}} \bigl(\tilde
X_t^i\bigr)^2 \,dt \biggr]} \Biggr].
\]
Then we obviously have that
\[
S_1 = \lim_{\varepsilon\downarrow0} E^Q \Biggl[ \sum
_{i=0}^\infty 1_{T_i\leq T}
\frac{A_{T_i}Z^{-1}_{T_i}}{\lambda_{T_i}} \frac
{T_{i+1}-T_i}{E^Q_{\mathcal F_{T_i}}[T_{i+1}-T_i]}{\varepsilon^{-2}
E^{Q}_{\mathcal F_{T_i}} \biggl[\int_{0}^{\tilde\tau_{i}}
\bigl(\tilde X_t^i\bigr)^2 \,dt \biggr]}
\Biggr].
\]
Now note that
%
%e35 #&#
%
\begin{equation}
T_{i+1}-T_i = \int_0^{\tilde\tau_i}
\frac{ds}{\lambda(T_i +
\Lambda^{-1}_{T_i}(s))}. \label{deltaT}
\end{equation}
Then
\begin{eqnarray*}
&&E^Q \Biggl[ \sum_{i=0}^\infty
1_{T_i\leq T} \frac{A_{T_i}Z^{-1}_{T_i}}{\lambda_{T_i}} \frac
{T_{i+1}-T_i}{E^Q_{\mathcal F_{T_i}}[T_{i+1}-T_i]}{\varepsilon^{-2}
E^{Q}_{\mathcal F_{T_i}} \biggl[\int_{0}^{\tilde\tau_{i}}
\bigl(\tilde X_t^i\bigr)^2 \,dt \biggr]}
\Biggr]
\\
&&\qquad = E^Q \Biggl[ \sum_{i=0}^\infty
1_{T_i\leq T} {A_{T_i}Z^{-1}_{T_i}}
\frac{T_{i+1}-T_i}{E^Q_{\mathcal
F_{T_i}}[\tilde\tau_i]}{\varepsilon^{-2} E^{Q}_{\mathcal F_{T_i}}
\biggl[\int_{0}^{\tilde\tau_{i}} \bigl(\tilde
X_t^i\bigr)^2 \,dt \biggr]} \Biggr] +
R^\varepsilon
\end{eqnarray*}
with
\begin{eqnarray*}
&&\bigl|R^\varepsilon\bigr| \leq C E^Q \Biggl[\sum
_{i=0}^\infty1_{T_i\leq T} Z^{-1}_{T_i}
(T_{i+1}-T_i) \\
&&\hspace*{77pt}{}\times\biggl\llvert \frac{\lambda_{T_i}^{-1} E_{\mathcal F_{T_i}}[
\tilde\tau_i] - E_{\mathcal F_{T_i}}
[\int_0^{\tilde\tau_i} {ds}/{(\lambda(T_i +
\Lambda^{-1}_{T_i} (s)))}]}{E_{\mathcal F_{T_i}}
[\int_0^{\tilde\tau_i} {ds}/{(\lambda(T_i +
\Lambda^{-1}_{T_i} (s)))}]}\biggr
\rrvert \Biggr].
\end{eqnarray*}
Using \eqref{deltaT}, we obtain that
\begin{eqnarray*}
\bigl|R^\varepsilon\bigr| &\leq& C E^Q \Biggl[\sum
_{i=0}^\infty1_{T_i\leq T} Z^{-1}_{T_i}
\biggl\llvert {\lambda_{T_i}^{-1} E_{\mathcal F_{T_i}}[ \tilde
\tau_i] - E_{\mathcal F_{T_i}} \biggl[\int_0^{\tilde\tau_i}
\frac{ds}{\lambda(T_i +
\Lambda^{-1}_{T_i} (s))}\biggr]}\biggr\rrvert \Biggr]
\\
&\leq& C E^Q \Biggl[\sum_{i=0}^\infty1_{T_i\leq T}
Z^{-1}_{T_i} \int_0^{\tilde\tau_i}
\biggl\llvert {\frac{1}{\lambda_{T_i}} - \frac{1}{\lambda(T_i +
\Lambda^{-1}_{T_i} (s))}}\biggr\rrvert \,ds \Biggr]
\\
& \leq& C E^Q \Biggl[\sum_{i=0}^\infty1_{T_i\leq T}
Z^{-1}_{T_i} \int_{T_i}^{T_{i+1}}
\biggl\llvert {\frac{1}{\lambda_{T_i}} - \frac{1}{\lambda(s)}}\biggr\rrvert \,ds \Biggr],
\end{eqnarray*}\eject\noindent
which is easily shown to converge to zero. Consequently, we conclude that
%
%e36 #&#
%
\begin{equation}
S_1 = \lim_{\varepsilon\downarrow0} E^Q \Biggl[ \sum
_{i=0}^\infty 1_{T_i\leq T}
{A_{T_i}Z^{-1}_{T_i}} \frac{T_{i+1}-T_i}{E^Q_{\mathcal
F_{T_i}}[\tilde\tau_i]}{
\varepsilon^{-2} E^{Q}_{\mathcal F_{T_i}} \biggl[\int
_{0}^{\tilde\tau_{i}} \bigl(\tilde X_t^i
\bigr)^2 \,dt \biggr]} \Biggr].\label{S1tilde}
\end{equation}

\textit{Step} 4. \textit{Comparison of hitting times and associated
integrals}. We start with
the following lemma:
%
%le4 #&#
\begin{lemma}
Let $\kappa\in \mathbb R_+$ and $n \in\mathbb N$. Then
\[
\underline f^{\kappa,n}_\varepsilon(\underline a_{T_i},
\overline a_{T_i}) \leq E^{Q}_{\mathcal F_{T_i}} \biggl[ \biggl(
\int_0^{\tilde\tau_i} \bigl|\hat X_t^i\bigr|^\kappa
\,dt \biggr)^n \biggr] \leq \overline f^{\kappa,n}_\varepsilon(
\underline a_{T_i}, \overline a_{T_i})
\]
whenever the expression in the middle is well defined, where
$\underline f_\varepsilon$ and $\overline f_\varepsilon$ are
deterministic functions defined by
\begin{eqnarray*}
\underline f^{\kappa,n}_\varepsilon(a,b)&=& E^{Q} \biggl[
\biggl(\int_0^{\hat
\tau_1} |\hat X_t|^\kappa
\,dt \biggr)^n \biggr]\quad\mbox{and}\\
 \overline f^{\kappa,n}_\varepsilon(a,b)&=&
E^{Q} \biggl[ \biggl(\int_0^{\hat
\tau_2\wedge
\hat\tau^j} |
\hat X_t|^\kappa \,dt \biggr)^n \biggr],
\end{eqnarray*}
with $\hat X_t=\hat X_t^0$ and
\begin{eqnarray*}
\hat\tau_1 &=& \inf\bigl\{t\dvtx \hat X_t \leq-a
\varepsilon+ t B^2 \mbox{ or } \hat X_t \geq b\varepsilon- t
B^2\bigr\},
\\
\hat\tau_2 &=& \inf\bigl\{t\dvtx \hat X_t \leq-a
\varepsilon- t B^2 \mbox{ or } \hat X_t \geq b\varepsilon+ t
B^2\bigr\},
\\
\hat\tau^j &=& \inf\bigl\{t\dvtx |\Delta\hat X_t|\geq
\varepsilon(a+b) \bigr\}.
\end{eqnarray*}
\end{lemma}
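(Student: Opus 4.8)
The plan is to sandwich the exit time $\tilde\tau_i$ of the drifted process $\tilde X^i$ between the exit times $\hat\tau_1$ and $\hat\tau_2\wedge\hat\tau^j$ of the pure-jump L\'evy process $\hat X^i$, pathwise (for $Q$-a.e.\ $\omega$), and then to replace the resulting $\mathcal F_{T_i}$-conditional expectations by the deterministic functions $\underline f_\varepsilon$, $\overline f_\varepsilon$ via a freezing argument based on the independence of $\hat X^i$ from $\mathcal F_{T_i}$ under $Q$. First I would record the drift estimate: writing $D^i_t:=\tilde X^i_t-\hat X^i_t=\int_{T_i}^{T_i+\Lambda_{T_i}^{-1}(t)}b_s\,ds$, Assumption $(H'_1)$ gives $1/B\le\lambda_s\le B$, hence $\Lambda_{T_i}^{-1}(t)\le Bt$, and since $|b_s|\le B$ this yields the uniform bound $|D^i_t|\le B\,\Lambda_{T_i}^{-1}(t)\le B^2 t$ for all $t\ge 0$; moreover $D^i$ is continuous, so $\Delta\tilde X^i_t=\Delta\hat X^i_t$ for all $t$. (Choosing $B$ strictly above the essential bounds of $\lambda$, $1/\lambda$ and $b$ makes this bound strict for $t>0$, which disposes of the boundary cases below.)

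\emph{Sandwiching the exit time.} Put $a=\underline a_{T_i}$, $b=\overline a_{T_i}$, both $\mathcal F_{T_i}$-measurable. As long as $t<\hat\tau_1$ one has $-a\varepsilon+tB^2<\hat X^i_t<b\varepsilon-tB^2$, so $\tilde X^i_t=\hat X^i_t+D^i_t\in(-a\varepsilon,b\varepsilon)\subset[-a\varepsilon,b\varepsilon]$, hence $\tilde\tau_i\ge\hat\tau_1$. As long as $t<\tilde\tau_i$ one has $\tilde X^i_t\in[-a\varepsilon,b\varepsilon]$, so $\hat X^i_t=\tilde X^i_t-D^i_t\in[-a\varepsilon-tB^2,b\varepsilon+tB^2]$, which gives $t\le\hat\tau_2$; and since $\Delta\tilde X^i=\Delta\hat X^i$ while $[-a\varepsilon,b\varepsilon]$ has width $(a+b)\varepsilon$, any jump of $\hat X^i$ of magnitude at least $(a+b)\varepsilon$ necessarily throws $\tilde X^i$ out of $[-a\varepsilon,b\varepsilon]$, so also $\tilde\tau_i\le\hat\tau^j$. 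Altogether $\hat\tau_1\le\tilde\tau_i\le\hat\tau_2\wedge\hat\tau^j$ a.s.

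\emph{Monotonicity and freezing.} Since $|\hat X^i_t|^\kappa\ge 0$ and $x\mapsto x^n$ is nondecreasing on $[0,\infty)$, the preceding chain yields the pathwise inequalities
\[
\Big(\int_0^{\hat\tau_1}|\hat X^i_t|^\kappa dt\Big)^{n}\le\Big(\int_0^{\tilde\tau_i}|\hat X^i_t|^\kappa dt\Big)^{n}\le\Big(\int_0^{\hat\tau_2\wedge\hat\tau^j}|\hat X^i_t|^\kappa dt\Big)^{n}.
\]
Each of the three sides is of the form $\Phi(\hat X^i;a,b)$ for a fixed nonnegative, jointly measurable functional $\Phi$ of a c\`adl\`ag path and of the pair of barriers. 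By Step 3, $\hat X^i$ is, under $Q$, a $(\mathcal G^i_t)$-L\'evy process, hence independent of $\mathcal G^i_0=\mathcal F_{T_i}$ and distributed as $\hat X=\hat X^0$ under $Q$; since $(a,b)=(\underline a_{T_i},\overline a_{T_i})$ is $\mathcal F_{T_i}$-measurable, the freezing lemma gives $E^Q_{\mathcal F_{T_i}}\big[\Phi(\hat X^i;\underline a_{T_i},\overline a_{T_i})\big]=E^Q[\Phi(\hat X;a,b)]\big|_{(a,b)=(\underline a_{T_i},\overline a_{T_i})}$. Applied to the two outer terms, this is precisely $\underline f^{\kappa,n}_\varepsilon(\underline a_{T_i},\overline a_{T_i})$ and $\overline f^{\kappa,n}_\varepsilon(\underline a_{T_i},\overline a_{T_i})$, and taking $E^Q_{\mathcal F_{T_i}}$ of the whole chain (which preserves inequalities between nonnegative quantities, valued in $[0,\infty]$) gives the assertion; the ``whenever well-defined'' proviso merely covers the possibility that some of these quantities equal $+\infty$.

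\emph{Main obstacle.} The only genuinely delicate point is the pathwise sandwich $\hat\tau_1\le\tilde\tau_i\le\hat\tau_2\wedge\hat\tau^j$: one must tilt the barriers in $\hat\tau_1$ and $\hat\tau_2$ by $\mp tB^2$ in exactly the right direction so that the uniform drift bound is absorbed on the correct side, and one must isolate the effect of a single large jump through $\hat\tau^j$. The latter is also what makes $\overline f_\varepsilon$ a \emph{useful} upper bound rather than a trivially infinite one: the barriers defining $\hat\tau_2$ widen linearly in time and therefore outrun the subdiffusive ($\alpha<2$) fluctuations of $\hat X^i$, so $\hat\tau_2$ may well be infinite, whereas $\hat\tau^j$ is the first jump exceeding a fixed threshold and hence has exponential tails, so that $\hat\tau_2\wedge\hat\tau^j$ --- and, by the sandwich, $\tilde\tau_i$ itself --- is a.s.\ finite with good moments. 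The conditioning step, by contrast, is routine once one invokes the independence of $\hat X^i$ from $\mathcal F_{T_i}$ under $Q$ established in Step 3.
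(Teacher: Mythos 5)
Your argument is correct and is exactly the elaboration the paper intends: its one-line proof invokes precisely the three facts you establish in detail, namely the uniform drift bound $|\tilde X^i_t-\hat X^i_t|\leq tB^2$ (yielding the pathwise sandwich $\hat\tau_1\leq\tilde\tau_i\leq\hat\tau_2\wedge\hat\tau^j$), the fact that a jump of size $\geq\varepsilon(a+b)$ ejects $\tilde X^i$ from the interval, and the $\mathcal G^i_t$-L\'evy property of $\hat X^i$ under $Q$, which justifies the freezing step. The only cosmetic slip is the claim that ``each of the three sides'' is a functional $\Phi(\hat X^i;a,b)$ --- the middle term also depends on the drift and hence on $\mathcal F$-data beyond $\hat X^i$ --- but since you only apply the freezing lemma to the two outer terms and use mere monotonicity of conditional expectation for the chain, this does not affect the proof.
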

The proof follows from the fact that $|\tilde X_t^i - \hat
X_t^i| \leq t B^2$ and that $\hat X$ is a $\mathcal G^i_t$-L\'evy
process under $Q$, and that a jump of size greater than
$\varepsilon(a+b)$ immediately takes the process $\tilde X^i$ out of
the interval.

%le5 #&#
\begin{lemma}\label{flimit.lm}
%
%e37 #&#
%
\begin{equation}
\lim_{\varepsilon\downarrow0} \varepsilon^{-(\kappa+\alpha) n
}\underline
f^{\kappa,n}_\varepsilon(a,b) = \lim_{\varepsilon
\downarrow0}
\varepsilon^{-(\kappa+\alpha) n
}\overline f^{\kappa,n}_\varepsilon(a,b) =
f^{*,\kappa,n}(a,b)\label{flimit}
\end{equation}
uniformly on $(a,b) \in[a_1,a_2] \times[b_1,b_2]$ for all $0<a_1
\leq a_2 < \infty$ and $0<b_1 \leq b_2 < \infty$, with
\[
f^{*,\kappa,n}(a,b) = E \biggl[ \biggl(\int_0^{\tau^*}
\bigl|X^*_t\bigr|^\kappa \,dt \biggr)^n \biggr],
\]
where $X^*$ is a strictly $\alpha$-stable process with L\'evy
density
\[
\nu^*(x) = \frac{c_+ 1_{x>0} + c_- 1_{x<0}}{|x|^{1+\alpha}}
\]
and $\tau^* = \inf\{t\geq0\dvtx X^*_t \notin(-a, b)
\}$.
\end{lemma}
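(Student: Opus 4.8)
\emph{Sketch of proof.} The key is that $\hat X$, rescaled in space by $\varepsilon$ and in time by $\varepsilon^\alpha$, converges in law to the stable process $X^*$. Set $X^{*,\varepsilon}_s := \varepsilon^{-1}\hat X_{\varepsilon^\alpha s}$. Since $\hat X$ is a $Q$-L\'evy process with characteristics $(0,\nu,0)$, $X^{*,\varepsilon}$ is a $Q$-L\'evy process with L\'evy measure $\nu^{(\varepsilon)}$, the image of $\varepsilon^\alpha\nu$ under $x\mapsto x/\varepsilon$; thus $\nu^{(\varepsilon)}((y,\infty))=y^{-\alpha}\big((\varepsilon y)^\alpha\nu((\varepsilon y,\infty))\big)\to c_+y^{-\alpha}$ for every $y>0$ by \eqref{halpha}, and similarly on the negative half-line, while \eqref{hbnd} gives $\int_{|y|\le\delta}y^2\,\nu^{(\varepsilon)}(dy)\le C\delta^{2-\alpha}$ uniformly in $\varepsilon$, and a direct computation shows that the truncation drift of $X^{*,\varepsilon}$ converges to that of the strictly $\alpha$-stable process with L\'evy measure $\nu^*$. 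By the functional limit theorem for L\'evy processes (convergence of the characteristic triplets, see \cite{jacodshiryaev}, Chapter VII), $X^{*,\varepsilon}\Rightarrow X^*$ in $D([0,\infty))$ under $Q$.

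Next, the change of variables $t=\varepsilon^\alpha s$, $\hat X_t=\varepsilon X^{*,\varepsilon}_s$, together with $\alpha>1$, turns the stopping times of the lemma into exit times of $X^{*,\varepsilon}$ from intervals with \emph{vanishing} parabolic corrections: $\hat\tau_1=\varepsilon^\alpha\sigma^\varepsilon_1$ with $\sigma^\varepsilon_1=\inf\{s:X^{*,\varepsilon}_s\le -a+\varepsilon^{\alpha-1}sB^2\ \text{or}\ X^{*,\varepsilon}_s\ge b-\varepsilon^{\alpha-1}sB^2\}$, $\hat\tau_2=\varepsilon^\alpha\sigma^\varepsilon_2$ with the corrections reversed in sign, and $\hat\tau^j=\varepsilon^\alpha\sigma^\varepsilon_j$ with $\sigma^\varepsilon_j=\inf\{s:|\Delta X^{*,\varepsilon}_s|\ge a+b\}$. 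The same substitution gives $\int_0^{\hat\tau}|\hat X_t|^\kappa dt=\varepsilon^{\kappa+\alpha}\int_0^{\hat\tau/\varepsilon^\alpha}|X^{*,\varepsilon}_s|^\kappa ds$, so that
$$\varepsilon^{-(\kappa+\alpha)n}\underline f^{\kappa,n}_\varepsilon(a,b)=E^Q\Big[\Big(\int_0^{\sigma^\varepsilon_1}|X^{*,\varepsilon}_s|^\kappa\,ds\Big)^n\Big],\qquad \varepsilon^{-(\kappa+\alpha)n}\overline f^{\kappa,n}_\varepsilon(a,b)=E^Q\Big[\Big(\int_0^{\sigma^\varepsilon_2\wedge\sigma^\varepsilon_j}|X^{*,\varepsilon}_s|^\kappa\,ds\Big)^n\Big].$$

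To pass to the limit for fixed $(a,b)$, work on a Skorokhod representation space where $X^{*,\varepsilon}\to X^*$ a.s. Because $\alpha\in(1,2)$, $X^*$ has infinite variation, hence is regular for both half-lines and, a.s., none of its left limits equals $-a$ or $b$; therefore a.s.\ $X^*$ stays bounded away from $\{-a,b\}$ on $[0,t]$ for every $t<\tau^*$ and crosses the boundary at $\tau^*$. This makes the path functional $x\mapsto\big(\int_0^{\tau(x)}|x_s|^\kappa ds\big)^n$ ($\tau(x)$ the exit time from $(-a,b)$) a.s.\ continuous at $X^*$, and the vanishing corrections $\pm\varepsilon^{\alpha-1}sB^2$ do not affect the limit; moreover a jump of $X^*$ of size $\ge a+b$ cannot occur strictly before $\tau^*$, so $\sigma^\varepsilon_1\to\tau^*$ and $\sigma^\varepsilon_2\wedge\sigma^\varepsilon_j\to\tau^*$ a.s., and in both cases the integrals converge a.s.\ to $\int_0^{\tau^*}|X^*_s|^\kappa ds$. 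For the interchange of limit and $E^Q$, note that before exiting $X^{*,\varepsilon}$ stays in $(-a,b)$, so $\big(\int_0^{\sigma^\varepsilon}|X^{*,\varepsilon}_s|^\kappa ds\big)^n\le\max(a,b)^{\kappa n}(\sigma^\varepsilon)^n$; since $\nu^{(\varepsilon)}(\{|y|>a+b\})\to\nu^*(\{|y|>a+b\})>0$ (using $c_++c_->0$), a jump of size exceeding $a+b$ forces an exit, hence $P^Q(\sigma^\varepsilon>t)\le Ce^{-ct}$ with $C,c$ uniform for small $\varepsilon$, so all moments of $\sigma^\varepsilon$ are bounded uniformly and uniform integrability follows. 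This yields $\varepsilon^{-(\kappa+\alpha)n}\underline f^{\kappa,n}_\varepsilon(a,b)\to f^{*,\kappa,n}(a,b)$ and likewise for $\overline f^{\kappa,n}_\varepsilon$; in particular $f^{*,\kappa,n}(a,b)<\infty$. Finally, the three functions $\underline f^{\kappa,n}_\varepsilon$, $\overline f^{\kappa,n}_\varepsilon$, $f^{*,\kappa,n}$ are nondecreasing in $a$ and in $b$, and $f^{*,\kappa,n}$ is continuous on $[a_1,a_2]\times[b_1,b_2]$ (dominated convergence, using the a.s.\ continuity of $(a,b)\mapsto\tau^*$ and the uniform exponential bound); a Dini-type argument — pointwise convergence to a continuous limit of functions monotone in each variable on a compact rectangle is uniform — upgrades the pointwise convergence to uniform convergence on $[a_1,a_2]\times[b_1,b_2]$.

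The main obstacle is the a.s.\ continuity of the exit functional at the limiting stable path: one must exclude, with probability one, degenerate behaviours such as a left limit of $X^*$ touching $\{-a,b\}$, the path touching the boundary without crossing it, or the exit time being sensitive to the vanishing parabolic correction $\pm\varepsilon^{\alpha-1}sB^2$ — and it is exactly here that $\alpha\in(1,2)$ enters, through the infinite variation (hence regularity for the half-lines and non-creeping transversality) of $X^*$. The companion technical point is the uniform-in-$\varepsilon$-and-$(a,b)$ exponential tail bound for the exit times, which underlies the uniform integrability.
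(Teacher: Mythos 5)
Your proposal is correct and follows essentially the same route as the paper: rescaling $\hat X$ to $X^{*,\varepsilon}_s=\varepsilon^{-1}\hat X_{\varepsilon^\alpha s}$, weak convergence to the stable process via characteristic triplets (which is exactly the content of the paper's Lemma \ref{weakcvg}), Skorokhod representation combined with a.s.\ continuity of the exit functional at stable paths, uniform integrability from the exponential tail of the exit time forced by the first jump of size at least $a+b$, and a Dini-type argument exploiting monotonicity in $(a,b)$ together with continuity of $f^{*,\kappa,n}$ for the uniformity. The only minor imprecision is the bound $|X^{*,\varepsilon}_s|\leq \max(a,b)$ before $\sigma^\varepsilon_2\wedge\sigma^\varepsilon_j$, which should read $\max(a,b)+B^2 s$ since the barriers defining $\sigma^\varepsilon_2$ are pushed outward; this is harmless because the exit times have uniformly bounded exponential moments.
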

\begin{pf}
For $\varepsilon>0$, let us define
$X_t^{\varepsilon}=\varepsilon^{-1}\hat{X}_{\varepsilon^\alpha
t}$, $X_t^{\varepsilon,1}=X_t^{\varepsilon}-tB^2\varepsilon^{\alpha
-1}$, $X_t^{\varepsilon,2}=X_t^{\varepsilon}+tB^2\varepsilon^{\alpha
-1}$ and
\begin{eqnarray*}
\tau_1^{\varepsilon,1}&=&\inf\bigl\{t, X_t^{\varepsilon,1}
\leq -a\bigr\},\qquad \tau_1^{\varepsilon,2}=\inf\bigl\{t,
X_t^{\varepsilon,2}\geq b\bigr\},
\\
\tau_2^{\varepsilon,1}&=&\inf\bigl\{t, X_t^{\varepsilon,2}
\leq -a\bigr\},\qquad \tau_2^{\varepsilon,2}=\inf\bigl\{t,
X_t^{\varepsilon,1}\geq b\bigr\}
\\
\tau_3^{\varepsilon,1}&=&\inf\bigl\{t, X_t^{\varepsilon}
\leq -a\bigr\},\qquad \tau_3^{\varepsilon,2}=\inf\bigl\{t,
X_t^{\varepsilon}\geq b\bigr\}.
\end{eqnarray*}
We write
$\tau_i^{\varepsilon}=\tau_i^{\varepsilon,1}\wedge\tau_i^{\varepsilon,2}$
for $i=1,2,3$. Similarly, we define
$\tau^{j,\varepsilon}:=\inf\{t\dvtx \allowbreak |\Delta
X_t^{\varepsilon}|\geq(a+b)\}$. Observe that by a change of variable
in the integral,
\begin{eqnarray*}
\varepsilon^{-(\kappa+\alpha) n
}\underline f^{\kappa,n}_\varepsilon(a,b) &=&
E^{Q} \biggl[ \biggl(\int_0^{\tau^{\varepsilon}_1}
\bigl|X^{\varepsilon}_t\bigr|^\kappa \,dt \biggr)^n
\biggr],
\\
\varepsilon^{-(\kappa+\alpha) n
}\overline f^{\kappa,n}_\varepsilon(a,b) &=&
E^{Q} \biggl[ \biggl(\int_0^{\tau^{\varepsilon}_2 \wedge\tau^{j,\varepsilon}}
\bigl|X^{\varepsilon}_t\bigr|^\kappa \,dt \biggr)^n
\biggr].
\end{eqnarray*}

From Lemma \ref{weakcvg}, we have that $X_t^{\varepsilon}$
converges to $X^*_t$ in Skorohod topology. From Skorohod
representation theorem, there exists some probability space on which
are defined a process $Y^*$ and a family of processes
$Y^{\varepsilon}$ such that $Y^{\varepsilon}$ and
$X^{\varepsilon}$ have the same law, $Y^{*}$ and $X^{*}$ have
the same law and $Y^{\varepsilon}$ converges to $Y^{*}$ almost
surely, for the Skorohod topology.

This implies that $Y^{\varepsilon,1}$ and $Y^{\varepsilon,2}$ also
converge to $Y^*$ almost surely, where $Y_t^{\varepsilon
,1}=Y_t^{\varepsilon}-tB^2\varepsilon^{\alpha-1}$ and
$Y_t^{\varepsilon,2}=Y_t^{\varepsilon}+tB^2\varepsilon^{\alpha-1}$. Now
using that the application which to a function $f$ in the Skorohod
space associates its first hitting time of a constant barrier is
continuous at almost all $f$ which are sample paths of strictly stable
processes (see Proposition VI.2.11 in \cite{jacodshiryaev} and its use
in \cite{rosenbaum.tankov.10}), we obtain that
$\sigma_i^{\varepsilon}$ converges almost surely to
$\sigma^{*}$ for $i=1,2,3$, where $\sigma_i^{\varepsilon}$ and
$\sigma^{*}$ are defined through $Y^{\varepsilon,1}$,
$Y^{\varepsilon,2}, Y^*$ in the same way as
$\tau_i^{\varepsilon}$ and $\tau^{*}$ through
$X^{\varepsilon,1}$, $X^{\varepsilon,2}, X^*$. Moreover,
since $\sigma_3^{\varepsilon} \leq\sigma^{j,\varepsilon}$
for all $\varepsilon$, we also have that
$\sigma_2^{\varepsilon}\wedge\sigma^{j,\varepsilon} \to
\sigma^*$ almost surely.

Now remark that, almost surely, $Y_t^{\varepsilon}$
converges almost everywhere in $t$ to~$Y_t^{*}$; see Proposition
VI.2.3 in \cite{jacodshiryaev}. Therefore, since
$|Y_t^{\varepsilon}|\mathrm{1}_{t\leq\sigma_1^{\varepsilon}}\leq
\mbox{max}(a,b)$ and $|Y_t^{\varepsilon}|\mathrm{1}_{t\leq\sigma
^{j,\varepsilon}\wedge\sigma_2^{\varepsilon}}\leq
\mbox{max}(a,b) + B^2 t$, using the dominated convergence theorem, we
obtain that almost surely
\begin{eqnarray*}
\biggl(\int_0^{\sigma_1^{\varepsilon}}\bigl|Y_t^{\varepsilon}\bigr|^\kappa
\,dt \biggr)^n&\rightarrow& \biggl(\int_0^{\sigma^{*}}\bigl|Y_t^{*}\bigr|^\kappa
\,dt \biggr)^n\quad\mbox{and}\\
 \biggl(\int_0^{\sigma_2^{\varepsilon} \wedge\sigma
^{j,\varepsilon}}\bigl|Y_t^{\varepsilon}\bigr|^\kappa
\,dt \biggr)^n&\rightarrow&\biggl(\int_0^{\sigma^{*}}\bigl|Y_t^{*}\bigr|^\kappa
\,dt \biggr)^n.
\end{eqnarray*}
Finally, we deduce that
\begin{eqnarray*}
\biggl(\int_0^{\tau_1^{\varepsilon}}\bigl|X_t^{\varepsilon}\bigr|^\kappa
\,dt \biggr)^n&\rightarrow& \biggl(\int_0^{\tau^{*}}\bigl|X_t^{*}\bigr|^\kappa
\,dt \biggr)^n\quad\mbox{and}\\
\biggl(\int_0^{\tau_2^{\varepsilon}\wedge\tau
^{j,\varepsilon}}\bigl|X_t^{\varepsilon}\bigr|^\kappa
\,dt \biggr)^n&\rightarrow &\biggl(\int_0^{\tau^{*}}\bigl|X_t^{*}\bigr|^\kappa
\,dt \biggr)^n,
\end{eqnarray*}
in law.

Now note that $\tau^{j,\varepsilon}$ is the first jump time
of a L\'evy process with characteristic triplet given by
$(0,\varepsilon^{\alpha}\nu|_{ (-(a+b)\varepsilon,(a+b)\varepsilon
)^c},0)$. Using that this process is a compound Poisson process, we get
\[
P\bigl[\tau^{j,\varepsilon} > T\bigr] \leq\exp\bigl\{-T \varepsilon^\alpha
\nu\bigl(\bigl(-\infty,-(a+b)\varepsilon\bigr] \cup\bigl[(a+b)\varepsilon,\infty\bigr)
\bigr) \bigr\},
\]
which, {by property \eqref{hbnd}}, implies
that the family $(\tau^{j,\varepsilon})_{\varepsilon>0}$ has
uniformly bounded exponential moment. This implies that the families
\[
\biggl(\int_0^{\tau_2^{\varepsilon}\wedge\tau^{j,\varepsilon
}}\bigl|X_t^{\varepsilon}\bigr|^\kappa
\,dt \biggr)^n \quad\mbox{and}\quad \biggl(\int_0^{\tau_1^{\varepsilon
}}\bigl|X_t^{\varepsilon}\bigr|^\kappa
\,dt \biggr)^n = \biggl(\int_0^{\tau_1^{\varepsilon} \wedge\tau
^{j,\varepsilon}}\bigl|X_t^{\varepsilon}\bigr|^\kappa
\,dt \biggr)^n,
\]
parameterized by $\varepsilon$, are uniformly integrable, and therefore
the proof of the convergence in \eqref{flimit} is complete.

It remains to show that the convergence in \eqref{flimit} is
uniform in $(a,b)$ over compact sets excluding zero. To do this, first
observe that $f^{*,\kappa,n}(a,b)$ is continuous in $(a,b)$ on compact sets
excluding zero (this is shown using essentially the same arguments as
above: continuity of the exit times for stable processes plus uniform
integrability). Second, since both $\underline f^{\kappa,n}_{\varepsilon
}$ and
$\overline f^{\kappa,n}_\varepsilon$ are increasing in $a$ and $b$, a
multidimensional version of Dini's theorem can be used to conclude that
the convergence is indeed
uniform.
\end{pf}

\textit{Step} 5. First, let us show that
\[
S_1 = \lim_{\varepsilon\downarrow0} E^Q \Biggl[ \sum
_{i=0}^\infty 1_{T_i\leq T}
{A_{T_i}Z^{-1}_{T_i}} \frac{T_{i+1}-T_i}{E^Q_{\mathcal
F_{T_i}}[\tilde\tau_i]}{
\varepsilon^{-2} E^{Q}_{\mathcal F_{T_i}} \biggl[\int
_{0}^{\tilde\tau_{i}} \hat X^2_t \,dt
\biggr]} \Biggr].
\]
Indeed, the absolute value of the difference between the expressions
under the limit here and in \eqref{S1tilde} is
bounded from above by
%
%e38 #&#
%
\begin{eqnarray}
\label{tildehat} &&E^Q \Biggl[ \sum_{i=0}^\infty
1_{T_i\leq T} {A_{T_i}Z^{-1}_{T_i}}
\frac{T_{i+1}-T_i}{E^Q_{\mathcal
F_{T_i}}[\tilde\tau_i]}{\varepsilon^{-2} E^{Q}_{\mathcal F_{T_i}}
\biggl[\int_{0}^{\tilde\tau_{i}} \bigl|(\tilde X_t -
\hat X_t) (\tilde X_t + \hat X_t)\bigr| \,dt \biggr]}
\Biggr]
\nonumber
\\
&&\qquad\leq C E^Q \Biggl[ \sum_{i=0}^\infty
1_{T_i\leq T} {Z^{-1}_{T_i}} \frac{T_{i+1}-T_i}{E^Q_{\mathcal
F_{T_i}}[\tilde\tau_i]}{
\varepsilon^{-2} E^{Q}_{\mathcal F_{T_i}} \bigl[\tilde
\tau_i^3 + \tilde\tau_i^2
\varepsilon \bigr]} \Biggr]
\\
&&\qquad\leq C E^Q \Biggl[ \sum_{i=0}^\infty
1_{T_i\leq T} {Z^{-1}_{T_i}} (T_{i+1}-T_i)
\frac{\varepsilon
^{-2}\overline f^{0,3}_\varepsilon(\underline a_{T_i},\overline
a_{T_i}) + \varepsilon^{-1}\overline f^{0,2}_\varepsilon(\underline
a_{T_i},\overline a_{T_i})
}{\underline
f^{0,1}_\varepsilon(\underline a_{T_i},\overline a_{T_i})}{ } \Biggr],\nonumber\hspace*{-15pt}
\end{eqnarray}
where $C$ is a constant which does not depend on $\varepsilon$. Using
Lemma \ref{flimit.lm} and the fact that $\alpha>1$, we get
\[
\sup_{{1}/{B} \leq a,b \leq B}\frac{\varepsilon^{-2}\overline
f^{0,3}_\varepsilon(a,b) + \varepsilon^{-1}\overline f^{0,2}_\varepsilon(a,b)
}{\underline
f^{0,1}_\varepsilon(a,b)} \to0\qquad \mbox{as $\varepsilon
\to0$.}
\]
This, together with the fact that $E^Q[\sup_{0\leq t \leq T}
Z_t^{-1}]<\infty$, enables us to apply the dominated convergence
theorem and conclude that \eqref{tildehat} goes to zero.

Finally, we have that
\begin{eqnarray*}
S_1 &\leq&\limsup_{\varepsilon\downarrow0} E^Q \Biggl[
\sum_{i=0}^\infty 1_{T_i\leq T}
{A_{T_i}Z^{-1}_{T_i}} (T_{i+1}-T_i)
\frac{\varepsilon
^{-2-\alpha}\overline
f^{2,1}_\varepsilon(\underline a_{T_i},\overline a_{T_i})}{\varepsilon
^{-\alpha}\underline
f^{0,1}_\varepsilon(\underline a_{T_i},\overline
a_{T_i})}{} \Biggr],
\\
S_1 &\geq&\limsup_{\varepsilon\downarrow0} E^Q \Biggl[
\sum_{i=0}^\infty 1_{T_i\leq T}
{A_{T_i}Z^{-1}_{T_i}} (T_{i+1}-T_i)
\frac{\varepsilon
^{-2-\alpha}\underline f^{2,1}_\varepsilon(\underline a_{T_i},\overline
a_{T_i})}{\varepsilon^{-\alpha}\overline
f^{0,1}_\varepsilon(\underline a_{T_i},\overline a_{T_i})}{} \Biggr].
\end{eqnarray*}
Using for $(\kappa,n)=(0,1)$ and $(\kappa,n)=(2,1)$ the uniform
convergence on\break  $[1/B,B]$ of $\varepsilon^{-(\kappa+\alpha) n
}\underline f^{\kappa,n}_\varepsilon$\vspace*{2pt} and
$\varepsilon^{-(\kappa+\alpha) n
}\overline f^{\kappa,n}_\varepsilon$ toward $f^{*,\kappa,n}$ which is
continuous, together with a Riemann-sum type argument and the dominated
convergence theorem, we obtain that
\[
S_1 = E^Q \biggl[\int_0^T
A_t Z^{-1}_t \frac{
f^{*,2,1}(\underline a_t,\overline a_t)}{
f^{*,0,1}(\underline a_t,\overline a_t)} \,dt \biggr] = E
\biggl[\int_0^T A_t
\frac{
f^{*,2,1}(\underline a_t,\overline a_t)}{
f^{*,0,1}(\underline a_t,\overline a_t)} \,dt \biggr].
\]

%s5 #&#
\section{Preliminaries for the proof of Theorem \protect\texorpdfstring{\ref{cost.thm}}{2}}\label{prooflm}
In this section, we prove some technical lemmas concerning the uniform
integrability of the hitting time counts and the overshoots, which are
needed for the proof of Theorem~\ref{cost.thm}.

%le6 #&#
\begin{lemma}\label{overshoot.lm}
Under the {assumption $(\mathit{HX})$}, for all $\beta\in[0,\alpha)$ and
$\varepsilon>0$,
%
%e39 #&#
%
\begin{eqnarray}
\label{overbound}&& E_{\mathcal F_{T_i}} \bigl[|X_{T_{i+1}}-X_{T_i}|^\beta
\bigr]\nonumber\\
&&\qquad\leq c \varepsilon ^{\beta-1} \max\bigl\{\underline
a_{T_i}^{\beta-1},\overline a_{T_i}^{\beta-1}\bigr
\}E_{\mathcal F_{T_i}} \biggl[\int_{T_i}^{T_{i+1}}
|b_s|\,ds \biggr]
\nonumber
\\[-8pt]
\\[-8pt]
\nonumber
&&\qquad\quad{} + c \varepsilon^{\beta-\alpha} \max\{\underline a_{T_i},\overline
a_{T_i}\}^{\beta\vee(2-\alpha)}\\
&&\qquad\quad{}\times\min\{\underline a_{T_i},\overline
a_{T_i}\}^{(\beta-2)\wedge(-\alpha)} E_{\mathcal F_{T_i}} \biggl[\int
_{T_i}^{T_{i+1}} {\widehat K_s} \,ds \biggr],\nonumber
\end{eqnarray}
provided that the right-hand side has finite expectation.\vadjust{\eject}
\end{lemma}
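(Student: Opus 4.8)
The plan is to read the overshoot off the defining relation \eqref{discrule} and then to control the size of the exit jump through the tail bound \eqref{hxbnd}. Write $\bar A_i:=\max\{\underline a_{T_i},\overline a_{T_i}\}$ and $\underline A_i:=\min\{\underline a_{T_i},\overline a_{T_i}\}$. Since $X$ stays in the interval $(X_{T_i}-\varepsilon\underline a_{T_i},X_{T_i}+\varepsilon\overline a_{T_i})$ on $[T_i,T_{i+1})$ and leaves it at $T_{i+1}$, the left limit $X_{T_{i+1}-}$ still belongs to the closed interval and the terminal jump can overshoot a boundary by at most its own modulus, so that almost surely $\varepsilon\underline A_i\le|X_{T_{i+1}}-X_{T_i}|\le\varepsilon\bar A_i+|\Delta X_{T_{i+1}}|$. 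Using the representation \eqref{defX.eq} to split $X_{T_{i+1}}-X_{T_i}$ into its drift part $\int_{T_i}^{T_{i+1}}b_s\,ds$ and its purely discontinuous part, together with the elementary inequality $|x+y|^\beta\le c_\beta(|x|^\beta+|y|^\beta)$, I would reduce \eqref{overbound} to bounding, conditionally on $\mathcal F_{T_i}$, the three quantities $E_{\mathcal F_{T_i}}[|\Delta X_{T_{i+1}}|^\beta]$, $E_{\mathcal F_{T_i}}[(\int_{T_i}^{T_{i+1}}|b_s|\,ds)^\beta]$ and the deterministic term $(\varepsilon\bar A_i)^\beta$.

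For the overshoot term, which is the one where Assumption $(HX)$ enters, I would fix the threshold $\theta:=\varepsilon\bar A_i$ and write $|\Delta X_{T_{i+1}}|^\beta\le\theta^\beta+\sum_{T_i<s\le T_{i+1}}|\Delta X_s|^\beta\mathbf 1_{|\Delta X_s|>\theta}$, the last sum running over finitely many jumps. Taking $E_{\mathcal F_{T_i}}$ and using the compensation formula for the jump measure $M$ with compensator $ds\,\mu_s(dz)$ turns it into $E_{\mathcal F_{T_i}}[\int_{T_i}^{T_{i+1}}\int_{|z|>\theta}|z|^\beta\mu_s(dz)\,ds]$, and the inner integral is estimated by Fubini and \eqref{hxbnd}: for $0\le\beta<\alpha$,
$$
\int_{|z|>\theta}|z|^\beta\mu_s(dz)=\theta^\beta\mu_s(\{|z|>\theta\})+\beta\int_\theta^\infty u^{\beta-1}\mu_s(\{|z|>u\})\,du\le\frac{2\alpha}{\alpha-\beta}\,\widehat K_s\,\theta^{\beta-\alpha}.
$$
This produces a contribution of order $\varepsilon^{\beta-\alpha}\bar A_i^{\beta-\alpha}E_{\mathcal F_{T_i}}[\int_{T_i}^{T_{i+1}}\widehat K_s\,ds]$; since the two exponents $\beta\vee(2-\alpha)$ and $(\beta-2)\wedge(-\alpha)$ in \eqref{overbound} always sum to $\beta-\alpha$, the first is nonnegative, and $\bar A_i\ge\underline A_i$, one has $\bar A_i^{\beta-\alpha}\le\bar A_i^{\beta\vee(2-\alpha)}\underline A_i^{(\beta-2)\wedge(-\alpha)}$, so this is dominated by the second term on the right of \eqref{overbound}.

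The remaining and main difficulty is to absorb the drift contribution and, above all, the deterministic term $(\varepsilon\bar A_i)^\beta$, since the latter is not on its own proportional to $E[\int|b|]$ or $E[\int\widehat K]$. Here I would split according to whether $\int_{T_i}^{T_{i+1}}|b_s|\,ds\ge\tfrac12\varepsilon\underline A_i$. On this event $(\int|b|)^\beta\le(\tfrac12\varepsilon\underline A_i)^{\beta-1}\int|b|$ gives the first term of \eqref{overbound} directly when $\beta\le1$ (using $\max\{\underline a^{\beta-1},\overline a^{\beta-1}\}=\underline A_i^{\beta-1}$ there), while for $\beta\ge1$ one needs in addition a uniform integrability estimate for the rescaled exit time of the type obtained in the proof of Lemma \ref{flimit.lm}; and on this event $(\varepsilon\bar A_i)^\beta$ times the indicator is then controlled by the same term. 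On the complementary event the purely discontinuous part of the increment must have modulus at least $\tfrac12\varepsilon\underline A_i$ by the lower bound above, which lets one insert a harmless factor $\bigl(|X_{T_{i+1}}-X_{T_i}-\int_{T_i}^{T_{i+1}}b_s\,ds|/(\tfrac12\varepsilon\underline A_i)\bigr)^{\gamma}\ge1$ for a suitable small power $\gamma\ge0$, reapply the overshoot estimate of the previous paragraph with $\beta$ replaced by $\gamma<\alpha$, and combine with the lower bound $E_{\mathcal F_{T_i}}[\int_{T_i}^{T_{i+1}}\widehat K_s\,ds]\ge c\,E_{\mathcal F_{T_i}}[\int_{T_i}^{T_{i+1}}\lambda_s\,ds]\ge c'(\varepsilon\underline A_i)^{\alpha}$, which expresses that the small jumps of $X$ (of intensity governed by $\lambda$) cannot take the process a distance $\varepsilon\underline A_i$ in time much shorter than $(\varepsilon\underline A_i)^\alpha$. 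The delicate point, which I expect to be the principal obstacle and the reason for the asymmetric exponents in \eqref{overbound}, is the bookkeeping required to check that, for every $\beta\in[0,\alpha)$ and for all ratios $\bar A_i/\underline A_i$, the powers of $\varepsilon$, $\bar A_i$ and $\underline A_i$ coming out of this last step are indeed absorbed by the right-hand side of \eqref{overbound}; the case $\beta=0$ is handled separately using $(HA'_2)$ rather than the above.
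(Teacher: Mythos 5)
Your treatment of the jump-overshoot term is sound: the compensation formula plus the tail bound \eqref{hxbnd} does give $E_{\mathcal F_{T_i}}\big[\sum_{T_i<s\le T_{i+1}}|\Delta X_s|^\beta 1_{|\Delta X_s|>\theta}\big]\le \tfrac{2\alpha}{\alpha-\beta}\,\theta^{\beta-\alpha}E_{\mathcal F_{T_i}}\big[\int_{T_i}^{T_{i+1}}\widehat K_s\,ds\big]$, and your observation that $\bar A_i^{\beta-\alpha}\le \bar A_i^{\beta\vee(2-\alpha)}\underline A_i^{(\beta-2)\wedge(-\alpha)}$ is correct. The genuine gap is in the step you yourself flag as delicate: absorbing the residual deterministic term $(\varepsilon\bar A_i)^\beta$. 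Your plan rests on the lower bound $E_{\mathcal F_{T_i}}\big[\int_{T_i}^{T_{i+1}}\widehat K_s\,ds\big]\ge c\,(\varepsilon\underline A_i)^\alpha$. First, this bound is false as stated: under $(HX)$ alone the drift is not bounded, and a large drift can expel the process from the interval in a time much shorter than $(\varepsilon\underline A_i)^\alpha$; the correct statement is the ``either the drift integral or the $\widehat K$ integral is large'' alternative of Corollary \ref{lower.cor}. Second, and more seriously, Corollary \ref{lower.cor} is \emph{derived from} Lemma \ref{overshoot.lm} in the paper (it is the $\beta=0$ case with symmetrized barriers), so invoking it here is circular unless you supply an independent proof --- and the heuristic ``stable-like jumps cannot travel a distance $\varepsilon\underline A_i$ in operational time much less than $(\varepsilon\underline A_i)^\alpha$'' is precisely the quantitative estimate that requires the argument. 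Relatedly, the $\beta=0$ case of \eqref{overbound} is not ``handled by $(HA'_2)$'': that is an integrability hypothesis on the rule used in later lemmas, whereas the $\beta=0$ inequality is a pointwise statement that must be proved and is in fact the hardest part. Finally, even granting the case split, your first branch gives $(\varepsilon\bar A_i)^\beta 1_{\{\int|b|\ge \frac12\varepsilon\underline A_i\}}\le 2\varepsilon^{\beta-1}\bar A_i^{\beta}\underline A_i^{-1}\int|b|$, which exceeds the allowed $\varepsilon^{\beta-1}\max\{\underline a_{T_i}^{\beta-1},\overline a_{T_i}^{\beta-1}\}\int|b|$ by an unbounded factor $(\bar A_i/\underline A_i)^\beta$ when $\beta\le 1$, so the bookkeeping does not close as proposed.

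The paper avoids creating the problematic residue altogether. It applies It\^o's formula to the test function
$f(x)= x^2 1_{0\leq x\leq 2\overline a_{T_i}\varepsilon}(2\overline a_{T_i}\varepsilon)^{\beta-2}+ |x|^\beta 1_{x> 2\overline a_{T_i}\varepsilon}+ x^2 1_{-2\underline a_{T_i}\varepsilon\leq x\leq 0}(2\underline a_{T_i}\varepsilon)^{\beta-2}+ |x|^\beta 1_{x<-2\underline a_{T_i}\varepsilon}$,
which satisfies $f(X_{T_{i+1}}-X_{T_i})\ge 2^{\beta-2}|X_{T_{i+1}}-X_{T_i}|^\beta$ and, crucially, $f(0)=0$ with $|f'|\lesssim (\varepsilon\max\{\underline a,\overline a\})^{\beta-1}$ and $|f''|\lesssim (\varepsilon\min\{\underline a,\overline a\})^{\beta-2}$ on the interval. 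Consequently the whole of $E_{\mathcal F_{T_i}}[f(X_{T_{i+1}}-X_{T_i})]$ is expressed through the drift term and the jump compensator (jumps landing inside the doubled interval are controlled by the truncated second moment \eqref{upperK2}, those landing outside by \eqref{hxbnd}), with no free deterministic term to absorb; the asymmetric exponents in \eqref{overbound} come from mixing $f''$ on the small side with the landing range on the large side, and the lower bound you need then falls out as a corollary rather than being an input. If you want to salvage your route, you would have to first prove the $\beta=0$ inequality by such a martingale argument, at which point you have essentially reproduced the paper's proof.
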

%
%co1 #&#
\begin{corollary}\label{lower.cor}
{Under the assumption $(\mathit{HX})$}, for all $\varepsilon>0$,
\begin{eqnarray*}
\varepsilon^\alpha&\leq& c \varepsilon^{\alpha-1} \min\{\underline
a_{T_i},\overline a_{T_i}\}^{-1}E_{\mathcal F_{T_i}}
\biggl[\int_{T_i}^{T_{i+1}} |b_s|\,ds \biggr]
\\
&&{}+ c \min\{\underline a_{T_i},\overline a_{T_i}
\}^{-\alpha} E_{\mathcal F_{T_i}} \biggl[\int_{T_i}^{T_{i+1}}
{\widehat K_s} \,ds \biggr],
\end{eqnarray*}
provided that the right-hand side has finite expectation.
\end{corollary}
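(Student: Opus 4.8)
The plan is to deduce Corollary~\ref{lower.cor} from Lemma~\ref{overshoot.lm}, applied with $\beta=0$ to an auxiliary symmetric interval. Write $m:=\min\{\underline a_{T_i},\overline a_{T_i}\}$. Since $\varepsilon\underline a_{T_i}\geq\varepsilon m$ and $\varepsilon\overline a_{T_i}\geq\varepsilon m$, the interval $(X_{T_i}-\varepsilon\underline a_{T_i},X_{T_i}+\varepsilon\overline a_{T_i})$ contains $(X_{T_i}-\varepsilon m,X_{T_i}+\varepsilon m)$, so that $T_{i+1}\geq\sigma$, where
$$
\sigma:=\inf\{t>T_i:\ X_t\notin(X_{T_i}-\varepsilon m,\,X_{T_i}+\varepsilon m)\}
$$
is the first exit time of $X$ from the symmetric interval of half-width $\varepsilon m$ around $X_{T_i}$. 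The proof of Lemma~\ref{overshoot.lm} is carried out conditionally on $\mathcal F_{T_i}$ and uses only the structure of $X$ after $T_i$, not the particular barriers of the running rule, so it yields the corresponding bound for any $\mathcal F_{T_i}$-measurable pair of positive half-widths; I would apply it with both half-widths equal to $m$, in which case the next rebalancing time is exactly $\sigma$.

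The point of this symmetrisation is that for $\beta=0$ the exponents in Lemma~\ref{overshoot.lm} collapse. Since $1<\alpha<2$ one has $\beta\vee(2-\alpha)=2-\alpha$ and $(\beta-2)\wedge(-\alpha)=-2$, so with both half-widths equal to $m$ the jump coefficient $\varepsilon^{\beta-\alpha}\max\{\cdot\}^{\beta\vee(2-\alpha)}\min\{\cdot\}^{(\beta-2)\wedge(-\alpha)}$ becomes $\varepsilon^{-\alpha}m^{2-\alpha}m^{-2}=(\varepsilon m)^{-\alpha}$, while the drift coefficient equals $\varepsilon^{-1}m^{-1}=(\varepsilon m)^{-1}$. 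Moreover $|X_\sigma-X_{T_i}|\geq\varepsilon m>0$ almost surely, hence $E_{\mathcal F_{T_i}}\big[|X_\sigma-X_{T_i}|^{0}\big]=1$, and Lemma~\ref{overshoot.lm} gives
$$
1\leq c\,(\varepsilon m)^{-1}\,E_{\mathcal F_{T_i}}\Big[\int_{T_i}^{\sigma}|b_s|\,ds\Big]+c\,(\varepsilon m)^{-\alpha}\,E_{\mathcal F_{T_i}}\Big[\int_{T_i}^{\sigma}\widehat K_s\,ds\Big].
$$

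To conclude I would multiply through by $\varepsilon^{\alpha}$ and replace $\sigma$ by $T_{i+1}$: since $\sigma\leq T_{i+1}$ and $|b_s|,\widehat K_s\geq0$, the conditional integrals over $[T_i,\sigma]$ are bounded by those over $[T_i,T_{i+1}]$, which is exactly the asserted inequality with $m=\min\{\underline a_{T_i},\overline a_{T_i}\}$. The finiteness proviso transfers in the right direction term by term: the right-hand side of the auxiliary inequality (with $\sigma$) is dominated by that of the corollary (with $T_{i+1}$), so Lemma~\ref{overshoot.lm} is applicable as soon as the right-hand side of the corollary has finite expectation.

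The only point that needs care --- and the main, though mild, obstacle --- is this transfer of Lemma~\ref{overshoot.lm} to a symmetric interval restarted at the intermediate time $T_i$ with the $\mathcal F_{T_i}$-measurable half-width $\varepsilon m$: one should state explicitly that the lemma's proof never uses the barriers of the running rule beyond their values at $T_i$, so it applies verbatim with those values replaced by $m$. A fully self-contained alternative avoiding this remark is a direct Dynkin estimate: apply It\^o's formula to $g\big((X_{T_i+t}-X_{T_i})/(\varepsilon m)\big)$ for a bounded $C^2$ function $g$ with $g(0)=0$ and $g\equiv1$ on $\{|x|\geq1\}$, bound the generator of $X$ on $\{|X_{s-}-X_{T_i}|\leq\varepsilon m\}$ by $c\big(|b_s|(\varepsilon m)^{-1}+\widehat K_s(\varepsilon m)^{-\alpha}\big)$ using the bounds $\int_{|z|\leq r}z^2\mu_s(dz)\leq C\widehat K_s r^{2-\alpha}$, $\int_{r<|z|\leq1}|z|\,\mu_s(dz)\leq C\widehat K_s r^{1-\alpha}$ and $\int_{|z|>r}\mu_s(dz)\leq C\widehat K_s r^{-\alpha}$ deduced from \eqref{hxbnd} by integration by parts, and evaluate at $\sigma$, where $g$ equals $1$.
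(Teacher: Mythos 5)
Your proposal is correct and coincides with the paper's own proof: the authors likewise apply Lemma \ref{overshoot.lm} with $\beta=0$ and both barriers replaced by $\min\{\underline a_{T_i},\overline a_{T_i}\}$, multiply \eqref{overbound} by $\varepsilon^\alpha$, and use that the exit time of the smaller symmetric interval precedes $T_{i+1}$. Your additional remarks on why the lemma transfers to the modified barriers and on the finiteness proviso are careful but not needed beyond what the paper states.
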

\begin{pf}%{Proof of Corollary}
Apply Lemma \ref{overshoot.lm} with
$\beta' = 0$,
$\underline a_{T_i}' = \overline a_{T_i}' = \min\{\underline
a_{T_i},\overline a_{T_i}\}$; then multiply both sides of
\eqref{overbound} by $\varepsilon^\alpha$ and use the fact that
the hitting time of the new barrier is smaller than $T_{i+1}$.
\end{pf}

\begin{pf*}{Proof of Lemma \ref{overshoot.lm}}
{First of all, from \eqref{hxbnd} we easily deduce
by integration by parts that
%
%e40 #&#
%
\begin{eqnarray}\label{upperK2}
\int_{x<|z|\leq1} |z| \mu(dt\times dz) &<& C \widehat K_t
x^{1-\alpha
} \quad\mbox{and}
\nonumber
\\[-8pt]
\\[-8pt]
\nonumber
 \int_{|z|\leq x} z^2
\mu(dt\times dz) &<& C\widehat K_t x^{2-\alpha}
\end{eqnarray}
for all $x>0$, for some constant $C<\infty$.}

For this proof, let
\begin{eqnarray*}
f(x)&:=& x^2 1_{0\leq x\leq2\overline a_{T_i} \varepsilon} (2\overline a_{T_i}
\varepsilon)^{\beta-2}+ |x|^\beta1_{x> 2\overline a_{T_i}
\varepsilon} +
x^2 1_{-2\underline a_{T_i} \varepsilon\leq x \leq0} (2\underline a_{T_i}
\varepsilon)^{\beta-2}\\
&&{}+ |x|^\beta1_{x<- 2\underline a_{T_i}
\varepsilon}.
\end{eqnarray*}
By It\^o's formula,
%
%e41 #&#
%
\begin{eqnarray}
\label{itoovershoot} &&2^{\beta-2}{E_{\mathcal F_{T_i}}
\bigl[|X_{T_{i+1}}-X_{T_i}|^\beta
\bigr]}\nonumber\\
&&\qquad\leq E_{\mathcal F_{T_i}}\bigl[f({X_{T_{i+1}}-X_{T_i}})\bigr]
\nonumber\\
&&\qquad= E_{\mathcal
F_{T_i}} \biggl[\int_{T_i}^{T_{i+1}}
f'({X_s - X_{T_i}}) b_s\,ds
\biggr]
\nonumber\\
&&\qquad\quad{} + E_{\mathcal
F_{T_i}} \biggl[\int_{T_i}^{T_{i+1}} \int
_{\mathbb R} \bigl\{ f({X_{s} + z - X_{T_i}})
- f({X_{s} - X_{T_i}})\\
&&\hspace*{158pt}{} - f'({X_{s}
- X_{T_i}}) z 1_{|z|\leq1} \bigr\} \mu(ds\times dz) \biggr]
\nonumber\\
&&\qquad\quad{} + E_{\mathcal
F_{T_i}} \biggl[\int_{T_i}^{T_{i+1}} \int
_{\mathbb R} \bigl\{ f({X_{s-} + z - X_{T_i}})
\nonumber\\
&&\hspace*{93pt}\qquad\quad{}- f({X_{s} - X_{T_i}}) \bigr\} (M-\mu) (ds\times dz)
\biggr].\nonumber
\end{eqnarray}
The first term in the right-hand side satisfies
\begin{eqnarray*}
&&E_{\mathcal
F_{T_i}} \biggl[\int_{T_i}^{T_{i+1}}
f'({X_s - X_{T_i}}) b_s\,ds
\biggr] \\
&&\qquad\leq(2\varepsilon)^{\beta-1}\max\bigl\{\underline
a_{T_i}^{\beta
-1},\overline a_{T_i}^{\beta-1}\bigr
\}E_{\mathcal
F_{T_i}} \biggl[\int_{T_i}^{T_{i+1}}
|b_s|\,ds \biggr].
\end{eqnarray*}
For the second term, we denote $A_s:= \{z \dvtx X_s + z - X_{T_i} \in
(-2\underline a_{T_i} \varepsilon, 2 \overline a_{T_i}
\varepsilon)\}$ and decompose it into two terms,
\begin{eqnarray*}
&&E_{\mathcal
F_{T_i}} \biggl[\int_{T_i}^{T_{i+1}} \int
_{A^c_s} \bigl\{ f({X_{s} + z - X_{T_i}})
\\
&&\hspace*{73pt}{}- f({X_{s} - X_{T_i}}) - f'({X_{s}
- X_{T_i}}) z 1_{|z|\leq1} \bigr\} \mu(ds\times dz) \biggr]
\\
&&\qquad\leq C E_{\mathcal
F_{T_i}} \biggl[\int_{T_i}^{T_{i+1}}
\int_{(-\underline a_{T_i} \varepsilon,\overline
a_{T_i} \varepsilon)^c} \bigl\{ |z|^\beta+ \varepsilon^{\beta-1}
\max\bigl\{\underline a_{T_i}^{\beta-1},\overline
a_{T_i}^{\beta-1}\bigr\}\\
&&\hspace*{237pt}{}\times|z| 1_{|z|\leq1} \bigr\} {\mu(ds
\times dz)} \biggr]
\\
&&\qquad\leq C \varepsilon^{\beta-\alpha} \bigl\{ \max\bigl\{\underline
a_{T_i}^{\beta-\alpha},\overline a_{T_i}^{\beta-\alpha}
\bigr\}+\max\bigl\{\underline a_{T_i}^{\beta-1},\overline
a_{T_i}^{\beta-1}\bigr\}\max\bigl\{\underline a_{T_i}^{1-\alpha},
\overline a_{T_i}^{1-\alpha}\bigr\} \bigr\} \\
&&\qquad\quad{}\times E_{\mathcal
F_{T_i}}
\biggl[\int_{T_i}^{T_{i+1}} { \widehat K_s}
\,ds \biggr]
\end{eqnarray*}
and
\begin{eqnarray*}
&& E_{\mathcal
F_{T_i}} \biggl[\int_{T_i}^{T_{i+1}} \int
_{A_s} \bigl\{ f({X_{s} + z - X_{T_i}})
- f({X_{s} - X_{T_i}}) \\
&&\hspace*{117pt}{}- f'({X_{s}
- X_{T_i}}) z 1_{|z|\leq1} \bigr\} \mu(ds\times dz) \biggr],
\end{eqnarray*}
which is smaller than
\begin{eqnarray*}
&&E_{\mathcal
F_{T_i}} \biggl[\int_{T_i}^{T_{i+1}} \int
_{A_s} \biggl\{\int_0^z
f''(X_s - X_{T_i} + x) (z-x)
\,dx \biggr\} \mu(ds\times dz) \biggr]
\\
&&\quad{} - E_{\mathcal
F_{T_i}} \biggl[\int_{T_i}^{T_{i+1}} \int
_{A_s} \bigl\{ f'({X_{s} -
X_{T_i}}) z 1_{|z|> 1} \bigr\} \mu(ds\times dz) \biggr]
\\
&&\qquad\leq C \varepsilon^{\beta-2}\max\bigl\{\underline a_{T_i}^{\beta-2},
\overline a_{T_i}^{\beta-2}\bigr\} E_{\mathcal
F_{T_i}} \biggl[\int
_{T_i}^{T_{i+1}} \int_{-3 \underline
a_{T_i}\varepsilon}^{3 \overline a_{T_i}\varepsilon}
z^2\mu(ds\times dz) \biggr]
\\
&&\qquad\quad{} + C\varepsilon^{\beta-1} \max\bigl\{\underline a_{T_i}^{\beta
-1},
\overline a_{T_i}^{\beta-1}\bigr\} E_{\mathcal
F_{T_i}} \biggl[\int
_{T_i}^{T_{i+1}} \int_{-3 \underline
a_{T_i}\varepsilon}^{3 \overline a_{T_i}\varepsilon}
z^2 \mu(ds\times dz) \biggr]
\\
&&\qquad\leq C \varepsilon^{\beta-\alpha} \bigl(\max\bigl\{\underline
a_{T_i}^{\beta
-2},\overline a_{T_i}^{\beta-2}
\bigr\} +\varepsilon\max\bigl\{\underline a_{T_i}^{\beta
-1},
\overline a_{T_i}^{\beta-1}\bigr\}\bigr)\\
&&\qquad\quad{}\times\max\bigl\{\underline
a_{T_i}^{2-\alpha},\overline a_{T_i}^{2-\alpha}
\bigr\} \\
&&\qquad\quad{}\times
E_{\mathcal
F_{T_i}} \biggl[\int_{T_i}^{T_{i+1}} {
\widehat K_s} \,ds \biggr],
\end{eqnarray*}
where we used \eqref{upperK2} in the last inequality.
Assembling the terms and doing some simple estimations yields the
statement of the lemma, provided we can show that the third term on
the right-hand side of \eqref{itoovershoot} is equal to zero. Splitting
it, once again, in two parts, we then get
\begin{eqnarray*}
&&E_{\mathcal
F_{T_i}} \biggl[\int_{T_i}^{T_{i+1}} \int
_{A_s^c}\bigl | f({X_{s-} + z - X_{T_i}}) -
f({X_{s} - X_{T_i}}) \bigr| \mu(ds\times dz) \biggr]
\\
&&\qquad \leq C E_{\mathcal
F_{T_i}} \biggl[\int_{T_i}^{T_{i+1}}
\int_{(-\infty,-\underline
a_{T_i}\varepsilon)\cup(\overline a_{T_i}\varepsilon,
\infty)}|z|^\beta\mu(ds\times dz) \biggr]\\
&&\qquad\leq C
\max\bigl\{\underline a_{T_i}^{\beta-\alpha},\overline
a_{T_i}^{\beta-\alpha}\bigr\} E_{\mathcal
F_{T_i}} \biggl[\int
_{T_i}^{T_{i+1}} {\widehat K_s} \,ds \biggr],
\end{eqnarray*}
and for the other term, using the ``isometry property'' of the
stochastic integral with respect
to the random measure together with \eqref{upperK2}, we obtain
\begin{eqnarray*}
&&E_{\mathcal
F_{T_i}} \biggl[ \biggl(\int_{T_i}^{T_{i+1}}
\int_{A_s}\frac{ f({X_{s-} + z -
X_{T_i}}) - f({X_{s} -
X_{T_i}})}{\max\{\underline
a_{T_i}^{\beta-2},\overline a_{T_i}^{\beta-2}\}} (M-\mu) (ds\times dz)
\biggr)^2 \biggr]
\\
&&\qquad\leq C\varepsilon^{2\beta-4} E_{\mathcal
F_{T_i}} \biggl[\int
_{T_i}^{T_{i+1}} \int_{-3\underline
a_{T_i}\varepsilon}^{3\overline a_{T_i}\varepsilon}
z^2\mu(ds\times dz) \biggr]\\
&&\qquad\leq C\varepsilon^{2\beta-2-\alpha}\max\bigl
\{\underline a_{T_i}^{2-\alpha},\overline a_{T_i}^{2-\alpha}
\bigr\} E_{\mathcal
F_{T_i}} \biggl[\int_{T_i}^{T_{i+1}} {
\widehat K_s} \,ds \biggr].
\end{eqnarray*}
Using the fact that both these terms have finite expectation by the
assumption of the lemma, we can now apply standard martingale
arguments to show that the third term in \eqref{itoovershoot} is equal
to zero.
\end{pf*}

%le7 #&#
\begin{lemma}\label{sumovershoot.lm}
Assume {$(\mathit{HX})$ and $(\mathit{HA}_2)$.} Let
$\{\tau_n\}$ be a sequence of stopping times converging to $T$ from
below. Then there exists $\varepsilon^*>0$ such that
%
%e42 #&#
%
\begin{equation}
\sup_{0<\varepsilon<\varepsilon^*} E \Biggl[ \Biggl(\varepsilon^{\alpha-\beta
}\sum
_{i=1}^{N^\varepsilon_T} |X_{T_{i}} -
X_{T_{i-1}}|^\beta \Biggr)^{1+\delta} \Biggr] <
\infty\label{sumovershoot1}
\end{equation}
and
%
%e43 #&#
%
\begin{equation}
\lim_{n\to\infty}\lim_{\varepsilon\downarrow0} E \Biggl[ \Biggl(
\varepsilon ^{\alpha-\beta}\sum_{i=N^\varepsilon_{\tau_n}+1}^{N^\varepsilon_T}
|X_{T_{i}} - X_{T_{i-1}}|^\beta \Biggr)^{1+\delta}
\Biggr] = 0.\label{sumovershoot2}
\end{equation}
\end{lemma}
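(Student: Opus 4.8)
The plan is to establish the uniform bound \eqref{sumovershoot1} first and then deduce the tail estimate \eqref{sumovershoot2} from it by dominated convergence. Write $Z_i:=|X_{T^\varepsilon_i}-X_{T^\varepsilon_{i-1}}|^{\beta}$ and $S^\varepsilon:=\varepsilon^{\alpha-\beta}\sum_{i=1}^{N^\varepsilon_T}Z_i$; summing the inequality of Corollary \ref{lower.cor} over $i$ and using $(HA_2)$ shows in particular $N^\varepsilon_T<\infty$ almost surely. The starting point is the elementary inequality, valid for nonnegative reals and $\delta\in(0,1)$,
$$\Big(\sum_i a_i\Big)^{1+\delta}\le(1+\delta)\sum_i a_i\Big(\sum_{j\le i}a_j\Big)^{\delta}\le C_\delta\Big(\sum_i a_i\Big(\sum_{j<i}a_j\Big)^{\delta}+\sum_i a_i^{1+\delta}\Big),$$
where the first step comes from telescoping $(\sum_{j\le i}a_j)^{1+\delta}-(\sum_{j<i}a_j)^{1+\delta}$ and the second from subadditivity of $x\mapsto x^{\delta}$. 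Applying it with $a_i=Z_i$ and multiplying by $\varepsilon^{(\alpha-\beta)(1+\delta)}$ splits $(S^\varepsilon)^{1+\delta}$ into a \emph{diagonal} part $\varepsilon^{(\alpha-\beta)(1+\delta)}\sum_i Z_i^{1+\delta}$ and a \emph{cross} part $\varepsilon^{(\alpha-\beta)(1+\delta)}\sum_i Z_i(\sum_{j<i}Z_j)^{\delta}$, which I treat separately.

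For the diagonal part one has $Z_i^{1+\delta}=|X_{T^\varepsilon_i}-X_{T^\varepsilon_{i-1}}|^{\beta(1+\delta)}$ with $\beta(1+\delta)<\alpha$ by the hypothesis in $(HA_2)$, so Lemma \ref{overshoot.lm} applies with exponent $\beta(1+\delta)$. Taking expectations, conditioning on $\mathcal F_{T^\varepsilon_{i-1}}$ and inserting \eqref{overbound}, bounding the ($\mathcal F_{T^\varepsilon_{i-1}}$-measurable) powers of $\underline a_{T^\varepsilon_{i-1}},\overline a_{T^\varepsilon_{i-1}}$ by their suprema over $[0,T]$, pulling those out of the sum and telescoping the integrals to $\int_0^T$, and collecting the powers of $\varepsilon$ (which amount to $\varepsilon^{\alpha(1+\delta)-1}$ and $\varepsilon^{\alpha\delta}$, both $\le1$ for $\varepsilon<1$), one bounds the diagonal part, uniformly in $\varepsilon<1$, by a constant times $E[\sup_{s\le T}(\cdots)\int_0^T(|b_s|+\widehat K_s)\,ds]$; this is finite under $(HA_2)$, using the H\"older inequality $\int_0^T h\,ds\le T^{\delta/(1+\delta)}(\int_0^T h^{1+\delta}\,ds)^{1/(1+\delta)}$ to reconcile the weights with the exponents appearing there.

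For the cross part the key point is that $\big(\sum_{j<i}Z_j\big)^{\delta}$ is $\mathcal F_{T^\varepsilon_{i-1}}$-measurable, so the expectation of the sum equals $E\big[\sum_i(\sum_{j<i}Z_j)^{\delta}E_{\mathcal F_{T^\varepsilon_{i-1}}}[Z_i]\big]$. Applying Lemma \ref{overshoot.lm} with exponent $\beta$, bounding the barrier weights by their suprema over $[0,T]$, using $(\sum_{j<i}Z_j)^{\delta}\le(\sum_j Z_j)^{\delta}=\varepsilon^{-\delta(\alpha-\beta)}(S^\varepsilon)^{\delta}$ and telescoping, one arrives at $\varepsilon^{(\alpha-\beta)(1+\delta)}E[\sum_i Z_i(\sum_{j<i}Z_j)^{\delta}]\le C\,E[\Phi\,(S^\varepsilon)^{\delta}]$, the surviving powers of $\varepsilon$ being $\varepsilon^{\alpha-1}$ (drift term) and $\varepsilon^0$ ($\widehat K$ term), with
$$\Phi=\Big(\sup_{s\le T}\max\{\underline a_s^{\beta-1},\overline a_s^{\beta-1}\}\Big)\!\int_0^T|b_s|\,ds+\Big(\sup_{s\le T}\max\{\underline a_s,\overline a_s\}^{\beta\vee(2-\alpha)}\min\{\underline a_s,\overline a_s\}^{(\beta-2)\wedge(-\alpha)}\Big)\!\int_0^T\widehat K_s\,ds.$$
By $(\int_0^T h\,ds)^{1+\delta}\le T^{\delta}\int_0^T h^{1+\delta}\,ds$, the two conditions making up $(HA_2)$ say precisely that $\Phi\in L^{1+\delta}$ --- which is exactly why they are stated in that form. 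Young's inequality $\Phi(S^\varepsilon)^{\delta}\le\eta(S^\varepsilon)^{1+\delta}+C_\eta\Phi^{1+\delta}$ then lets one absorb the $(S^\varepsilon)^{1+\delta}$-term on the left for $\eta$ small (after a standard truncation of $S^\varepsilon$ to justify the subtraction, once $E[(S^\varepsilon)^{1+\delta}]<\infty$ is known for each fixed $\varepsilon$). Combined with the diagonal bound, this yields \eqref{sumovershoot1} for $\varepsilon$ below some $\varepsilon^*$.

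Finally, \eqref{sumovershoot2} is obtained by rerunning the whole argument with every $\int_0^T(\cdots)\,ds$ replaced by $\int_{T^\varepsilon_{N^\varepsilon_{\tau_n}}}^T(\cdots)\,ds$: for fixed $n$, letting $\varepsilon\downarrow0$ --- so that $T^\varepsilon_{N^\varepsilon_{\tau_n}}\to\tau_n$ by Lemma \ref{supzero} --- turns the bound into a constant times $E[\sup_{s\le T}(\cdots)\int_{\tau_n}^T(|b_s|+\widehat K_s)\,ds]$ plus terms of lower order in $\varepsilon$, and since the integrand is dominated, uniformly in $n$, by the $L^1$ random variable supplied by $(HA_2)$ while $\tau_n\to T$, dominated convergence gives $\lim_n\lim_{\varepsilon\downarrow0}=0$. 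I expect the principal obstacle to be the cross term: closing the self-referential estimate through Young's inequality (which forces the a priori finiteness of $E[(S^\varepsilon)^{1+\delta}]$ at fixed $\varepsilon$) and, more delicately, the bookkeeping required to match the barrier exponents and the powers of $|b|$ and $\widehat K$ against the intricate form of $(HA_2)$.
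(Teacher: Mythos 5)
Your argument is correct and rests on the same two pillars as the paper's proof --- Lemma \ref{overshoot.lm} applied with the exponents $\beta$ and $\beta(1+\delta)$, and the $L^{1}$-integrability of the $\overline\Lambda$-type quantity encoded in $(HA_2)$ --- but it reaches them by a genuinely different decomposition. The paper writes $\sum_i Z_i = M^1_n+M^2_n+Z_n$ with two discrete-time martingales and a predictable part, and applies Burkholder's inequality, exploiting that $(1+\delta)/2<1$ collapses the square function into $\sum_i|\cdot|^{1+\delta}$; the crucial trick is that the predictable part is written as $\sum_i E_{\mathcal F_{T_{i-1}}}[Z_i]\,\int_{T_{i-1}}^{T_i}\Lambda^{T_{i-1}}_s ds\big/E_{\mathcal F_{T_{i-1}}}[\int_{T_{i-1}}^{T_i}\Lambda^{T_{i-1}}_s ds]$, and Lemma \ref{overshoot.lm} makes the ratio $E_{\mathcal F_{T_{i-1}}}[Z_i]/E_{\mathcal F_{T_{i-1}}}[\int\Lambda^{T_{i-1}}]$ bounded by $C\varepsilon^{\beta-\alpha}$, so that no cross term ever appears and no absorption is needed. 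You instead use the deterministic telescoping inequality $(\sum_i a_i)^{1+\delta}\le C(\sum_i a_i^{1+\delta}+\sum_i a_i(\sum_{j<i}a_j)^{\delta})$, whose diagonal term coincides with the paper's treatment of $M^1$, and close the self-referential estimate for the cross term via Young's inequality together with a truncation of the number of summands (needed, as you note, to justify subtracting $\eta E[(S^\varepsilon)^{1+\delta}]$; with the truncated sum the a priori finiteness follows from Minkowski and Lemma \ref{overshoot.lm} with exponent $\beta(1+\delta)$, and monotone convergence then removes the truncation). Your power-of-$\varepsilon$ accounting ($\varepsilon^{\alpha(1+\delta)-1}$ and $\varepsilon^{\alpha\delta}$ for the diagonal part, $\varepsilon^{\alpha-1}$ and $\varepsilon^{0}$ for the cross part) is exact and matches the paper's. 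What the paper's route buys is the absence of any bootstrap/absorption step; what yours buys is that it is entirely elementary, avoiding Burkholder and the (mildly delicate) claim that $N^\varepsilon_T$ and $N^\varepsilon_\tau$ are stopping times of the discrete filtration $\mathcal F^d_n=\mathcal F_{T_n}$. Both proofs share the same level of looseness in matching the barrier exponents produced by Lemma \ref{overshoot.lm} with $\beta'=\beta(1+\delta)$ against those written in $(HA_2)$, and both handle \eqref{sumovershoot2} identically, by localizing the telescoped integrals to $[T_{N^\varepsilon_{\tau_n}},T_{N^\varepsilon_T}]$ and using Lemma \ref{supzero} together with dominated convergence.
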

\begin{pf}
In this proof, we shall use the notation
\begin{eqnarray*}
\overline\Lambda_t &= &\sup_{0\leq s\leq T}\bigl(\max\bigl\{
\underline a_s^{\beta
-1},\overline a_s^{\beta-1}
\bigr\}^{1+\delta}+\max\bigl\{\underline a_s^{(1+\delta)\beta
-1},
\overline a_s^{(1+\delta)\beta-1}\bigr\}\bigr) |b_t|^{1+\delta}\\
&&{}+
\sup_{0\leq s
\leq T}\max\{\underline a_s,\overline
a_s\}^{(\beta\vee(2-\alpha))(1+\delta)}\min\{\underline a_s,\overline
a_s\}^{((\beta-2)\wedge(-\alpha))(1+\delta)} {\widehat K_t}^{1+\delta}.
\end{eqnarray*}
%
%We extend the processes $\lambda$ and $b$
%by arbitrary constant values and the process $K$ by a deterministic
%function of $z$ satisfying \eqref{upperK} beyond $T$ and define the
%process $X$ for
%$t\geq T$ accordingly.
We now use a martingale {decomposition} of the sum of the increments.
So we write
\begin{eqnarray*}
\sum_{i=1}^{n} |X_{T_{i}} -
X_{T_{i-1}}|^\beta&=& M^1_n +
M^2_n + Z_n,
\\
M^1_n &=& \sum_{i=1}^{n}
\bigl\{|X_{T_{i}} - X_{T_{i-1}}|^\beta- E_{\mathcal F_{T_{i-1}}}
\bigl[|X_{T_{i}} - X_{T_{i-1}}|^\beta\bigr] \bigr\},
\\
 M^2_n& =& \sum_{i=1}^{n}
E_{\mathcal F_{T_{i-1}}}\bigl[|X_{T_{i}} - X_{T_{i-1}}|^\beta
\bigr] \biggl\{1- \frac{\int_{T_{i-1}}^{T_{i}} \Lambda^{T_{i-1}}_s
\,ds}{E_{\mathcal F_{T_{i-1}}} [\int_{T_{i-1}}^{T_{i}} \Lambda^{T_{i-1}}_s
\,ds ]} \biggr\},
\\
Z_n& =& \sum_{i=1}^{n}
E_{\mathcal F_{T_{i-1}}}\bigl[|X_{T_{i}} - X_{T_{i-1}}|^\beta
\bigr] \frac{\int_{T_{i-1}}^{T_{i}} \Lambda^{T_{i-1}}_s
\,ds}{E_{\mathcal F_{T_{i-1}}} [\int_{T_{i-1}}^{T_{i}} \Lambda^{T_{i-1}}_s
\,ds ]},
\end{eqnarray*}
where we write
\begin{eqnarray*}
\Lambda^{T_i}_s&:=& \varepsilon^{\alpha-1} \max\bigl
\{\underline a_{T_i}^{\beta-1},\overline a_{T_i}^{\beta-1}
\bigr\}|b_s|\\
&&{}+ \max\{\underline a_{T_i},\overline
a_{T_i}\}^{\beta\vee(2-\alpha)}\min\{\underline a_{T_i},\overline
a_{T_i}\}^{(\beta-2)\wedge(-\alpha)} {\widehat K_s}.
\end{eqnarray*}
The processes $M^1$ and $M^2$ are martingales with respect to the
discrete filtration $\mathcal F^d_n:= \mathcal
F_{T_{n}}$. Note that for every $\mathcal F$-stopping time $\tau\leq T$,
$N^\varepsilon_{\tau}$ is an $\mathcal
F^d$-stopping time.
The Burkholder inequality for a discrete-time martingale $M$ then writes
\begin{eqnarray*}
E\bigl[|M_{N^\varepsilon_T} - M_{N_\tau^\varepsilon}|^{1+\delta}\bigr]&\leq &C E
\Biggl[ \Biggl(\sum_{i=N^\varepsilon_{\tau}+1}^{N^\varepsilon_T}
(M_i-M_{i-1})^2 \Biggr)^{{(1+\delta)}/{2}}
\Biggr] \\
&\leq & C E \Biggl[\sum_{i=N^\varepsilon_{\tau}+1}^{N^\varepsilon_T}
|M_i-M_{i-1}|^{1+\delta} \Biggr],
\end{eqnarray*}
and therefore,
\begin{eqnarray*}
&&E\bigl[\bigl|\varepsilon^{\alpha-\beta}
\bigl(M^1_{N^\varepsilon_T}-M^1_{N^\varepsilon_\tau}
\bigr)\bigr|^{1+\delta}\bigr]\\
&&\qquad\leq C \varepsilon^{(\alpha-\beta)(1+\delta)}
E \Biggl[\sum
_{i=N^\varepsilon_\tau+1}^{N^\varepsilon_T} \bigl||X_{T_{i}} -
X_{T_{i-1}}|^\beta- E_{\mathcal F_{T_{i-1}}}\bigl[|X_{T_{i}} -
X_{T_{i-1}}|^\beta\bigr] \bigr|^{1+\delta} \Biggr]
\\
&&\qquad\leq C \varepsilon^{(\alpha-\beta)(1+\delta)} E \Biggl[\sum_{i=N^\varepsilon_\tau+1}^{N^\varepsilon_T}
E_{\mathcal
F_{T_{i-1}}}\bigl[|X_{T_{i}} - X_{T_{i-1}}|^{\beta(1+\delta)}
\bigr] \Biggr].
\end{eqnarray*}
By Lemma \ref{overshoot.lm}, this is smaller than
\begin{eqnarray*}
&&C E \biggl[\varepsilon^{\alpha(1+\delta)-1} \sup_{0\leq s\leq T}\max\bigl\{
\underline a_s^{\beta'-1},\overline a_s^{\beta'-1}
\bigr\}\int_{T_{N^\varepsilon_\tau}}^{{T_{N^\varepsilon_T }}} |b_s|\,ds
\\
&&\hspace*{22pt}{}+ \varepsilon^{\alpha\delta}\sup_{0\leq s\leq T} \max \{\underline
a_s,\overline a_s\}^{\beta'\vee(2-\alpha)} \min\{\underline
a_s,\overline a_s\}^{(\beta'-2)\wedge(-\alpha)} \int
_{T_{N^\varepsilon_\tau
}}^{{T_{N^\varepsilon_T}}} {\widehat K_s} \,ds \biggr]
\\
&&\qquad\leq C \varepsilon^{\alpha\delta} \biggl(E \biggl[ \int_{T_{N^\varepsilon_\tau}}^{{T_{N^\varepsilon_T}}}
\overline\Lambda_s \,ds \biggr] + E \biggl[ \int_{T_{N^\varepsilon_\tau
}}^{{T_{N^\varepsilon_T}}}
\overline\Lambda_s \,ds \biggr]^{{1}/{(1+\delta)}} \biggr),
\end{eqnarray*}
with $\beta' = \beta(1+\delta)$, where the last estimate can be
obtained, for example, by H\"older's inequality. %By the
%assumption of the corollary the above expression is bounded uniformly
%on $\varepsilon$ for $\varepsilon$ sufficiently small.

Similarly, the process $M^2$ satisfies
\begin{eqnarray*}
&&E\bigl[\bigl|\varepsilon^{\alpha-\beta}
 \bigl(M^2_{N^\varepsilon_T}-M^2_{N^\varepsilon
_\tau}
\bigr)\bigr|^{1+\delta}\bigr]\\
&&\qquad \leq C E \Biggl[\sum_{i=N^\varepsilon_\tau+ 1}^{N^\varepsilon_T}
\biggl\{\int_{T_{i-1}}^{T_{i}} \Lambda^{T_{i-1}}_s
\,ds - E_{\mathcal F_{T_{i-1}}} \biggl[\int_{T_{i-1}}^{T_{i}}
\Lambda^{T_{i-1}}_s \,ds \biggr] \biggr\}^{1+\delta} \Biggr]
\\
&&\qquad\leq C E \Biggl[\sum_{i=N^\varepsilon_\tau+ 1}^{N^\varepsilon_T} \biggl\{
\int_{T_{i-1}}^{T_{i}} \Lambda^{T_{i-1}}_s
\,ds \biggr\}^{1+\delta} \Biggr] \leq C E \biggl[\int_{T_{N^\varepsilon_\tau
}}^{T_{N^\varepsilon_T}}
\bigl(\Lambda^{\eta_s}_s\bigr)^{1+\delta} \,ds \biggr]
\\
&&\qquad\leq C E \biggl[\int_{T_{N^\varepsilon_\tau}}^{T_{N^\varepsilon
_T}} \overline
\Lambda_s \,ds \biggr].
\end{eqnarray*}
%
%which is shown to be uniformly bounded as above, using the condition
The process $Z$ can be treated along the same lines
as well, since by Lemma~\ref{overshoot.lm},
\[
E\bigl[\bigl|\varepsilon^{\alpha-\beta} (Z_{N^\varepsilon_T}-Z_{N^\varepsilon
_\tau})\bigr|^{1+\delta}
\bigr] \leq C E \biggl[ \biggl\{\int_{T_{N^\varepsilon_\tau}}^{T_{N^\varepsilon_T}}
\Lambda ^{\eta_s}_s \,ds \biggr\}^{1+\delta} \biggr] \leq C
E \biggl[\int_{T_{N^\varepsilon_\tau
}}^{T_{N^\varepsilon_T}} \overline
\Lambda_s \,ds \biggr].
\]
The three expressions above are uniformly bounded by the assumption of
the lemma, proving \eqref{sumovershoot1}. To show
\eqref{sumovershoot2}, observe that
\[
E \biggl[\int_{T_{N^\varepsilon_{\tau_n}}}^{T_{N^\varepsilon_T}} \overline
\Lambda_s \,ds \biggr] \leq E \biggl[\int_{\tau_n}^{T}
\overline\Lambda_s \,ds \biggr] + E \biggl[\sup_{i: T_i\leq T}
\int_{T_{i-1}}^{T_{i}} \overline \Lambda_s
\,ds \biggr].
\]
The first term does not depend on $\varepsilon$ and converges to zero
as $n\to\infty$ by the assumption of the lemma and the dominated
convergence. For the second term, we use Lemma \ref{supzero} and the
absolute continuity of the integral.
\end{pf}

In the case $\beta=0$, assumption {$(\mathit{HA}_2)$} can be
somewhat simplified.
%
%le8 #&#
\begin{lemma}
Assume {$(\mathit{HX})$ and $(\mathit{HA}'_2)$}. Let
$\{\tau_n\}$ be a sequence of stopping times converging to $T$ from
below.
Then there exists $\varepsilon^*>0$ such that
\[
\sup_{0< \varepsilon< \varepsilon^*} E\bigl[\bigl(\varepsilon^\alpha
N^\varepsilon_T\bigr)^{1+\delta}\bigr] <\infty
\]
and
\[
\lim_{n\to\infty}\lim_{\varepsilon\downarrow0} E \bigl[ \bigl(
\varepsilon ^{\alpha}\bigl(N^\varepsilon_T-N^\varepsilon_{\tau_n}
\bigr) \bigr)^{1+\delta} \bigr] = 0.
\]
\end{lemma}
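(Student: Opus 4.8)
The plan is to reproduce the argument of Lemma \ref{sumovershoot.lm} in the degenerate case $\beta=0$, in which $\sum_{i=1}^{N^\varepsilon_T}|X_{T_i}-X_{T_{i-1}}|^\beta$ is just $N^\varepsilon_T$ and $E_{\mathcal F_{T_{i-1}}}[|X_{T_i}-X_{T_{i-1}}|^0]=1$, so that the ``overshoot'' martingale $M^1$ of that proof disappears and the only randomness left to control is the gap between the realised times $T_i-T_{i-1}$ and their compensators. The lower bound on $\varepsilon^\alpha$ now comes from Corollary \ref{lower.cor} rather than Lemma \ref{overshoot.lm}: writing $\Lambda^{T_i}_s:=\varepsilon^{\alpha-1}\min(\underline a_{T_i},\overline a_{T_i})^{-1}|b_s|+\min(\underline a_{T_i},\overline a_{T_i})^{-\alpha}\widehat K_s$ and $\xi_i:=\int_{T_{i-1}}^{T_i}\Lambda^{T_{i-1}}_s\,ds$ (an $\mathcal F_{T_i}$-measurable quantity), the corollary gives $\varepsilon^\alpha\le c\,E_{\mathcal F_{T_{i-1}}}[\xi_i]$ for every $i$. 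Before starting I would record that $N^\varepsilon_T<\infty$ a.s.\ because $X$ is c\`adl\`ag (the oscillation argument in the proof of Lemma \ref{supzero}), so that the sums below make sense.

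Summing over $i$ and adding and subtracting the compensators, I would write $\varepsilon^\alpha N^\varepsilon_T\le c\sum_{i=1}^{N^\varepsilon_T}E_{\mathcal F_{T_{i-1}}}[\xi_i]=c\sum_{i=1}^{N^\varepsilon_T}\xi_i+c\,M'_{N^\varepsilon_T}$, where $M'_n:=\sum_{i=1}^n\big(E_{\mathcal F_{T_{i-1}}}[\xi_i]-\xi_i\big)$ is a martingale for $\mathcal F^d_n:=\mathcal F_{T_n}$. Since $T_{N^\varepsilon_T}\le T$, the last partial interval stays inside $[0,T]$, so for $\varepsilon\le 1$ (which makes $\varepsilon^{\alpha-1}\le 1$) one has $\sum_{i=1}^{N^\varepsilon_T}\xi_i\le\int_0^T\overline\Lambda_s\,ds$ with $\overline\Lambda_t:=\sup_{0\le s\le T}\min(\underline a_s,\overline a_s)^{-1}|b_t|+\sup_{0\le s\le T}\min(\underline a_s,\overline a_s)^{-\alpha}\widehat K_t$; by H\"older's inequality and Assumption $(HA'_2)$ (where these two suprema enter raised to the powers $1+\delta$ and $\alpha(1+\delta)$) this yields $E[(\int_0^T\overline\Lambda_s\,ds)^{1+\delta}]<\infty$. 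The term $M'_{N^\varepsilon_T}$ I would control with the Burkholder inequality for discrete-time martingales: since $1+\delta\le 2$, $E[|M'_{N^\varepsilon_T}|^{1+\delta}]\le C\,E[\sum_{i=1}^{N^\varepsilon_T}|M'_i-M'_{i-1}|^{1+\delta}]\le C\,E[\sum_{i=1}^{N^\varepsilon_T}(\xi_i^{1+\delta}+E_{\mathcal F_{T_{i-1}}}[\xi_i^{1+\delta}])]$ by conditional Jensen, and a pathwise Jensen inequality on each $[T_{i-1},T_i]$ gives $\xi_i^{1+\delta}\le\big(\sup_j(T_j-T_{j-1})\big)^\delta\int_{T_{i-1}}^{T_i}\overline\Lambda_s^{1+\delta}\,ds$, hence $\sum_{i=1}^{N^\varepsilon_T}\xi_i^{1+\delta}\le T^\delta\int_0^T\overline\Lambda_s^{1+\delta}\,ds$ (integrable under $(HA'_2)$); the sum of conditional moments is dominated in expectation by the same quantity via the tower property after the index replacement described below. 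This gives $\sup_{0<\varepsilon<1}E[(\varepsilon^\alpha N^\varepsilon_T)^{1+\delta}]<\infty$, i.e.\ the first assertion with $\varepsilon^*=1$.

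For the second assertion I would run the same decomposition restricted to the indices $N^\varepsilon_{\tau_n}<i\le N^\varepsilon_T$, obtaining $\varepsilon^\alpha(N^\varepsilon_T-N^\varepsilon_{\tau_n})\le c\big(M'_{N^\varepsilon_T}-M'_{N^\varepsilon_{\tau_n}}\big)+c\sum_{i=N^\varepsilon_{\tau_n}+1}^{N^\varepsilon_T}\xi_i$; since $T_{N^\varepsilon_{\tau_n}}\le\tau_n$, the last sum is at most $\int_{\tau_n}^T\overline\Lambda_s\,ds+\sup_i\int_{T_{i-1}}^{T_i}\overline\Lambda_s\,ds$. Raising to the power $1+\delta$ and taking expectations, the $\int_{\tau_n}^T$ term is $\varepsilon$-free and tends to $0$ as $n\to\infty$ by dominated convergence (majorant $(\int_0^T\overline\Lambda_s\,ds)^{1+\delta}$), while $\sup_i\int_{T_{i-1}}^{T_i}\overline\Lambda_s\,ds\to 0$ as $\varepsilon\downarrow 0$ by Lemma \ref{supzero} and the absolute continuity of the Lebesgue integral; the martingale increment, estimated as in the previous paragraph but over the same index range, carries an extra prefactor $\big(\sup_i(T_i-T_{i-1})\big)^\delta$ in front of $\int_0^T\overline\Lambda_s^{1+\delta}\,ds$, which again vanishes as $\varepsilon\downarrow 0$ by Lemma \ref{supzero} and dominated convergence. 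Passing first to $\varepsilon\downarrow 0$ and then to $n\to\infty$ yields $0$.

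The step I expect to be the real obstacle is the one I have been deferring: $N^\varepsilon_T$ and $N^\varepsilon_{\tau_n}$ are not predictable times of $\mathcal F^d$, so one cannot simply ``un-condition'' $E_{\mathcal F_{T_{i-1}}}[\xi_i]$ inside the random-length sums; the resolution, exactly as in the proof of Lemma \ref{sumovershoot.lm}, is to bound the non-predictable indicator $1_{\{i\le N^\varepsilon_T\}}=1_{\{T_i\le T\}}$ by the $\mathcal F_{T_{i-1}}$-measurable $1_{\{T_{i-1}<T\}}$ (and likewise $1_{\{N^\varepsilon_{\tau_n}<i\le N^\varepsilon_T\}}\le 1_{\{T_{i-1}\ge\tau_n,\,T_{i-1}<T\}}+1_{\{T_{i-1}<\tau_n\le T_i\}}$, the last indicator being nonzero for at most one $i$), then to invoke the tower property, using that $T_{N^\varepsilon_T}\le T$ keeps all but a harmless extra partial interval inside $[0,T]$. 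Once this bookkeeping is in place, everything else reduces to routine repetitions of the computations already carried out in the proof of Lemma \ref{sumovershoot.lm}.
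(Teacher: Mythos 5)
Your proposal is correct and follows essentially the same route as the paper, whose proof consists precisely of the instruction to repeat the argument of Lemma \ref{sumovershoot.lm} with $\beta=0$, the modified process $\Lambda^{T_i}_s$, and Corollary \ref{lower.cor} in place of Lemma \ref{overshoot.lm} --- which is exactly the program you carry out. The only cosmetic difference is that you center the increments $\xi_i$ directly (your martingale $M'$) rather than normalizing by $E_{\mathcal F_{T_{i-1}}}[\xi_i]$ as in the paper's $M^2$ and $Z$ decomposition; the measurability bookkeeping you flag at the end is the standard fix and is not treated more explicitly in the paper itself.
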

\begin{pf}
We follow the proof of Lemma \ref{sumovershoot.lm}, taking $\beta=0$ and
\[
\Lambda^{T_i}_s:= \varepsilon^{\alpha-1} \min\{
\underline a_{T_i},\overline a_{T_i}\}^{-1}|b_s|+
\min\{\underline a_{T_i},\overline a_{T_i}\}^{-\alpha}
{\widehat K_s}
\]
and using Corollary \ref{lower.cor} instead of Lemma \ref{overshoot.lm}.
\end{pf}

\section{Proof of Theorem \protect\texorpdfstring{\ref{cost.thm}}{2}}\label{proofthm2}
\textit{Step} 1. \textit{Reduction to the case of bounded coefficients}. As
before, we start with the localization procedure.
%
%le9 #&#
\begin{lemma}
Assume that \eqref{cost0.eq} holds under the assumptions $(\mathit{HY})$,
$(\mathit{HX})$ and $(H'_1)$ and \eqref{cost.eq} holds under the assumptions $(\mathit{HY})$,
$(\mathit{HX})$ and $(H'_\rho)$ for some $\rho>
\frac{\alpha}{\alpha-\beta}\vee2$. Then Theorem \ref{cost.thm} holds.
\end{lemma}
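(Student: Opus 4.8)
The plan is to repeat the localization argument of Lemma~\ref{loc.lm}, with the cost functional in place of the error functional and with $(H'_\rho)$ in place of $(H'_1)$ (for part ii; part i runs with $\rho=1$, exactly as in Lemma~\ref{loc.lm}). As in that proof, under $(HX^1_{\mathrm{loc}})$ the process $L_t=\int_0^t\int_{\mathbb R}\big(K_s(z)^{-1}-1\big)(M-\mu)(ds\times dz)$ is a local martingale with $E[[L]^{1/2}_{T\wedge\tau_n}]<\infty$, so $Z=\mathcal E(L)$ is well defined; setting $\sigma_n=\tau_n\wedge\inf\{t:Z_t\ge n\}$ and $dQ^n/dP|_{\mathcal F_t}=Z_{t\wedge\sigma_n}$, Girsanov's theorem gives that $M$ has $Q^n$-compensator $dt\times\lambda_t\nu(dz)$ on $\{t\le\sigma_n\}$ and $Z^{-1}_{\cdot\wedge\sigma_n}=\mathcal E\big(\int(K_s(z)-1)(M-\mu^{Q^n})(ds\times dz)\big)_{\cdot\wedge\sigma_n}$. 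The only genuinely new point with respect to Lemma~\ref{loc.lm} is that, for part ii, one must extract a sequence $\gamma_n\uparrow T$ along which $Z^{-1}$ has a finite $\rho$-th moment under $Q^n$, namely $E^{Q^n}[\sup_{t\le T}(Z^{-1}_{t\wedge\gamma_n})^\rho]<\infty$, and not merely a first moment; this is precisely what the $L^{2\rho}$ bound \eqref{integrK} in $(HX^\rho_{\mathrm{loc}})$ (used with $\tfrac1{C_n}\le\lambda_t\le C_n$, $\widehat K_t\le C_n$, $|b_t|\le C_n$ on $[0,\tau_n]$) supplies, via the Burkholder--Davis--Gundy inequality applied to the purely discontinuous martingale $\int(K_s(z)-1)(M-\mu^{Q^n})(ds\times dz)$ stopped suitably and the elementary bounds relating $(\sqrt K-1)^{2\rho}$ to $(K-1)^2\mathbf 1_{|K-1|\le1/2}$ and $|K-1|\mathbf 1_{|K-1|>1/2}$ recalled in Remark~\ref{purejumprem}. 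One then sets $Y^n_t=Y_{t\wedge\gamma_n}$, $\lambda^n_t=\lambda_{t\wedge\gamma_n}$, $b^n_t=b_{t\wedge\gamma_n}$, $\widehat K^n_t=\widehat K_{t\wedge\gamma_n}$, $A^n_t=A_t\mathbf 1_{t\le\gamma_n}$, $\underline a^n_t=\underline a_{t\wedge\gamma_n}$, $\overline a^n_t=\overline a_{t\wedge\gamma_n}$, $K^n_t(z)=K_t(z)\mathbf 1_{t\le\gamma_n}+\mathbf 1_{t>\gamma_n}$, and defines $X^n$ by \eqref{defX.eq} with these coefficients; then $X^n\equiv X$ and (since \eqref{discrule} depends only on $X,\underline a,\overline a$) $T^{\varepsilon,n}_i=T^\varepsilon_i$ whenever $T^\varepsilon_i\le\gamma_n$, $\langle Y^n\rangle=\int_0^\cdot A^n_s\,ds$, and the new coefficients satisfy $(H'_\rho)$ (resp.\ $(H'_1)$); moreover, being bounded away from $0$ and $\infty$ (for $\underline a^n,\overline a^n,\lambda^n$) and bounded (for $b^n,\widehat K^n$), they also satisfy $(HX)$ together with $(HA_2)$ (resp.\ $(HA'_2)$).

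Applying the assumed statement (\eqref{cost0.eq} resp.\ \eqref{cost.eq}) to $X^n,Y^n$ gives
\begin{equation*}
\lim_{\varepsilon\to0}\varepsilon^{\alpha-\beta}\mathcal C^{\beta,n}(\varepsilon)=E\Big[\int_0^{\gamma_n}\lambda_t\,\frac{u^\beta(\underline a_t,\overline a_t)}{g(\underline a_t,\overline a_t)}\,dt\Big]+r_n,\qquad r_n:=E\Big[(T-\gamma_n)\,\lambda_{\gamma_n}\,\frac{u^\beta(\underline a_{\gamma_n},\overline a_{\gamma_n})}{g(\underline a_{\gamma_n},\overline a_{\gamma_n})}\Big]
\end{equation*}
(with $u^0\equiv1$ when $\beta=0$), the first term being finite and increasing to $E[\int_0^T\lambda_t u^\beta(\underline a_t,\overline a_t)/g(\underline a_t,\overline a_t)\,dt]=:I$ by \eqref{sumovershoot1}, Fatou's lemma and monotone convergence, and $r_n\to0$ as $n\to\infty$ by $(HA_2)$ (resp.\ $(HA'_2)$), the c\`adl\`ag property of the coefficients, $\gamma_n\uparrow T$ and dominated convergence. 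Since the cost functionals of $X$ and $X^n$ coincide on the discretization intervals contained in $[0,\gamma_n]$,
\begin{equation*}
\varepsilon^{\alpha-\beta}\mathcal C^{\beta}(\varepsilon)=\varepsilon^{\alpha-\beta}\mathcal C^{\beta,n}(\varepsilon)+\varepsilon^{\alpha-\beta}E\Big[\sum_{i=N^\varepsilon_{\gamma_n}+1}^{N^\varepsilon_T}|X_{T^\varepsilon_i}-X_{T^\varepsilon_{i-1}}|^\beta\Big]-\varepsilon^{\alpha-\beta}E\Big[\sum_{i=N^{\varepsilon,n}_{\gamma_n}+1}^{N^{\varepsilon,n}_T}|X^n_{T^{\varepsilon,n}_i}-X^n_{T^{\varepsilon,n}_{i-1}}|^\beta\Big].
\end{equation*}
The first (``$X$-tail'') correction satisfies $\lim_{n\to\infty}\limsup_{\varepsilon\to0}$ of its expectation $=0$ by \eqref{sumovershoot2} (resp.\ its $\beta=0$ analogue) applied with the stopping times $\gamma_n$ and Jensen's inequality; for the second (``$X^n$-tail'') correction, the coefficients of $X^n$ are frozen on $(\gamma_n,T]$, so Lemma~\ref{overshoot.lm} (resp.\ Corollary~\ref{lower.cor}) bounds each conditional increment and, after summation, use of $\alpha>\beta$ and of Lemma~\ref{supzero} for the single interval straddling $\gamma_n$, one obtains $\limsup_{\varepsilon\to0}\varepsilon^{\alpha-\beta}E[\cdots]\le C\,E[\Psi(\underline a_{\gamma_n},\overline a_{\gamma_n})(T-\gamma_n)\widehat K_{\gamma_n}]$ for an explicit power function $\Psi$, which tends to $0$ as $n\to\infty$ as above. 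Taking $\limsup_{\varepsilon\to0}$ and then $n\to\infty$ in the displayed identity gives $\lim_{\varepsilon\to0}\varepsilon^{\alpha-\beta}\mathcal C^\beta(\varepsilon)=I$, which is \eqref{cost0.eq} (resp.\ \eqref{cost.eq}), hence Theorem~\ref{cost.thm}.

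I expect the main difficulties to be twofold. First, producing the localizing sequence along which $\sup Z^{-1}$ has a finite $\rho$-th (not merely first) moment: this is the sole point where $(HX^\rho_{\mathrm{loc}})$ with $\rho>\tfrac{\alpha}{\alpha-\beta}\vee2$, rather than just $(HX^1_{\mathrm{loc}})$, is needed, through the $L^{2\rho}$-control of $\sqrt K-1$ and a moment estimate for a stochastic exponential of a purely discontinuous martingale. Second, the transfer step itself: in contrast with the error functional of Lemma~\ref{loc.lm}, where $A^n\equiv0$ beyond $\gamma_n$ makes the tail trivially small, here the cost density $\lambda^n u^\beta/g$ does not vanish on $(\gamma_n,T]$, so the $X^n$-tail must be controlled by the quantitative overshoot estimates of Section~\ref{prooflm} (Lemma~\ref{overshoot.lm}, Corollary~\ref{lower.cor}, Lemma~\ref{sumovershoot.lm}) together with $(HA_2)$/$(HA'_2)$, rather than by a plain boundedness argument.
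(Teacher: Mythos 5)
Your proposal is correct and follows essentially the same route as the paper: localize via the measure change of Lemma \ref{loc.lm}, apply the bounded-coefficient version of \eqref{cost0.eq}/\eqref{cost.eq} to the stopped process $X^n$, and control the tail beyond $\gamma_n$ by \eqref{sumovershoot2} together with the overshoot estimates of Section \ref{prooflm}. You are in fact more explicit than the paper on two points it leaves implicit --- the $\rho$-th moment of $\sup_t Z_t^{-1}$ under $Q$ needed for $(H'_\rho)$, and the ``$X^n$-tail'' correction between the full sum for $X^n$ and the sum restricted to $[0,\gamma_n]$ --- and both are handled in the intended way.
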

\begin{pf}
The arguments related to the localization of $Z$ are the same or very
similar to those in Lemma \ref{loc.lm}, and so they are omitted. We set
$u^0(a,b)=1$ for any $(a,b)$. With the same notation as in
the proof of this lemma, and using \eqref{sumovershoot2} in the first
equality we then get, for $0\leq\beta<\alpha$,
\begin{eqnarray*}
&&\lim_{\varepsilon\downarrow0} \varepsilon^{\alpha-\beta} E \Biggl[\sum
_{i=1}^{N^\varepsilon_T} |X_{T_i}-X_{T_{i-1}}|^\beta
\Biggr]\\
&&\qquad= \lim_{n\to\infty} \lim_{\varepsilon\downarrow0}
\varepsilon^{\alpha-\beta}E \Biggl[\sum_{i=1}^{N^\varepsilon_{\gamma_n}}
|X_{T_i} - X_{T_{i-1}}|^\beta \Biggr]
\\
&&\qquad = \lim_{n\to\infty} \lim_{\varepsilon\downarrow0}
\varepsilon^{\alpha-\beta}E \biggl[\sum_{i\geq1: T^n_i\leq\gamma_n }
\bigl|X^n_{T_i} - X^n_{T_{i-1}}\bigr|^\beta
\biggr]
\\
&&\qquad = \lim_{n\to\infty} E \biggl[\int_0^{\gamma_n}
\lambda_t \frac{u^\beta(\underline a_t, \overline a_t)}{g(\underline a_t,
\overline
a_t)}\,dt \biggr] \\
&&\qquad= E \biggl[\int
_0^{T} \lambda_t
\frac{u^\beta(\underline a_t, \overline a_t)}{g(\underline a_t,
\overline
a_t)}\,dt \biggr],
\end{eqnarray*}
where the assumptions of the lemma are used to pass from the second
to the third line.

{\textit{Step} 2. \textit{Change of probability measure}.} The goal of
this step
is to show that
%
%e44 #&#
%
\begin{eqnarray}\label{cost2}
S_2&:=&\lim_{\varepsilon\downarrow0} \varepsilon^{\alpha-\beta} E
\Biggl[\sum_{i=1}^{N^\varepsilon_T} |X_{T_i} -
X_{T_{i-1}}|^\beta \Biggr]
\nonumber
\\[-8pt]
\\[-8pt]
\nonumber
&=& \lim_{\varepsilon\downarrow0}
\varepsilon^{\alpha-\beta}E^Q \Biggl[\sum
_{i=1}^{\infty} 1_{T_{i-1}\leq T} Z^{-1}_{T_{i-1}}|X_{T_i}
- X_{T_{i-1}}|^\beta \Biggr].
\end{eqnarray}
For the right-hand side to be well defined we extend the processes
$\lambda$, $b$, $\underline a$, $\overline a$ by arbitrary
constant values beyond $T$ and define the process $X$ for $t\geq
T$ accordingly. The case $\beta=0$ being straightforward, we
assume that $\beta>0$.

To prove \eqref{cost2}, it is enough to show that
%
%e45 #&#
%
\begin{equation}
\lim_{\varepsilon\downarrow0} E^Q \Biggl[\varepsilon^{\alpha-\beta}
\sum_{i=1}^{\infty} 1_{T_{i}\leq T}
\bigl(Z^{-1}_{T_{i}} -Z^{-1}_{T_{i-1}}
\bigr)|X_{T_i} - X_{T_{i-1}}|^\beta \Biggr] =
0\label{cost21}
\end{equation}
and
%
%e46 #&#
%
\begin{equation}
\lim_{\varepsilon\downarrow0} \varepsilon^{\alpha-\beta} E^Q \bigl[
Z^{-1}_{T_{N^\varepsilon_T}}|X_{T_{N^\varepsilon_T+1}} - X_{T_{N^\varepsilon_T}}|^\beta
\bigr] =0.\label{cost22}
\end{equation}
The second term can be shown to converge to zero using Lemma
\ref{overshoot.lm}. For the first term, for $1<\kappa<
\frac{\alpha\rho}{\alpha+ \beta\rho}$, H\"older's inequality yields
\begin{eqnarray*}
&&E^Q \Biggl[ \Biggl(\varepsilon^{\alpha-\beta}\sum
_{i=1}^{\infty} 1_{T_{i}\leq T} \bigl(Z^{-1}_{T_{i}}
-Z^{-1}_{T_{i-1}}\bigr)|X_{T_i} - X_{T_{i-1}}|^\beta
\Biggr)^\kappa \Biggr]
\\
&&\qquad \leq E^Q \Bigl[\sup_{0\leq t\leq T} Z^{-\rho}_t
\Bigr]^{{\kappa
}/{\rho}}\\
&&\qquad\quad{}\times E^Q \Biggl[ \Biggl(\varepsilon^{\alpha-\beta}
\sum_{i=1}^{\infty} 1_{T_{i}\leq T}
|X_{T_i} - X_{T_{i-1}}|^\beta \Biggr)^{\kappa\rho/(\rho-\kappa)}
\Biggr]^{{(\rho-\kappa)}/{\rho}},
\end{eqnarray*}
which is bounded by a constant for $\varepsilon$ sufficiently small by
Lemma \ref{sumovershoot.lm} (applied under~$Q$) (the assumptions are
satisfied because we are working under $H'_\rho$ and therefore all
coefficients are bounded). Therefore, the expression under the
expectation in
\eqref{cost21} is uniformly integrable under $Q$ as $\varepsilon
\downarrow
0$. On the other hand, by the Cauchy--Schwarz inequality,
\begin{eqnarray*}
&&\varepsilon^{\alpha-\beta}\sum_{i=1}^{\infty}
1_{T_{i}\leq T} \bigl|Z^{-1}_{T_{i}} -Z^{-1}_{T_{i-1}}\bigr||X_{T_i}
- X_{T_{i-1}}|^\beta
\\
&&\qquad\leq\varepsilon^{{(\alpha-\beta)}/{2}} \Biggl(\sum_{i=1}^{N^\varepsilon_T}
\bigl(Z_{T_i}^{-1} - Z_{T_{i-1}}^{-1}
\bigr)^2 \Biggr)^{{1}/{2}}\\
&&\qquad\quad{}\times \sup_{0\leq t\leq T}|X_t|^{\beta
/2}
\Biggl(\varepsilon^{\alpha-\beta}\sum_{i=1}^{N^\varepsilon_T}
|X_{T_i} - X_{T_{i-1}}|^\beta \Biggr)^{{1}/{2}}.
\end{eqnarray*}
Since $Z^{-1}$ has finite quadratic variation, and the last factor
is uniformly integrable under $Q$ by Lemma \ref{sumovershoot.lm},
due to the first deterministic factor, the whole expression
converges to zero in probability, and \eqref{cost21} follows.

\emph{Step 3.} Using the same notation as in the proof of
Theorem \ref{err.thm} (step 3), we have
\begin{eqnarray*}
&&\varepsilon^{\alpha-\beta}E^Q \Biggl[\sum
_{i=1}^{\infty} 1_{T_{i-1}\leq
T} Z^{-1}_{T_{i-1}}(T_i-T_{i-1})
\frac{E^Q_{\mathcal
F_{T_{i-1}}}|\tilde X_{\tilde\tau_i}|^\beta}{E^Q_{\mathcal
F_{T_{i-1}}}[T_i -
T_{i-1}]} \Biggr]
\\
&&\qquad= \varepsilon^{\alpha-\beta}E^Q \Biggl[\sum
_{i=1}^{\infty} 1_{T_{i-1}\leq
T} Z^{-1}_{T_{i-1}}
\lambda_{T_{i-1}}(T_i-T_{i-1})\frac{E^Q_{\mathcal
F_{T_{i-1}}}|\tilde X_{\tilde\tau_i}|^\beta}{E^Q_{\mathcal
F_{T_{i-1}}}[\tilde\tau_i]}
\Biggr] + R^\varepsilon,
\end{eqnarray*}
where one can show, using first Lemma \ref{overshoot.lm} and then
exactly the same arguments as in the proof of Theorem \ref{err.thm},
that $R^\varepsilon\to0$ as $\varepsilon\downarrow0$. Then, from
the previous step,
\begin{eqnarray*}
S_2& =& \lim_{\varepsilon\downarrow0} \varepsilon^{\alpha-\beta}E^Q
\Biggl[\sum_{i=1}^{\infty} 1_{T_{i-1}\leq
T}
Z^{-1}_{T_{i-1}}(T_i-T_{i-1})
\frac{E^Q_{\mathcal
F_{T_{i-1}}}|X_{T_i} -
X_{T_{i-1}}|^\beta}{E^Q_{\mathcal F_{T_{i-1}}}[T_i -
T_{i-1}]} \Biggr]
\\
& = &\lim_{\varepsilon\downarrow0} \varepsilon^{\alpha-\beta}E^Q
\Biggl[\sum_{i=1}^{\infty} 1_{T_{i-1}\leq
T}
Z^{-1}_{T_{i-1}}\lambda_{T_{i-1}}(T_i-T_{i-1})
\frac{E^Q_{\mathcal
F_{T_{i-1}}}|\tilde X_{\tilde\tau_i}|^\beta}{E^Q_{\mathcal
F_{T_{i-1}}}[\tilde\tau_i]} \Biggr].
\end{eqnarray*}
Our next goal is to replace $\tilde X_{\tilde\tau_i}$ with $\hat
X_{\hat\tau_i}$ in the above expression, where $\hat\tau_i =
\inf\{t\geq0\dvtx \hat X_t \notin[-\underline a_{T_i} \varepsilon,
\overline a_{T_i}\varepsilon]\}$. Let $a = \min(\underline a_{T_i},
\overline a_{T_i})$ and define
\[
f(x) = (\varepsilon a)^{\beta}\frac{(\beta- \varepsilon a)
({x}/{(\varepsilon a)} )^2 + 2-\beta}{2-\varepsilon a} 1_{|x|<\varepsilon a} +
|x|^\beta1_{|x|>\varepsilon a}.
\]
$f$ is a twice differentiable function satisfying for small enough
$\varepsilon$
%
%e47 #&#
%
\begin{equation}
\bigl|f'(x)\bigr|\leq C \varepsilon^{\beta-1}\quad \mbox{and}\quad
\bigl|f^{\prime
\prime}(x)\bigr|\leq C \varepsilon^{\beta-2} \label{fbnd},
\end{equation}
and hence
It\^o's formula can be applied. Then,
\begin{eqnarray*}
&&\bigl| E^Q_{\mathcal F_{T_{i-1}}}\bigl[|\tilde X_{\tilde\tau_i}|^\beta
-|\hat X_{\hat\tau_i}|^\beta\bigr] \bigr|
\\
&&\qquad\leq\bigl | E^Q_{\mathcal
F_{T_{i-1}}}
\bigl[f(\tilde X_{\tilde\tau_i}) -f(\hat X_{\tilde\tau_i})\bigr] \bigr|+ \bigl|
E^Q_{\mathcal
F_{T_{i-1}}}\bigl[f(\hat X_{\tilde\tau_i}) -f(\hat
X_{\hat\tau_i})\bigr]\bigr |.
\end{eqnarray*}
By definition of $\tilde X$ and $\hat X$ and because all coefficients
are bounded, the first term satisfies
\[
\bigl| E^Q_{\mathcal F_{T_{i-1}}}\bigl[f(\tilde X_{\tilde\tau_i}) -f(\hat
X_{\tilde\tau_i})\bigr] \bigr| \leq C \varepsilon^{\beta-1} E^Q_{\mathcal F_{T_{i-1}}}[
\tilde\tau_i].
\]
For the second term, we use It\^o's formula,
\begin{eqnarray*}
&&E^Q_{\mathcal F_{T_{i-1}}}\bigl[f(\hat X_{\tilde\tau_i}) -f(\hat
X_{\hat\tau_i})\bigr]\\
&&\qquad= E^Q_{\mathcal F_{T_{i-1}}} \biggl[\int
_{\hat
\tau_i\wedge\tilde
\tau_i}^{\hat\tau_i \vee\tilde\tau_i}\int_{\mathbb R} \bigl\{f(
\hat X_s + z) - f(\hat X_s) - z1_{|z|\leq1}
f'(\hat X_s)\bigr\}\nu(dz) \,ds \biggr]
\\
&&\qquad\quad{} + E^Q_{\mathcal F_{T_{i-1}}} \biggl[\int_{\hat\tau_i\wedge\tilde
\tau_i}^{\hat\tau_i \vee\tilde\tau_i}
\int_{\mathbb R} \bigl\{f(\hat X_{s-} + z) - f(\hat
X_{s-})\bigr\}\bigl(\widehat M(ds \times dz) - \nu(dz) \,ds\bigr)
\biggr],
\end{eqnarray*}
where $\widehat M$ is the jump measure of $\hat X$. It follows by
standard arguments that the local
martingale term has zero expectation. To deal with the first term we
use the bounds in \eqref{fbnd} and decompose the integrand as follows:
\begin{eqnarray*}
&&\biggl|\int_{\mathbb R} \bigl\{f(\hat X_s + z) - f(\hat
X_s) - z1_{|z|\leq1} f'(\hat X_s)
\bigr\}\nu(dz)\biggr |
\\
&&\qquad\leq C\varepsilon^{\beta-2}\int_{|z|\leq
\varepsilon} z^2
\nu(dz) + C\varepsilon^{\beta-1}\int_{|z|>
\varepsilon} |z| \nu(dz)
\leq C \varepsilon^{\beta-\alpha},
\end{eqnarray*}
so that finally
\[
\bigl| E^Q_{\mathcal F_{T_{i-1}}}\bigl[|\tilde X_{\tilde\tau_i}|^\beta
-|\hat X_{\hat\tau_i}|^\beta\bigr] \bigr| \leq C \varepsilon^{\beta-1}
E^Q_{\mathcal F_{T_{i-1}}}[\tilde\tau_i] + C
\varepsilon^{\beta-\alpha}E^Q_{\mathcal F_{T_{i-1}}}\bigl[|\tilde
\tau_i-\hat\tau_i|\bigr].
\]
Substituting this estimate into the formula for $S_2$, we then get
\[
S_2 = \lim_{\varepsilon\downarrow0} \varepsilon^{\alpha-\beta}E^Q
\Biggl[\sum_{i=1}^{\infty} 1_{T_{i-1}\leq
T}
Z^{-1}_{T_{i-1}}\lambda_{T_{i-1}}(T_i-T_{i-1})
\frac{E^Q_{\mathcal
F_{T_{i-1}}}|\hat X_{\hat\tau_i}|^\beta}{E^Q_{\mathcal
F_{T_{i-1}}}[\tilde\tau_i]} \Biggr] + \lim_{\varepsilon\downarrow0} R^\varepsilon
\]
with
\begin{eqnarray*}
\bigl|R^\varepsilon\bigr| &\leq& C\varepsilon^{\alpha-1}E^Q \Biggl[
\sum_{i=1}^{\infty} 1_{T_{i-1}\leq
T}
Z^{-1}_{T_{i-1}}\lambda_{T_{i-1}}(T_i-T_{i-1})
\Biggr]
\\
&&{}+ CE^Q \Biggl[\sum_{i=1}^{\infty}
1_{T_{i-1}\leq
T} Z^{-1}_{T_{i-1}}\lambda_{T_{i-1}}(T_i-T_{i-1})
\frac{E^Q_{\mathcal
F_{T_{i-1}}}|\tilde\tau_i - \hat\tau_i|}{E^Q_{\mathcal
F_{T_{i-1}}}[\tilde\tau_i]} \Biggr].
\end{eqnarray*}
The first expectation is bounded (because $\lambda$ is bounded) and
$Z^{-1}$ is integrable, and therefore the first term converges to
zero. For the second term, we observe (using the notation of the proof
of Theorem \ref{err.thm}, step 4) that
\[
\underline f^{0,1}_\varepsilon(\underline a_{T_i},
\overline a_{T_i})\leq E^Q_{\mathcal
F_{T_{i-1}}}[\tilde
\tau_i] \leq\overline f^{0,1}_\varepsilon(\underline
a_{T_i},\overline a_{T_i})
\]
and
\[
E^Q_{\mathcal
F_{T_{i-1}}}|\tilde\tau_i - \hat
\tau_i| \leq E^Q\bigl[\hat\tau_2 \wedge \hat
\tau^j - \hat\tau_1\bigr] \leq \overline
f^{0,1}_\varepsilon(\underline a_{T_i},\overline
a_{T_i}) - \underline f^{0,1}_\varepsilon(\underline
a_{T_i},\overline a_{T_i}).
\]
In view of Lemma \ref{flimit.lm} we then conclude that the second term
converges to zero as well. Finally, we have shown that
\[
S_2 = \lim_{\varepsilon\downarrow0} \varepsilon^{\alpha-\beta}E^Q
\Biggl[\sum_{i=1}^{\infty} 1_{T_{i-1}\leq
T}
Z^{-1}_{T_{i-1}}\lambda_{T_{i-1}}(T_i-T_{i-1})
\frac{u^\beta_\varepsilon
(\underline
a_{T_{i-1}}, \overline a_{T_{i-1}})}{E^Q_{\mathcal
F_{T_{i-1}}}[\tilde\tau_i]} \Biggr],
\]
where $u^\beta_\varepsilon$ is a deterministic function defined by
\[
u^\beta_\varepsilon(a,b) = E\bigl[|\hat X_{\hat\tau}|^\beta
\bigr], \qquad\hat \tau= \inf\bigl\{t\geq0\dvtx \hat X_t \notin(-a
\varepsilon,b\varepsilon)\bigr\}.
\]
Similar to the last step of the proof of Theorem \ref{err.thm},
we can now write
\begin{eqnarray*}
S_2 &\leq&\limsup_{\varepsilon\downarrow0} E^Q \Biggl[
\sum_{i=1}^{\infty} 1_{T_{i-1}\leq
T}
Z^{-1}_{T_{i-1}}\lambda_{T_{i-1}}(T_i-T_{i-1})
\frac{\varepsilon^{-\beta
}u^\beta_\varepsilon(\underline
a_{T_{i-1}}, \overline a_{T_{i-1}})}{\varepsilon^{-\alpha}\underline
f_\varepsilon^{0,1}(\underline a_{T_{i-1}}, \overline a_{T_{i-1}})} \Biggr],
\\
S_2 &\geq&\limsup_{\varepsilon\downarrow0} E^Q \Biggl[
\sum_{i=1}^{\infty} 1_{T_{i-1}\leq
T}
Z^{-1}_{T_{i-1}}\lambda_{T_{i-1}}(T_i-T_{i-1})
\frac{\varepsilon^{-\beta
}u^\beta_\varepsilon(\underline
a_{T_{i-1}}, \overline a_{T_{i-1}})}{\varepsilon^{-\alpha}\overline
f_\varepsilon^{0,1}(\underline a_{T_{i-1}}, \overline a_{T_{i-1}})} \Biggr].
\end{eqnarray*}
Using Lemma \ref{weakcvg} we obtain uniform convergence of
\[
\frac{\varepsilon^{-\beta}u^\beta_\varepsilon(a, b)}{\varepsilon
^{-\alpha}\overline
f_\varepsilon^{0,1}(a, b)}
\]
toward $
\frac{u^\beta(a, b)}{
f^{*,0,1}(a, b)}
$ and conclude that
\[
S_2 = E^Q \biggl[\int_0^T
\lambda_t Z_t^{-1} \frac{u^\beta(\underline a_t,
\overline a_t)}{
f^{*,0,1}(\underline a_t, \overline a_t)}\,dt
\biggr]=E \biggl[\int_0^T \lambda_t
\frac{u^\beta(\underline a_t,
\overline a_t)}{
f^{*,0,1}(\underline a_t, \overline a_t)}\,dt \biggr].
\]
\upqed\end{pf}

\begin{appendix}\label{app}
%s7 #&#
\section{Some computations for stable processes}\label{appa}
%
%pr6 #&#
\begin{proposition}\label{squarefunc}
Let $X$ be a symmetric $\alpha$-stable process on $\mathbb R$ with
characteristic
function $E[e^{iuX_t}] = e^{-t\sigma|u|^\alpha}$, {$0<\alpha<2$}, and
$\tau_{a,b} =
\inf\{t\geq0\dvtx X_t\notin(-a,b)\}$ {with $a,b>0$.} Then
\[
f(a,b):= E \biggl[\int_0^{\tau_{a,b}}
X_t^2 \,dt \biggr] = \frac{\alpha(ab)^{1+({\alpha}/{2})}}{2\sigma\Gamma(3+\alpha) } \biggl\{ \biggl(
\frac{a}{b} + \frac{b}{a} \biggr) \biggl(1+\frac{\alpha}{2}
\biggr) - \alpha \biggr\}.
\]
\end{proposition}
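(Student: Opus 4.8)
The plan is to identify $f(a,b)$ with the value at the origin of the solution of an exterior Dirichlet problem for the generator of $X$. Write $\mathcal A=-\sigma(-\Delta)^{\alpha/2}$ for the infinitesimal generator of $X$, where $(-\Delta)^{\alpha/2}$ is normalised to have Fourier symbol $|u|^\alpha$. For $h(x)=x^2$, the function $u(x):=E_x\big[\int_0^{\tau_{a,b}}h(X_t)\,dt\big]$ is, by Dynkin's formula, the unique bounded solution of $\mathcal A u=-h$ on $(-a,b)$ with $u\equiv 0$ on $\mathbb R\setminus(-a,b)$. Since $X_t\in(-a,b)$ for $t<\tau_{a,b}$ one has $\int_0^{\tau_{a,b}}X_t^2\,dt\le\max(a,b)^2\,\tau_{a,b}$, which has finite expectation because $E_x[\tau_{a,b}]<\infty$; hence $f(a,b)=u(0)$ is well defined. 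The only analytic input needed here is the classical fact that this Dirichlet problem has a unique bounded solution (maximum principle) and that an explicit candidate which is $C^2$ inside $(-a,b)$, H\"older up to the boundary and vanishing outside must coincide with the probabilistic representation.

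Next I would reduce to a symmetric interval. Set $\ell=(a+b)/2$ and $m=(b-a)/2$; then $X$ leaves $(-a,b)$ precisely when the process $X+m$, started at $m$, leaves $(-\ell,\ell)$, so $f(a,b)=\Phi_2(m)-2m\,\Phi_1(m)+m^2\,\Phi_0(m)$, where $\Phi_k(x):=E_x\big[\int_0^{\tau_{(-\ell,\ell)}}X_t^k\,dt\big]$. The self-similarity $X_{c^\alpha t}\stackrel{d}{=}cX_t$ gives $\Phi_k(\ell y)=\ell^{k+\alpha}\psi_k(y)$ with $\psi_k(y):=E_y\big[\int_0^{\tau_{(-1,1)}}X_t^k\,dt\big]$, so that, evaluating at $y_0:=m/\ell=(b-a)/(a+b)$,
\[
f(a,b)=\ell^{2+\alpha}\bigl(\psi_2(y_0)-2y_0\psi_1(y_0)+y_0^2\psi_0(y_0)\bigr),\qquad 1-y_0^2=\frac{4ab}{(a+b)^2}.
\]

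It then remains to compute $\psi_0,\psi_1,\psi_2$ on $(-1,1)$. Each $\psi_k$ solves $\sigma(-\Delta)^{\alpha/2}\psi_k=y^k$ on $(-1,1)$ and vanishes outside, and I would seek a solution of the form $(1-y^2)_+^{\alpha/2}q_k(y)$ with $q_k$ a polynomial of degree $\le k$. The key computational tool is the explicit expression for the fractional Laplacian on $(-1,1)$ of $(1-y^2)_+^{p}$ (a terminating hypergeometric, cf.\ Dyda's formula): for $p=\alpha/2$ it is the constant $\Gamma(\alpha+1)$ (this recovers Getoor's formula $\psi_0(y)=(1-y^2)_+^{\alpha/2}/(\sigma\Gamma(\alpha+1))$); for $p=\alpha/2+1$ it equals $\tfrac12(\alpha+2)\Gamma(\alpha+1)\bigl(1-(\alpha+1)y^2\bigr)$, and a suitable combination of these two functions solves the equation for $\psi_2$; finally, differentiating the $p=\alpha/2+1$ identity in $y$ (legitimate since $(1-y^2)_+^{\alpha/2+1}\in C^1$) shows that $(-\Delta)^{\alpha/2}\bigl[\,y(1-y^2)_+^{\alpha/2}\,\bigr]$ is a multiple of $y$ on $(-1,1)$, whence $\psi_1(y)=y(1-y^2)_+^{\alpha/2}/(\sigma\Gamma(\alpha+2))$. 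The normalising constants are obtained via Legendre's duplication formula. Substituting these three expressions and $1-y_0^2=4ab/(a+b)^2$ into the displayed formula and simplifying — the crucial algebraic identity being $(\alpha+2)(a+b)^2-8ab+(\alpha-1)(\alpha+2)(b-a)^2=\alpha\bigl[(\alpha+2)(a^2+b^2)-2\alpha ab\bigr]$ — yields $f(a,b)=\dfrac{\alpha(ab)^{\alpha/2}}{4\sigma\Gamma(\alpha+3)}\bigl[(\alpha+2)(a^2+b^2)-2\alpha ab\bigr]$, which is the claimed expression once one writes $(\alpha+2)(a^2+b^2)-2\alpha ab=2ab\bigl\{(\tfrac ab+\tfrac ba)(1+\tfrac\alpha2)-\alpha\bigr\}$.

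I expect the main obstacle to be the bookkeeping of constants: one must fix a single normalisation of $(-\Delta)^{\alpha/2}$, check that the generator of $X$ is exactly $-\sigma(-\Delta)^{\alpha/2}$ for it, and carry the Gamma-function factors correctly through the fractional-Laplacian formula and the duplication identity; the concluding algebra is routine but unforgiving. A secondary technical point is the justification of $f(a,b)=u(0)$, i.e.\ verifying enough regularity of the explicit candidate for Dynkin's formula together with uniqueness for the Dirichlet problem. As an alternative that sidesteps Dyda's formula, one could instead integrate the classical explicit Green function of a bounded interval for the symmetric $\alpha$-stable process against $y^k$, at the price of a more cumbersome computation.
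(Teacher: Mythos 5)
Your proposal is correct and, after the common first step (identifying $f$ with the solution of the fractional Dirichlet problem $\sigma(-\Delta)^{\alpha/2}u=x^2$ on the interval, $u=0$ outside), it diverges from the paper's proof in two substantive ways. First, to solve the Dirichlet problem on $(-1,1)$ the paper passes to Fourier space and solves a pair of dual integral equations using Bessel-function integrals from Gradshteyn--Ryzhik, whereas you invoke Dyda-type closed forms for $(-\Delta)^{\alpha/2}(1-y^2)_+^{p}$ with $p=\alpha/2$ and $p=\alpha/2+1$; I checked that your ansatz gives $\psi_2(y)=\frac{(1-y^2)^{\alpha/2}}{\sigma\Gamma(\alpha+2)}-\frac{2(1-y^2)^{1+\alpha/2}}{\sigma\Gamma(\alpha+3)}=\frac{2(1-y^2)^{\alpha/2}(y^2+\alpha/2)}{\sigma\Gamma(3+\alpha)}$, which is exactly the paper's Lemma. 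Second, and more significantly, for the asymmetric interval the paper applies Dynkin's formula with the solution $f_A$ on the \emph{larger} symmetric interval $[-a,a]$ (taking $a\ge b$) and then must integrate $f_a$ against the Blumenthal--Getoor--Ray exit density $\mu_{a,b}$, evaluating a hypergeometric-type integral; you instead translate the interval to $(-\ell,\ell)$ with a shifted starting point and expand $x^2=(m-z)^2$, which requires the additional first-moment function $\psi_1(y)=y(1-y^2)^{\alpha/2}/(\sigma\Gamma(\alpha+2))$ but completely avoids the exit distribution and the final special-function integral; I verified that your three $\psi_k$ combine to $\frac{\alpha(ab)^{\alpha/2}}{4\sigma\Gamma(\alpha+3)}\bigl[(\alpha+2)(a^2+b^2)-2\alpha ab\bigr]$, which agrees with the stated formula. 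The only points to nail down are the ones you already flag: the normalization of $(-\Delta)^{\alpha/2}$ against the generator of $X$, the justification of $f(a,b)=u(0)$ (regularity of the candidate plus uniqueness, exactly as in the paper's appeal to Getoor), and the legitimacy of differentiating the $p=\alpha/2+1$ identity to obtain $\psi_1$ — for the last point it is safer to quote Dyda's formula directly for $y(1-y^2)_+^{\alpha/2}$ rather than commute $d/dy$ with the nonlocal operator, though the commutation is valid here since $(1-y^2)_+^{1+\alpha/2}\in C^{1,\alpha/2}$ and $\alpha/2>\alpha-1$.
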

The proof of this result is based on the following
lemma, where we consider the exit time from the interval $[-1,1]$
by a process starting from $x$.
%
%le10 #&#
\begin{lemma}
Let $X$ be as above and $\tau_1 =
\inf\{t\geq0\dvtx X_t\notin(-1,1)\}$. Then
\[
f(x):=E^{x} \biggl[\int_0^{\tau_1}
X_t^2 \,dt \biggr] %= \frac{1}{\sigma}
% \Gamma\left(\frac{1}{2}\right)}{\Gamma\left(\frac{4+\alpha}{2}\right)
% \Gamma\left(\frac{3+\alpha}{2}\right) }(1-x^2)^{\frac{\alpha}{2}}
= \frac{1}{\sigma}\frac{2(1-x^2)^{{\alpha}/{2}}
\{x^2 + ({\alpha}/{2}) \}}{\Gamma(3+\alpha)}1_{x\in(-1,1)}.
\]
\end{lemma}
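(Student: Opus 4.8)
The plan is to recognize $f$ as the solution of a fractional Poisson problem on $(-1,1)$ with zero exterior condition, to verify in closed form a candidate of the announced shape, and to close the argument with Dynkin's formula. The generator of $X$ is $\mathcal L=-\sigma(-\Delta)^{\alpha/2}$, where $(-\Delta)^{\alpha/2}$ is normalized to have Fourier symbol $|u|^\alpha$ (consistent with $E[e^{iuX_t}]=e^{-t\sigma|u|^\alpha}$). Heuristically $f$ should solve $\mathcal Lf(x)=-x^2$ for $x\in(-1,1)$ and $f\equiv 0$ on $(-1,1)^c$; conversely, any bounded continuous $g$ solving this problem that is $C^2$ on every compact subinterval of $(-1,1)$ satisfies $g(x)=E^x[\int_0^{\tau_1}X_t^2\,dt]$ by Dynkin's formula, made rigorous below. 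It therefore suffices to exhibit such a $g$ and check it equals the stated expression.

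\emph{Verification of the candidate.} Take $f(x)=\lambda\,(1-x^2)_+^{\alpha/2}\bigl(x^2+\tfrac{\alpha}{2}\bigr)$ with $\lambda=\tfrac{2}{\sigma\Gamma(3+\alpha)}$: this is bounded, continuous on $\mathbb R$, vanishes on $(-1,1)^c$, and is real-analytic on compact subintervals of $(-1,1)$. Writing $(1-x^2)_+^{\alpha/2}x^2=(1-x^2)_+^{\alpha/2}-(1-x^2)_+^{\alpha/2+1}$, $f$ is a linear combination of $(1-x^2)_+^{\alpha/2}$ and $(1-x^2)_+^{\alpha/2+1}$, to which I apply the classical formula for the fractional Laplacian of $(1-x^2)_+^{p}$ on $(-1,1)$ (Dyda; see also Getoor and Blumenthal--Getoor--Ray): with $s=\alpha/2$,
\[
(-\Delta)^s\bigl[(1-x^2)_+^{\alpha/2}\bigr]=\Gamma(1+\alpha),\qquad (-\Delta)^s\bigl[(1-x^2)_+^{\alpha/2+1}\bigr]=\tfrac{\alpha+2}{2}\,\Gamma(1+\alpha)\bigl(1-(\alpha+1)x^2\bigr),
\]
for $x\in(-1,1)$, the associated ${}_2F_1$ reducing to $1$ in the first case and to $1-(2s+1)x^2$ in the second. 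Hence $(-\Delta)^s[(1-x^2)_+^{\alpha/2}x^2]=\Gamma(1+\alpha)\bigl(-\tfrac{\alpha}{2}+\tfrac{(\alpha+1)(\alpha+2)}{2}x^2\bigr)$ on $(-1,1)$, so that
\[
(-\Delta)^s f(x)=\lambda\,\Gamma(1+\alpha)\Bigl[\bigl(-\tfrac{\alpha}{2}+\tfrac{(\alpha+1)(\alpha+2)}{2}x^2\bigr)+\tfrac{\alpha}{2}\Bigr]=\lambda\,\tfrac{\Gamma(3+\alpha)}{2}\,x^2=\tfrac{x^2}{\sigma},
\]
i.e. $\mathcal L f(x)=-x^2$ on $(-1,1)$. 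The constant terms cancel exactly: this cancellation forces the shift $\tfrac{\alpha}{2}$, and the surviving coefficient identifies $\lambda$.

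\emph{From the PDE to the representation.} Since $f\equiv0$ on $(-1,1)^c$ and $X_{\tau_1}\notin(-1,1)$, we have $f(X_{\tau_1})=0$. For $\delta>0$ let $\tau_\delta$ be the exit time of $(-1+\delta,1-\delta)$; as $f$ is globally bounded and $C^2$ on a neighborhood of $[-1+\delta,1-\delta]$, Dynkin's formula for the process stopped at $\tau_\delta$ gives $f(x)=E^x[f(X_{\tau_\delta})]+E^x[\int_0^{\tau_\delta}X_t^2\,dt]$ (letting the time horizon go to infinity is legitimate since $f$ is bounded, $X_t^2\le1$ on $[0,\tau_1)$, and $E^x[\tau_1]<\infty$, a classical fact). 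Letting $\delta\downarrow0$, one has $\tau_\delta\uparrow\tau_1$ a.s.\ (standard for exit times of open sets of a Hunt process), $f(X_{\tau_\delta})\to f(X_{\tau_1})=0$ by continuity and boundedness of $f$, and $\int_0^{\tau_\delta}X_t^2\,dt\uparrow\int_0^{\tau_1}X_t^2\,dt$ by monotone convergence; hence $f(x)=E^x[\int_0^{\tau_1}X_t^2\,dt]$, as claimed.

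\emph{Main obstacle and alternative.} The computational core is the verification step --- fixing the exact constants in $(-\Delta)^{\alpha/2}$ applied to $(1-x^2)_+^{\alpha/2}$ and $(1-x^2)_+^{\alpha/2+1}$, either by quoting Dyda's explicit hypergeometric identity or by evaluating the two principal-value integrals directly via beta-integral identities. A secondary technical point is invoking Dynkin's formula up to $\tau_1$ although $f$ is only $C^{\alpha/2}$, hence not $C^1$, at $x=\pm1$; the localization $\tau_\delta\uparrow\tau_1$ above handles this. An alternative, more self-contained but heavier route bypasses the PDE verification and inserts the explicitly known Green function $G_{(-1,1)}(x,y)$ of the killed symmetric $\alpha$-stable process into $f(x)=\int_{-1}^1 G_{(-1,1)}(x,y)\,y^2\,dy$, evaluating the integral directly.
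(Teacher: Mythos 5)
Your proof is correct, and the computational route is genuinely different from the paper's. Both arguments rest on the same characterization of $f$ as the solution of the fractional Poisson problem $\mathcal L^\alpha f=-x^2$ on $(-1,1)$ with $f\equiv 0$ outside, but from there the paper goes \emph{forward}: it passes to the Fourier transform $\hat f$, rewrites the problem as a pair of dual integral equations, identifies $\hat f$ as an explicit combination of $u^{-\frac{1+\alpha}{2}}J_{\frac{1+\alpha}{2}}(u)$ and $u^{-\frac{3+\alpha}{2}}J_{\frac{3+\alpha}{2}}(u)$ via Gradshteyn--Ryzhik 6.699.2, and inverts. You go \emph{backward}: you split the candidate as a combination of $(1-x^2)_+^{\alpha/2}$ and $(1-x^2)_+^{\alpha/2+1}$ and verify the equation using the closed-form action of $(-\Delta)^{\alpha/2}$ on $(1-x^2)_+^{p}$ (Dyda's hypergeometric identity, whose $p=\alpha/2$ case is exactly Getoor's computation of $E^x[\tau_1]$ quoted later in the paper's remark, which confirms your normalization of the generator as $-\sigma(-\Delta)^{\alpha/2}$). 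Your constants check out: the $\tfrac{\alpha}{2}$ shift cancels the constant terms and the surviving coefficient is $\lambda\Gamma(3+\alpha)/2$, forcing $\lambda=2/(\sigma\Gamma(3+\alpha))$. Your approach trades the Bessel-function machinery for a table look-up of a single identity, which is arguably cleaner; what it does not do is \emph{derive} the solution, so it only works because the answer is already displayed in the statement. You are also more careful than the paper on the analytic-to-probabilistic direction: the paper asserts the PDE characterization with a reference to Getoor, whereas you justify it via Dynkin's formula on the subdomains $(-1+\delta,1-\delta)$ (where $f$ is smooth) followed by $\tau_\delta\uparrow\tau_1$ using quasi-left-continuity, which correctly circumvents the failure of $C^2$ regularity of $f$ at $\pm 1$. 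This added rigor is a genuine improvement rather than a redundancy.
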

\begin{pf}%{Proof of lemma}
Without loss of generality, we let $\sigma=1$ in this proof.
Let $\hat f(u) = \int_{\mathbb R} e^{iux} f(x) \,dx$. Using the
arguments similar to the ones in \cite{getoor.61}, one can show that
the function $f$ satisfies the equation $\mathcal L^\alpha
f (x)= -x^2$ on $x\in(-1,1)$ with the boundary condition $f(x) = 0$ on
$x\notin(-1,1)$, where $\mathcal L^\alpha$ is the fractional Laplace operator
\begin{eqnarray*}
\mathcal L^\alpha f (x) &=& \int_{\mathbb R}
\bigl(f(x+y)-f(x) - yf'(x)\bigr)\frac{dy}{|y|^{1+\alpha}}, \qquad 1<\alpha<2,
\\
\mathcal L^\alpha f (x) &=& \int_{\mathbb R}
\bigl(f(x+y)-f(x) - y 1_{|y|\leq1}f'(x)\bigr)
\frac{dy}{|y|^{1+\alpha}}, \qquad \alpha=1,
\\
\mathcal L^\alpha f (x) &=& \int_{\mathbb R}
\bigl(f(x+y)-f(x)\bigr)\frac
{dy}{|y|^{1+\alpha}},\qquad  0<\alpha<1.
\end{eqnarray*}
Moreover, the function $\hat
f$ satisfies the system of integral equations
\begin{eqnarray*}
\frac{1}{\pi} \int_0^\infty\hat f(u)
|u|^\alpha\cos(ux) \,du& =& x^2,\qquad |x|<1,
\\
\frac{1}{\pi} \int_0^\infty\hat f(u)
\cos(ux) \,du &=& 0,\qquad |x|>1.
\end{eqnarray*}
Let $\hat f_1 (u) = u^{-{(1+\alpha)}/{2}}
J_{{(1+\alpha)}/{2}}(u)$ and $\hat f_2 (u) = u^{-{(3+\alpha)}/{2}}
J_{{(3+\alpha)}/{2}}(u)$, {where $J$ is the Bessel function; see
\cite{grad}, Section~8.40.} Then, from \cite{grad}, Integral 6.699.2,
we get
%
%e48 #&#
%e49 #&#
%e50 #&#
%e51 #&#
%e52 #&#
%
\begin{eqnarray}
\int_0^\infty\hat f_1(u) \cos(ux)
\,du &=& \int_0^\infty\hat f_2(u)
\cos(ux) \,du = 0,\qquad \mbox{$|x|>1$,}\label{xbig}
\\
\int_0^\infty\hat f_1(u)
|u|^\alpha\cos(ux) \,du& =& 2^{{(\alpha-1)}/{2}} \Gamma \biggl(
\frac{1+\alpha}{2} \biggr),\qquad \mbox {$|x|<1$,} \label{xsmallalpha1}
\\
\label{xsmallalpha2}
\int_0^\infty\hat f_2(u)
|u|^\alpha\cos(ux) \,du &=& 2^{{(\alpha-3)}/{2}} \Gamma \biggl(
\frac{1+\alpha}{2} \biggr) \bigl(1-(1+\alpha)x^2\bigr),
\nonumber
\\[-8pt]
\\[-8pt]
\eqntext{\mbox{$|x|<1$,}}
\\
\label{xsmall1}
\int_0^\infty\hat f_1(u) \cos(ux)
\,du& =&2^{-{(\alpha+1)}/{2}} \frac{\Gamma ({1}/{2} )}{\Gamma
({(\alpha+2)}/{2} )}\bigl(1-x^2
\bigr)^{{\alpha}/{2}},
\nonumber
\\[-8pt]
\\[-8pt]
\eqntext{\mbox {$|x|<1$,}}
\\
\label{xsmall2}
\int_0^\infty\hat f_2(u) \cos(ux)
\,du& =& 2^{-{(\alpha+3)}/{2}} \frac{\Gamma ({1}/{2} )}{\Gamma
({(\alpha+4)}/{2} )}\bigl(1-x^2
\bigr)^{1+({\alpha}/{2})},
\nonumber
\\[-8pt]
\\[-8pt]
 \eqntext{${\mbox{$|x|<1$.}}$}
\end{eqnarray}
From (\ref{xbig})--(\ref{xsmallalpha2}),
\[
\hat f(u) = \pi\frac{\hat f_1(u) - 2\hat
f_2(u)}{2^{{(\alpha-1)}/{2}} \Gamma ({(1+\alpha)}/{2} )
(1+\alpha)}.
\]
To conclude, we compute the inverse Fourier transform of $\hat f$ from
(\ref{xsmall1})--(\ref{xsmall2}).~%
\end{pf}
\begin{pf*}{Proof of Proposition \ref{squarefunc}}
Once again, we set $\sigma=1$
without loss of generality.
Recall a result of Blumenthal, Getoor and Ray
\cite{blumenthal.al.61}: the law of a symmetric stable process
starting from the point $x$ with $|x|<1$ and observed at time
$\tau_1$ has density given by
\[
\mu(x,y) = \frac{1}{\pi} \sin \frac{\pi\alpha}{2} \bigl(1-x^2
\bigr)^{{\alpha}/{2}} \bigl(y^2 -1\bigr)^{-{\alpha}/{2}}
|y-x|^{-1}, \qquad |y|\geq1.
\]
By the scaling property, we then deduce that the density of a symmetric
stable process starting from zero, and observed at time $\tau_{a,b}$
is given by
%
%e53 #&#
%
\begin{equation}
\mu_{a,b}(z) = \frac{1}{\pi} \sin \frac{\pi\alpha}{2}
(ab)^{{\alpha}/{2}} \bigl({(z-b) (z+a)} \bigr)^{-{\alpha}/{2}}
\frac{1}{|z|}. \label{densab}
\end{equation}
Similarly, from the preceding lemma, we easily deduce by the scaling
property that
\[
f_A(x):=E^{x} \biggl[\int_0^{\tau_{A,A}}
X_t^2 \,dt \biggr] = \frac
{2(A^2-x^2)^{{\alpha}/{2}}
\{x^2 + ({\alpha}/{2})A^2 \}}{\Gamma(3+\alpha)}1_{x\in(-A,A)}.
\]
This function satisfies the equation $\mathcal L^\alpha
f_A (x)= -x^2$ on $[-A,A]$ with the boundary condition $f_A(x) = 0$ on
$x\notin[-A,A]$.
Taking $A\geq\mbox{max}(a,b)$, we then get by It\^o's formula
\[
E\bigl[f_A(X_{\tau_{a,b}})\bigr] = f_A(0) - E \biggl[
\int_0^{\tau_{a,b}}X_t^2 \,dt
\biggr].
\]
By symmetry, it is sufficient to prove the proposition for $a\geq
b$. Taking $A=a$ in the above formula, we finally get
\begin{eqnarray*}
&&E \biggl[\int_0^{\tau_{a,b}}X_t^2
\,dt \biggr] \\
&&\qquad= \frac{\alpha
a^{\alpha+2}}{\Gamma(3+\alpha)} - \int_b^a
f_A(x) \mu_{a,b}(x) \,dx
\\
&&\qquad = \frac{\alpha
a^{\alpha+2}}{\Gamma(3+\alpha)} - \frac{2 \sin
{\pi\alpha}/{2}}{\pi\Gamma(3+\alpha)} (ab)^{{\alpha}/ {2}}
\int_b^a \biggl(z^2 +
\frac{\alpha}{2}a^2\biggr) \biggl(\frac{a-z}{
z-b}
\biggr)^{{\alpha}/{2}} \frac{dz}{z}.
\end{eqnarray*}
Computing the integral {(using \cite{grad}, Integral 3.228.1 and the
standard integral representation for the beta function)} then yields
the result.
\end{pf*}

%re8 #&#
\begin{remark}
Let us list here several other useful results which are already known
from the literature or can be obtained with a simple computation.
By a result of Getoor \cite{getoor.61}: under the
assumptions of Proposition \ref{squarefunc},
\begin{eqnarray*}
E^x[\tau_1] &=& \frac{1}{\sigma}\frac{2^{-\alpha}
\Gamma ({1}/{2} )}{\Gamma ({(2+\alpha)}/{2} )
\Gamma ({(1+\alpha)}/{2} ) }
\bigl(1-x^2\bigr)^{{\alpha}/{2}}\\
& =& \frac{1}{\sigma}
\frac{(1-x^2)^{{\alpha}/{2}}}{\Gamma(1+\alpha)}.
\end{eqnarray*}
By the scaling property we then deduce that for general barriers
%
%e54 #&#
%
\begin{equation}
E[\tau_{a,b}] = \biggl(\frac{a+b}{2} \biggr)^\alpha
E^{{(a-b)}/{(a+b)}}[\tau_1] = \frac{(ab)^{{\alpha}/{2}}}{\sigma
\Gamma(1+\alpha)}.\label{tauab}
\end{equation}
Similarly, from \eqref{densab}, we easily get, for $\beta< \alpha$,
%
%e55 #&#
%
\begin{eqnarray}\label{overshootab}
E\bigl[|X_{\tau_{a,b}}|^\beta\bigr] &=& \frac{\sin{\pi\alpha}/{2}}{\pi}(ab)^{{\alpha}/{2}}
\nonumber
\\[-8pt]
\\[-8pt]
\nonumber
&&\times{}\int_0^\infty z^{-\alpha/2}
(z+a+b)^{-\alpha/2} \bigl(|z+a|^{\beta-1} + |z+b|^{\beta-1}
\bigr)\,dz.
\end{eqnarray}
{This integral can be expressed in terms of
special functions and is
equal to
\begin{eqnarray*}
&&a^\beta \biggl(\frac{b}{a+b} \biggr)^{{\alpha}/{2}}
\frac{
\sin{\pi\alpha}/{2}}{\pi}B (1-{\alpha/2},\alpha-\beta ) \\
&&\qquad{}\times F \biggl({\alpha/2}, 1-{
\alpha/2}, {\alpha/2} + 1-\beta, {\frac{b}{a+b}} \biggr)
\\
& &\qquad{}+ b^\beta \biggl(\frac{a}{a+b} \biggr)^{{\alpha}/{2}}
\frac{
\sin{\pi\alpha}/{2}}{\pi}B (1-{\alpha/ 2},\alpha-\beta ) \\
&&\qquad{}\times
F \biggl({\alpha/2}, 1-{\alpha/
2}, {\alpha/2} + 1-\beta, {\frac{b}{a+b}} \biggr),
\end{eqnarray*}
where $B$ is the beta function and $F$ is the hypergeometric
function; see \cite{grad}, Integral 3.259.3.}
\end{remark}
%
%s8 #&#
\section{Convergence of rescaled L\'evy processes}
%
%le11 #&#
\begin{lemma}\label{weakcvg}
Let $X$ be a L\'evy process with characteristic triplet $(0,\nu,
\gamma)$ with respect to the truncation function $h(x) = -1\vee x
\wedge1$ with
\[
x^\alpha\nu((x,\infty)) \to c_+ \quad\mbox{and}\quad x^\alpha
\nu((-\infty,-x)) \to c_- \qquad\mbox{when } x\to0
\]
for some $\alpha\in(1,2)$ and constants $c_+\geq0$ and
$c_-\geq0$ with
$c_+ + c_- >0$. For $\varepsilon>0$, define the process
$X^\varepsilon$ via $X^\varepsilon_t = \varepsilon^{-1}
X_{\varepsilon^\alpha t}$. Then $X^\varepsilon$ converges in law to a strictly
$\alpha$-stable L\'evy process $X^*$ with L\'evy density
%
%e56 #&#
%
\begin{equation}
\nu^*(x) = \frac{c_+ 1_{x>0} + c_- 1_{x<0}}{|x|^{1+\alpha}}.\label{ldens}
\end{equation}
Assume in addition that there exists $C<\infty$, such
that for all $x>0$,
\[
\nu((-x,x)^c) < C x^{-\alpha}
\]
and for $a,b \in(0,\infty)$ and $\beta\in(0,\alpha)$, let
\[
u^\beta_\varepsilon(a,b) = E\bigl[\bigl|X^\varepsilon_{\tau^\varepsilon}\bigr|^\beta
\bigr],\qquad \tau^\varepsilon= \inf\bigl\{t\geq0\dvtx X^\varepsilon_t
\notin(-a,b)\bigr\}.
\]
Then
\[
\lim_{\varepsilon\downarrow0} u^\beta_\varepsilon(a,b) =
u^\beta(a,b)
\]
uniformly on $(a,b) \in[B^{-1},B]^2$ for all $ B < \infty$, with
\[
u^\beta(a,b) = E\bigl[\bigl|X^*_{\tau^*}\bigr|^\beta\bigr]
\]
and $\tau^* = \inf\{t\geq0\dvtx X^*_t \notin(-a, b)
\}$.
\end{lemma}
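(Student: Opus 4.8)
The plan is to treat the two halves of the lemma separately.

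\emph{Weak convergence of $X^\varepsilon$ to $X^*$.} Since $X^\varepsilon$ and $X^*$ are L\'evy processes, $X^\varepsilon\to X^*$ in law for the Skorokhod topology as soon as $X^\varepsilon_1\to X^*_1$ in law (the standard criterion for convergence of L\'evy processes, see \cite{jacodshiryaev}), and this reduces to pointwise convergence of the characteristic exponents $\psi_{X^\varepsilon}(u)=\varepsilon^\alpha\psi_X(u/\varepsilon)$. I would first read off the L\'evy triplet of $X^\varepsilon$: no Gaussian part, L\'evy measure $\nu^\varepsilon(B)=\varepsilon^\alpha\nu(\varepsilon B)$, and a drift of order $\varepsilon^{\alpha-1}\gamma$ (up to a correction coming from the rescaling of the truncation function). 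For fixed $y>0$ the hypothesis gives $\nu^\varepsilon((y,\infty))=(\varepsilon y)^\alpha\nu((\varepsilon y,\infty))\,y^{-\alpha}\to c_+ y^{-\alpha}$, and symmetrically on the negative half-line, so $\nu^\varepsilon\to\nu^*$ vaguely on $\mathbb R\setminus\{0\}$; an integration by parts turns the same estimate into uniform-in-$\varepsilon$ control, and convergence, of $\nu^\varepsilon((-x,x)^c)$ and of $\int_{|y|\le x}y^2\,\nu^\varepsilon(dy)$, while the drift term tends to $0$ because $\alpha>1$. Plugging this into the L\'evy--Khintchine formula and invoking the classical criterion for convergence of infinitely divisible laws (see \cite{jacodshiryaev}) yields $\psi_{X^\varepsilon}(u)\to\psi_{X^*}(u)$. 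This is exactly the statement that $X$ lies in the domain of attraction of the strictly $\alpha$-stable law with density $\nu^*$, the normalization $(\varepsilon^{-1},\varepsilon^\alpha)$ forcing strict stability of the limit; it may also be quoted directly from \cite{rosenbaum.tankov.10}.

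\emph{Convergence of $u^\beta_\varepsilon(a,b)$.} Assuming in addition $\nu((-x,x)^c)<Cx^{-\alpha}$, I would use Skorokhod's representation theorem to realize $X^\varepsilon$ and $X^*$ on a common probability space with $X^\varepsilon\to X^*$ a.s.\ for the Skorokhod topology. Since the map sending a c\`adl\`ag path to the pair \emph{(first exit time from $(-a,b)$, value of the path at that time)} is continuous at almost every path of a strictly $\alpha$-stable process (Proposition VI.2.11 in \cite{jacodshiryaev}, used as in \cite{rosenbaum.tankov.10}), we get $\tau^\varepsilon\to\tau^*$ and $X^\varepsilon_{\tau^\varepsilon}\to X^*_{\tau^*}$ a.s., hence $|X^\varepsilon_{\tau^\varepsilon}|^\beta\to|X^*_{\tau^*}|^\beta$ a.s. To turn this into convergence of expectations I need a bound $\sup_\varepsilon E[|X^\varepsilon_{\tau^\varepsilon}|^p]<\infty$ for some $p\in(\beta,\alpha)$, uniform over $(a,b)$ in a compact $K=[B^{-1},B]^2$. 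I would get it from three ingredients: (i) $\nu^\varepsilon((-x,x)^c)=\varepsilon^\alpha\nu((-\varepsilon x,\varepsilon x)^c)<Cx^{-\alpha}$ uniformly in $\varepsilon$, so integration by parts gives $\int_{|z|>c}|z|^p\nu^\varepsilon(dz)\le C'c^{p-\alpha}$ with $C'$ independent of $\varepsilon$ (using $p<\alpha$); (ii) since $c_++c_->0$ one has $\nu^\varepsilon((-2B,2B)^c)\ge c_0>0$ for $\varepsilon$ small, and any jump of $X^\varepsilon$ larger than $2B\ge a+b$ forces exit from $(-a,b)$, so by independence of increments $P(\tau^\varepsilon>n)\le e^{-c_0 n}$ and $\sup_\varepsilon E[\tau^\varepsilon]<\infty$; (iii) since $|X^\varepsilon_{\tau^\varepsilon-}|\le\max(a,b)$, one has $|X^\varepsilon_{\tau^\varepsilon}|\le\max(a,b)+|\Delta X^\varepsilon_{\tau^\varepsilon}|$ with $|\Delta X^\varepsilon_{\tau^\varepsilon}|^p\le(2\max(a,b))^p+\sum_{s\le\tau^\varepsilon}|\Delta X^\varepsilon_s|^p\mathbf 1_{|\Delta X^\varepsilon_s|>2\max(a,b)}$, and the compensation formula turns the expectation of the last sum into $E[\tau^\varepsilon]\int_{|z|>2\max(a,b)}|z|^p\nu^\varepsilon(dz)$. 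Multiplying (i)--(iii) bounds $E[|X^\varepsilon_{\tau^\varepsilon}|^p]$ uniformly on $K$; this gives the uniform integrability, hence $u^\beta_\varepsilon(a,b)\to u^\beta(a,b)$ for each $(a,b)\in K$.

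For the uniformity over $K$, the same exit-functional continuity shows $u^\beta$ is continuous on $K$ and that $\{u^\beta_\varepsilon\}_\varepsilon$ is equicontinuous there (perturbing $(a,b)$ moves $X^\varepsilon_{\tau^\varepsilon(a,b)}$ continuously, with the $p$-th moments bounded uniformly in $\varepsilon$ and $(a,b)$ by the estimate above), so pointwise convergence to a continuous limit on a compact set becomes uniform; alternatively one may sandwich $u^\beta_\varepsilon$ between monotone functions and apply a multidimensional Dini argument as in the proof of Lemma \ref{flimit.lm}. The first part is essentially the classical domain-of-attraction statement and should be routine; I expect the main obstacle to be the uniform-in-$\varepsilon$-and-$(a,b)$ $p$-th moment bound on the overshoot in the second part --- where the tail hypothesis $\nu((-x,x)^c)<Cx^{-\alpha}$ is essential and every constant must genuinely be uniform --- together with the passage from pointwise to uniform convergence in $(a,b)$; the a.s.\ convergence of the exit \emph{position} $X^\varepsilon_{\tau^\varepsilon}\to X^*_{\tau^*}$, $\tau^*$ being a jump time of the limit, also needs the care supplied by the cited continuity results.
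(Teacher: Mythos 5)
Your overall strategy coincides with the paper's: triplet convergence for the first half, and Skorokhod representation plus continuity of the exit functional plus a uniform moment bound for the second half. Your moment bound is even a little more self-contained than the paper's (which invokes Lemma \ref{overshoot.lm}): your steps (i)--(iii), combining the uniform tail bound on $\nu^\varepsilon$, the exponential tail of $\tau^\varepsilon$ from large jumps, and the compensation formula for the jump at the exit time, are correct and match in spirit the paper's compound-Poisson bound $E[\tau^\varepsilon]\leq C'\varepsilon^\alpha$ applied to the unrescaled process. Two points, however, are genuine gaps.

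First, in the weak-convergence part you assert that ``the drift term tends to $0$ because $\alpha>1$.'' Only the piece $\varepsilon^{\alpha-1}\gamma$ vanishes; the correction coming from rescaling the truncation function, namely $\varepsilon^{\alpha-1}\int\nu(dx)\,(\varepsilon h(x/\varepsilon)-h(x))$, does \emph{not} vanish. The paper computes it by integration by parts and finds the limit $-\frac{c_+-c_-}{\alpha(\alpha-1)}$, which is exactly the drift of the \emph{strictly} $\alpha$-stable process with L\'evy density $\nu^*$ relative to $h$. If you take the drift limit to be $0$ you identify the wrong limit law whenever $c_+\neq c_-$; this computation is needed to nail down strict stability and cannot be skipped.

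Second, your argument for uniformity in $(a,b)$ is not complete. The claimed equicontinuity of $\{u^\beta_\varepsilon\}_\varepsilon$ on $K$ is asserted, not proved: perturbing the barriers changes the exit \emph{position} discontinuously on an event whose probability must be controlled uniformly in $\varepsilon$, and nothing in your moment bound delivers that. Your fallback, ``sandwich between monotone functions and apply Dini,'' also does not go through as stated, because $u^\beta_\varepsilon(a,b)$ is not jointly monotone in $(a,b)$ (unlike the exit-time integrals $\underline f^{\kappa,n}_\varepsilon$, $\overline f^{\kappa,n}_\varepsilon$ of Lemma \ref{flimit.lm}, which are). The paper's actual argument is more delicate: it first proves continuity, hence uniform continuity, of $u^\beta$ on $K$ (again via the exit-functional continuity plus uniform integrability), then exploits the exact scaling identity $u^\beta_\varepsilon(\lambda a,\lambda b)=\lambda^\beta u^\beta_{\varepsilon\lambda}(a,b)$ to get uniform convergence along rays, and finally uses monotonicity of $u^\beta_\varepsilon$ in $a$ on $\{a\leq b\}$ to interpolate between finitely many rays. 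You would need to supply this scaling-plus-partial-monotonicity step (or a genuine proof of equicontinuity) to close the argument.
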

\begin{pf}
Part (i). From the L\'evy--Khintchine formula it
is easy to see that the characteristic triplet
$(A^\varepsilon,\nu^\varepsilon,\gamma^\varepsilon)$ of
$X^{\varepsilon}$ is given by
\begin{eqnarray*}
A^\varepsilon&=& 0,
\\
\nu^\varepsilon(B) &= &\varepsilon^\alpha\nu\bigl(\{x\dvtx x/
\varepsilon\in B\} \bigr),\qquad B\in\mathcal B(\mathbb R),
\\
\gamma^\varepsilon&=& \varepsilon^{\alpha-1} \biggl\{\gamma+ \int
_{\mathbb R} \nu(dx) \bigl(\varepsilon h(x/\varepsilon)-h(x)\bigr)
\biggr\}.
\end{eqnarray*}
Under the conditions of the lemma, by Theorem VII.2.9 and Remark
VII.2.10 in~\cite{jacodshiryaev}, in order to prove the convergence in
law, we need to check (a) that
\[
\gamma^\varepsilon\to-\frac{c_+-c_-}{\alpha(\alpha-1)},
\]
where the right-hand side is the third component of the characteristic
triplet of the strictly stable process with L\'evy density \eqref{ldens}
with respect to the truncation function $h$, and (b) that $|x|^2
\wedge1 \cdot\nu^\varepsilon(dx)$ converges weakly to $|x|^2
\wedge1 \cdot\nu^*(dx)$.
Since $\alpha>1$ and $h$
is bounded, for $\eta$ sufficiently small, using integration by parts
and the assumption of the lemma, we obtain
\begin{eqnarray*}
\lim_{\varepsilon\downarrow0} \gamma^\varepsilon&=& \lim_{\varepsilon\downarrow0}
\varepsilon^{\alpha-1} \int_{|x|\leq
\eta} \nu(dx) \bigl(
\varepsilon h(x/\varepsilon) - h(x)\bigr)
\\
& =& \lim_{\varepsilon\downarrow0} \varepsilon^{\alpha-1} \biggl\{\int
_{-\eta}^{-\varepsilon} (-\varepsilon-x)\nu(dx) + \int
_{\varepsilon}^\eta(\varepsilon-x)\nu(dx) \biggr\}
\\
& =& \lim_{\varepsilon\downarrow0} \varepsilon^{\alpha-1} \biggl\{\int
_{-\eta}^{-\varepsilon} \nu\bigl([-\eta,x]\bigr)\,dx - \int
_{\varepsilon}^\eta\nu\bigl([x,\eta]\bigr)\,dx \biggr\}
\\
& =& \lim_{\varepsilon\downarrow0} \varepsilon^{\alpha-1} \biggl\{\int
_{-\eta}^{-\varepsilon} \nu\bigl((-\infty,x]\bigr)\,dx - \int
_{\varepsilon}^\eta\nu\bigl([x,\infty)\bigr)\,dx \biggr
\}
\\
& =& \lim_{\varepsilon\downarrow0} \varepsilon^{\alpha-1} \biggl\{\int
_{-\eta}^{-\varepsilon} \frac{c_-}{|x|^\alpha}\,dx - \int
_{\varepsilon}^\eta\frac{c_+}{|x|^\alpha}\,dx \biggr\} = -
\frac
{c_+-c_-}{\alpha(\alpha-1)}.
\end{eqnarray*}
For property (b),
it is sufficient to show that for all $x\geq0$,
\begin{eqnarray*}
\int_x^\infty|z|^2 \wedge1 \cdot
\nu^\varepsilon(dz)& \to& \int_x^\infty|z|^2
\wedge1 \cdot\nu^*(dz)\quad\mbox{and}
\\
\int_{-\infty}^{-x} |z|^2
\wedge1 \cdot\nu^\varepsilon(dz) &\to& \int_{-\infty}^{-x}
|z|^2 \wedge1 \cdot\nu^*(dz).
\end{eqnarray*}
This is done using integration by parts and the assumption of the
lemma as in the previous step.

Part (ii). First, similar to the proof of Proposition
3 in \cite{rosenbaum.tankov.10}, it is easy to show that
$X^\varepsilon_{\tau^\varepsilon}$ converges in law to
$X^*_{\tau^*}$ as $\varepsilon\downarrow0$. To complete the proof
of {the convergence of~$u^\beta_\varepsilon(a,b)$ to $u^\beta(a,b)$ for
fixed $a$ and $b$,} it remains to show that for all $\beta\in
(0,\alpha)$,
\[
E\bigl[\bigl|X^\varepsilon_{\tau^\varepsilon}\bigr|^\beta\bigr]
\]
is bounded uniformly in $\varepsilon$. From Lemma \ref{overshoot.lm},
\[
E\bigl[\bigl|X^\varepsilon_{\tau^\varepsilon}\bigr|^\beta\bigr] \leq C
\varepsilon^{-\alpha} E\bigl[\tau^\varepsilon\bigr]
\]
for some constant $C$ which does not depend on $\varepsilon$. On the
other hand, for $\varepsilon$ small enough,
\[
E\bigl[\tau^\varepsilon\bigr] \leq E\bigl[\inf\bigl\{t\dvtx |\Delta
X_t|\geq \varepsilon(a+b)\bigr\}\bigr] = \frac{1}{\nu((-\varepsilon a,\varepsilon b)^c)} \leq
C' \varepsilon^\alpha
\]
for a different constant $C'$ [the equality above holds because $\inf\{
t\dvtx |\Delta X_t|\geq
\varepsilon(a+b)\}$ is an exponential random variable with parameter
$\nu((-\varepsilon a,\varepsilon b)^c)$ by the L\'evy--It\^o
decomposition].

It remains to show that the convergence is uniform in
$a$ and $b$. {First, let us show that $u^\beta(a,b)$ is continuous
in $(a,b)$ for $(a,b)\in[B^{-1},B]^2$ and therefore also uniformly
continuous on this set. Let $(a_n)$ and $(b_n)$ be two sequences with
$a_n \to a\in [B^{-1},B]$ and $b_n \to b \in[B^{-1},B]$. For any
process $Y$, we write $\tau_{(a,b)}(Y):= \inf\{t\geq0\dvtx Y_t \notin(-a,b)
\}$ and $\mathcal O_{(a,b)}(Y):= Y_{\tau_{(a,b)}(Y)}$. Then
\[
\mathcal O_{(a_n,b_n)}\bigl(X^*\bigr) = \frac{a_n+b_n}{a+b} \mathcal
O_{(a,b)}\bigl(X^n\bigr)\qquad \mbox{where } X^n =
\frac{ba_n -
ab_n}{a_n+b_n} + \frac{a+b}{a_n+b_n} X^*.
\]
Since clearly $X^n$ converges in law (in Skorokhod topology) to $X^*$,
we can once again proceed similar to the proof of Proposition 3 in
\cite{rosenbaum.tankov.10} to show that $\mathcal O_{(a_n,b_n)}(X^*)$
converges in law to $\mathcal O_{(a,b)}(X^*)$. Then, as above, we use the
\mbox{uniform} integrability of $|\mathcal O_{(a_n,b_n)}(X^*)|^\beta$ for
$\beta\in(0,\alpha)$ to show that\break  $E[|\mathcal
O_{(a_n,b_n)}(X^*)|^\beta]$ converges to $E[|\mathcal
O_{(a,b)}(X^*)|^\beta]$. }

Next, letting $\delta>0$, we use the uniform continuity of $u^\beta$ to
choose $\rho$ such that for all $(a,b)$ and $(a',b')$ belonging to
$[B^{-1},B]$, $|a-a'| + |b-b'|\leq\rho$ implies
$|u^\beta(a,b)-u^\beta(a',b')| \leq\delta/2$.

Next, for every $\lambda>0$,
\[
u^\beta_\varepsilon(\lambda a,\lambda b) = \lambda^\beta
u^\beta _{\varepsilon\lambda}(a,b),
\]
which means that $u^\beta_\varepsilon(\lambda a,\lambda b)$ converges
to $u^\beta(\lambda a,\lambda b)$
uniformly on $\lambda\in[\lambda_1,\lambda_2]$ for $0<\lambda_1 <
\lambda_2 <\infty$. For $B^{-1} = a_0 < a_1 < \cdots< a_N = B$ with $a_{i+1}-a_i
\leq\rho$ for $i=0,\ldots, N-1$, this enables us to find $\varepsilon_0$
such that
for all $\varepsilon< \varepsilon_0$, every $i=0,\ldots, N$ and all
$\lambda\in[B^{-2},1]$,
%
%e57 #&#
%
\begin{equation}
\bigl|u^\beta_\varepsilon(\lambda a_i, \lambda B) -
u^\beta(\lambda a_i, \lambda B)\bigr| \leq\frac{\delta}{2}.\label{uniconv}
\end{equation}
Now, let $(a,b)\in[B^{-1},B]$ be arbitrary, but to fix
the ideas, assume without loss of generality that $a\leq b$. Since
$u^\beta_\varepsilon(a,b)$ is increasing in $a$ on $a\leq b$,
\[
u^\beta_\varepsilon(a,b) \in\biggl[u^\beta_\varepsilon
\biggl(a_i\frac
{b}{B},b\biggr),u^\beta_\varepsilon
\biggl(a_{i+1}\frac{b}{B},b\biggr)\biggr],
\]
where $i$ is such that $a_i \leq a\frac{B}{b} \leq a_{i+1}$, and by
the property \eqref{uniconv}, also
\[
u^\beta_\varepsilon(a,b) \in\biggl[u^\beta
\biggl(a_i\frac{b}{B},b\biggr) - \frac{\delta}{2},u^\beta
\biggl(a_{i+1}\frac{b}{B},b\biggr) + \frac{\delta}{2}\biggr].
\]
We finally use the uniform continuity of $u^\beta$ to conclude
that $u^\beta_\varepsilon(a,b)\in
[u^\beta(a,b)-\delta,u^\beta(a,b)+\delta]$.\vadjust{\goodbreak}
\end{pf}

%s9 #&#
\section{A toy model with a continuous component}
Through a toy model, we show in the next proposition that if we include
a continuous local martingale part in $X$,
it dominates the purely discontinuous part.

%pr7 #&#
\begin{proposition}\label{toy}
Assume $(\mathit{HY})$ and there exists $B>0$, $\sigma>0$ and $\alpha'\in(1,2)$
such that $|A_t|\leq B$, $\frac{1}{B}
\leq\underline a_t, \overline a_t \leq B$ and $X$
is a L\'evy process with characteristic triplet $(\sigma^2,\nu,
0)$ with respect to the truncation function $h(x) = -1\vee x
\wedge1$ where $\nu$ is a L\'evy measure with L\'evy density
\[
\nu(x) = \frac{c_+ 1_{x>0} + c_- 1_{x<0}}{|x|^{1+\alpha'}}.
\]
Then Theorems \ref{err.thm} and~\ref{cost.thm} hold with
$\lambda\equiv1$, $\alpha=2$, $\beta<\alpha'$ and $X^*_t=\sigma W_t$,
where $W_t$ is a Brownian motion.
\end{proposition}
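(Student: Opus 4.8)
The plan is to retrace the proofs of Theorems \ref{err.thm} and \ref{cost.thm} given in Sections \ref{proofthm}--\ref{proofthm2}, which simplify drastically here. Since $X$ is a L\'evy process with bounded characteristics, no drift ($b\equiv 0$) and $\lambda\equiv 1$, the localization step (Step~1 of both proofs) is vacuous and the change of probability (Step~2) is trivial: one takes $\nu$ equal to the L\'evy measure of the jump part of $X$, $K\equiv 1$, $Z\equiv 1$ and $Q=P$, so that the time change $\Lambda_s(t)=t$ is the identity and one is left with $\mathcal E(\varepsilon)=\sum_i E\big[\int_{T_i\wedge T}^{T_{i+1}\wedge T}(X_t-X_{T_i})^2A_t\,dt\big]$ and $\mathcal C^\beta(\varepsilon)=E\big[\sum_i|X_{T_i}-X_{T_{i-1}}|^\beta\big]$. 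Note that, with $\alpha'\in(1,2)$, the jump part of $X$ satisfies Assumption $(HX)$ with index $\alpha'$, intensity $1$ and a constant $\widehat K$, so Lemmas \ref{overshoot.lm}--\ref{sumovershoot.lm} apply to it whenever $\beta<\alpha'$, which is the standing hypothesis.

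The only genuinely new ingredient is the analogue of Lemma \ref{weakcvg}. I would set $X^\varepsilon_t=\varepsilon^{-1}X_{\varepsilon^2 t}$ --- now with exponent $2$ in place of $\alpha$ --- and prove that $X^\varepsilon$ converges in law (in the Skorokhod topology) to $\sigma W$. Its characteristic triplet is $(\sigma^2,\nu^\varepsilon,\gamma^\varepsilon)$ with $\nu^\varepsilon$ the push-forward of $\varepsilon^2\nu$ under $x\mapsto x/\varepsilon$ and $\gamma^\varepsilon=\varepsilon\int\nu(dx)\big(\varepsilon h(x/\varepsilon)-h(x)\big)$. By Theorem VII.2.9 and Remark VII.2.10 in \cite{jacodshiryaev} it suffices to check: the Gaussian component equals $\sigma^2$ for every $\varepsilon$; $\int(|z|^2\wedge1)\nu^\varepsilon(dz)=\int(x^2\wedge\varepsilon^2)\nu(dx)\to0$ by dominated convergence (since $x^2\wedge1$ is $\nu$-integrable), which simultaneously makes the big jumps disappear and the modified second characteristic converge to $\sigma^2$; and $\gamma^\varepsilon\to0$, which follows from $\nu(\{|x|>1\})<\infty$ and $\int_{\varepsilon<|x|\le1}|x|\nu(dx)=O(\varepsilon^{1-\alpha'})$, whence $\gamma^\varepsilon=O(\varepsilon)+O(\varepsilon^{2-\alpha'})\to0$ --- this is where the gap $2-\alpha'>0$ enters.

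With this convergence in hand, the remaining steps transfer line by line. For Theorem \ref{err.thm} I would keep $\hat X^i_t=\tilde X^i_t=X_{T_i+t}-X_{T_i}$ (no drift correction, so the $tB^2$ terms vanish and $\underline f,\overline f$ reduce to the true exit functional) and re-prove Lemma \ref{flimit.lm} with $\alpha$ replaced by $2$: Skorokhod representation, almost-sure convergence, continuity of the exit-time functional at a.e.\ Brownian path (Proposition VI.2.11 in \cite{jacodshiryaev}), dominated convergence for the path integrals, and uniform integrability from the fact that $\tau^{j,\varepsilon}=\inf\{t:|\Delta X^\varepsilon_t|\ge a+b\}$ is exponential with parameter $\nu^\varepsilon(\{|z|>a+b\})=\varepsilon^2\nu(\{|x|>(a+b)\varepsilon\})=O(\varepsilon^{2-\alpha'})$, hence of uniformly bounded exponential moment. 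This gives
$$
\lim_{\varepsilon\downarrow0}\varepsilon^{-(\kappa+2)n}\underline f^{\kappa,n}_\varepsilon(a,b)=\lim_{\varepsilon\downarrow0}\varepsilon^{-(\kappa+2)n}\overline f^{\kappa,n}_\varepsilon(a,b)=E\Big[\Big(\int_0^{\tau^*}|\sigma W_t|^\kappa\,dt\Big)^n\Big],\quad\tau^*=\inf\{t:\sigma W_t\notin(-a,b)\},
$$
uniformly on compacts excluding zero, and the Riemann-sum argument of Step~5 yields $\lim_{\varepsilon\to0}\varepsilon^{-2}\mathcal E(\varepsilon)=E\big[\int_0^T A_t\,f(\underline a_t,\overline a_t)/g(\underline a_t,\overline a_t)\,dt\big]$ with $f(a,b)=E[\int_0^{\tau^*}(\sigma W_t)^2dt]$ and $g(a,b)=E[\tau^*]$.

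For Theorem \ref{cost.thm} I would argue identically, using the analogues of Lemmas \ref{overshoot.lm} and \ref{sumovershoot.lm} and of Lemma \ref{weakcvg}(ii) with the Blumenthal--Getoor index taken to be $\alpha'$ (hence $\beta<\alpha'$), while Corollary \ref{lower.cor} is replaced by the cleaner bound $\varepsilon^2\le C\,E_{\mathcal F_{T_i}}[T_{i+1}-T_i]$, valid because $\sigma W$ needs time of order $\varepsilon^2$ to exit an interval of width of order $\varepsilon$; this gives $E[\varepsilon^2 N^\varepsilon_T]=O(1)$ and the uniform integrability needed to pass to the limit. When applying It\^o's formula in the overshoot estimates one adds the extra term $\tfrac12\int f''(X_s-X_{T_i})\sigma^2\,ds$, of order $\varepsilon^{\beta-2}E_{\mathcal F_{T_i}}[T_{i+1}-T_i]=O(\varepsilon^\beta)$ per interval --- of the same (dominant) order as the jump contributions --- consistently with the limit $u^\beta(a,b)=E[|\sigma W_{\tau^*}|^\beta]=\frac{b}{a+b}a^\beta+\frac{a}{a+b}b^\beta$. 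Collecting everything gives $\lim_{\varepsilon\to0}\varepsilon^2\mathcal C^0(\varepsilon)=E[\int_0^T g(\underline a_t,\overline a_t)^{-1}\,dt]$ and $\lim_{\varepsilon\to0}\varepsilon^{2-\beta}\mathcal C^\beta(\varepsilon)=E[\int_0^T u^\beta(\underline a_t,\overline a_t)/g(\underline a_t,\overline a_t)\,dt]$, i.e.\ Theorems \ref{err.thm}--\ref{cost.thm} with $\lambda\equiv1$, $\alpha=2$, $X^*=\sigma W$. The main obstacle is the weak-convergence step: one must show that the small jumps, despite being of infinite variation ($\alpha'>1$), are negligible at the diffusive time scale $\varepsilon^2$; once Lemma \ref{weakcvg} is so replaced, everything else is bookkeeping with $\alpha$ replaced by $2$ and $\alpha'$ governing the overshoot bounds.
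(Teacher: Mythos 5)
Your overall route is the paper's: retrace both proofs with $Q=P$, $Z\equiv 1$, $\lambda\equiv 1$ (so the time change is the identity and $\hat X^i=\tilde X^i$), and replace Lemma \ref{weakcvg} by the convergence of $X^\varepsilon_t=\varepsilon^{-1}X_{\varepsilon^2 t}$ to $\sigma W$. Your direct verification of that convergence via the characteristic triplet (constant Gaussian part, $\int (|z|^2\wedge 1)\,\nu^\varepsilon(dz)=\int(x^2\wedge\varepsilon^2)\,\nu(dx)\to 0$, $\gamma^\varepsilon=O(\varepsilon^{2-\alpha'})$) is correct and is a reasonable self-contained substitute for the paper's citation of \cite{rosenbaum.tankov.10}.

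There is, however, one step that fails as written: the uniform integrability in your analogue of Lemma \ref{flimit.lm}. You claim that $\tau^{j,\varepsilon}$ is exponential with parameter $\nu^\varepsilon(\{|z|>a+b\})=\varepsilon^2\nu(\{|x|>(a+b)\varepsilon\})=O(\varepsilon^{2-\alpha'})$, ``hence of uniformly bounded exponential moment.'' This is backwards. In the pure-jump case the argument works because $\varepsilon^{\alpha}\nu(\{|x|>(a+b)\varepsilon\})$ converges to a \emph{positive} constant, so the killing rate is bounded away from zero. Here the rate tends to zero, so $\tau^{j,\varepsilon}$ diverges in probability and $E[e^{c\tau^{j,\varepsilon}}]=\infty$ for any fixed $c>0$ once $\varepsilon$ is small; the bound $\tau_1^\varepsilon\le\tau^{j,\varepsilon}$ therefore gives nothing. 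The uniform moments of the exit time must come from the Brownian component, not from the big jumps: this is exactly what the paper invokes (``$\tau_1^\varepsilon$ has uniformly bounded polynomial moments of any order,'' from \cite{rosenbaum.tankov.10}), and can be proved, e.g., by showing via the Markov property that $P[\tau_1^\varepsilon>k]\le \rho^k$ with $\rho<1$ uniform in $\varepsilon$, using a uniform lower bound on the probability that $X^\varepsilon$ moves by more than $a+b$ over a unit time interval (available since $X^\varepsilon\Rightarrow \sigma W$ with $\sigma>0$). A related point arises for the cost theorem: the uniform bound on $E[|X^\varepsilon_{\tau^\varepsilon}|^\beta]$ cannot be obtained from the pure-jump scheme of Lemma \ref{weakcvg}(ii) applied at the diffusive time scale; the paper's device is to dominate the overshoot pathwise, $|X^\varepsilon_{\tau^\varepsilon}|^\beta\le c(1+|\check X^\varepsilon_{\check\tau^\varepsilon}|^\beta)$ with $\check X=X-\sigma W$ exiting the larger interval $(-(a+b),a+b)$, and then apply the original Lemma \ref{weakcvg}(ii) to the pure-jump process $\check X$ with index $\alpha'$. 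You gesture at ``analogues with index $\alpha'$'' but do not give this decomposition, which is the ingredient that actually makes the reduction work. The rest of your bookkeeping (the extra $\tfrac12\sigma^2 f''$ term in the overshoot lemma, the replacement of Corollary \ref{lower.cor} by $\varepsilon^2\le C\,E_{\mathcal F_{T_i}}[T_{i+1}-T_i]$, the explicit $u^\beta$) is consistent with the paper.
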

\begin{pf}
We first show that Theorem \ref{err.thm} holds with $X^*_t=\sigma W_t$.
We follow the steps of the proof in Section~\ref{proofthm}.
Step 1 follows from the assumptions of the proposition, and there is
now no need to change probability. Also, Lemma \ref{supzero} easily
holds in the setting\vadjust{\goodbreak} of Proposition \ref{toy}. For step 3, note that
$\lambda_t=1$ and therefore
\[
\hat X_t^i=\tilde X_t^i=X_{T_i+t}-X_{T_i},\qquad
\tilde{\tau}_i=T_{i+1}-T_i.
\]
Thus we easily get \eqref{S1tilde} with $Q=P$ and $Z_t=1$.
Then for step 4 we have
\[
E_{\mathcal F_{T_i}} \biggl[ \biggl(\int_0^{\tilde\tau_i} |
\hat X_t|^\kappa \,dt \biggr)^n \biggr]=\underline
f^{\kappa,n}_\varepsilon(\underline a_{T_i},\overline
a_{T_i}),
\]
with $B^2$ taken equal to zero in the definition of $\hat\tau_1$
defining $\underline f^{\kappa,n}_\varepsilon(a,b)$.
Then note from \cite{rosenbaum.tankov.10}, $\tau_1^{\varepsilon}$ has
uniformly bounded polynomial moments of any order and
$X_t^{\varepsilon}$ (with $\alpha=2$) converges toward $\sigma W_t$.
Following the proof of Lemma \ref{flimit.lm}, this gives that
\[
\lim_{\varepsilon\downarrow0}\varepsilon^{-(\kappa+2)}f^{\kappa
,n}_\varepsilon(a,b)=
f^{*,\kappa,n}(a,b).
\]
Finally, we obtain that the preceding convergence is uniform in $(a,b)$
as in steps 4 and~5 follows easily.

In the same spirit, in order to show that Theorem \ref{cost.thm} holds with $X^*_t=\sigma W_t$ and $\alpha=2$, it is enough
to follow the steps of the proof in Section~\ref{proofthm2}. This can
be done as in the preceding paragraph. However, we still need to prove
part (ii) in Lemma \ref{weakcvg} in the case where a Brownian component
is present, meaning we take $X^*_t=\sigma W_t$ for the limiting process
and $\alpha=2$ in the definition of $X^{\varepsilon}_t$. To this end,
remark that in the setting of Proposition~\ref{toy},
\[
\bigl|X^{\varepsilon}_{\tau^{\varepsilon}}\bigr|^\beta\leq c\bigl(1+\bigl|\check
X^{\varepsilon}_{\check\tau^{\varepsilon}}\bigr|^\beta\bigr),
\]
with $\check X_t=X_t-\sigma W_t$ and
$\check\tau^{\varepsilon} = \inf\{t\geq0\dvtx \check X^{\varepsilon}_t
\notin(-(a+b),a+b)\}$. Thus, using Lemma \ref{weakcvg}, we get that
\[
E\bigl[\bigl|X^{\varepsilon}_{\tau^{\varepsilon}}\bigr|^\beta\bigr]
\]
is bounded uniformly in $\varepsilon$. Then we can replicate the end of
the proof of Lemma~\ref{weakcvg}.
\end{pf}
%

%s10 #&#
\section{Proof of Proposition \lowercase{\protect\texorpdfstring{\ref{sde.prop}}{1}}}
\label{proof1}
\begin{pf}
The process $X$ can be written as
\[
X_t = X_0 + \int_0^t
\bar b_s \,ds + \int_0^t \int
_{|z|\leq1} z (M-\mu) (ds \times dz) + \int_0^t
\int_{|z|> 1} z M(ds \times dz),
\]
where $M$ is a random measure whose compensator $\mu$ is given by
$\mu(\omega,dt\times dz) = dt\times\bar\nu(\gamma^{-1}_t(dz)) 1_{z\in
\gamma_t(U)}=
\frac{\bar\nu(\gamma_t^{-1}(z))}{\gamma'_t(\gamma^{-1}(z))} 1_{z\in
\gamma_t(U)} \,dt\times
\,dz$. Hence,
\begin{eqnarray*}
\mu_t((x,\infty))& =& \int_{\gamma^{-1}_t(x)}^\infty
\bar\nu(y) 1_{y\in
U} \,dy, \\
\mu_t((-\infty,-x)) &=& \int
_{-\infty}^{\gamma^{-1}_t(-x)} \bar\nu(y) 1_{y\in
U} \,dy.
\end{eqnarray*}
By assumption \eqref{stable.strong},
\[
\int_{x}^\infty\bar\nu(y) 1_{y\in
U} \,dy =
\frac{c_+}{x^\alpha} + O\bigl(x^{1-\alpha}\bigr)\quad \mbox{and}\quad \int
_{-\infty}^{-x} \bar\nu(y) 1_{y\in
U} \,dy =
\frac{c_-}{x^\alpha} + O\bigl(x^{1-\alpha}\bigr)
\]
as $x\to0$ and
\[
\int_{x}^\infty\bar\nu(y) 1_{y\in
U} \,dy +
\int_{-\infty}^{-x} \bar\nu(y) 1_{y\in
U} \,dy
\leq\frac{C}{x^\alpha}
\]
for some $C<\infty$ and all $x>0$.
On the other hand, by Taylor's theorem, $\gamma^{-1}_t(x) = \frac
{x}{\gamma'_t(x^*)}$
with $x^*\in[0,x]$. Therefore, we easily obtain that for some $C<\infty$,
%
%e58 #&#
%e59 #&#
%
\begin{eqnarray}
x^\alpha\mu_t((x,\infty)) + x^\alpha
\mu_t((-\infty,-x))&\leq& C\max_{x\in U}
\gamma'_t(x)^{\alpha} \qquad\mbox{for all $x$};
\\
\lim_{x\downarrow0} x^\alpha\mu_t((x,
\infty))& =& c_+ \gamma'_t(0)^\alpha\quad
\mbox{and}
\nonumber
\\[-8pt]
\\[-8pt]
\nonumber
 \lim_{x\downarrow0} x^\alpha \mu_t
((-\infty,-x)) &=& c_- \gamma'_t(0)^\alpha,
\end{eqnarray}
which proves assumption $(\mathit{HX})$.

To show $(\mathit{HX}^\rho_{\mathrm{loc}})$, let $\nu$ be a strictly positive L\'
evy density satisfying \eqref{stable.strong},
continuous outside any neighborhood of zero. We need to prove that the
random function $K_t(z)$ defined by
\[
K_t(z) = \frac{\bar\nu(\gamma_t^{-1}(z))1_{z\in\gamma_t(U)}}{\gamma'_t(\gamma
_t^{-1}(z)) \gamma'_t(0)^\alpha\nu(z)},
\]
satisfies the integrability condition \eqref{integrK}. Let $(\tau_n)$
be the
sequence of stopping times from condition \eqref{extracond}, let
$t<\tau_n$ and $\varepsilon$ be small enough so that $\{|z|\leq
\varepsilon\}\subset\gamma_t(U)$, $t\leq\tau_n$. Clearly,
%
%e60 #&#
%
\begin{eqnarray}
\label{5terms} \qquad\int_{\mathbb R} \bigl|\sqrt{K_t(z)}-1\bigr|^{2\rho}
\nu(dz) &\leq& \int_{|z|\leq\varepsilon} \bigl|\sqrt{K_t(z)}-1\bigr|^{2\rho}
\nu(dz)
\nonumber
\\[-8pt]
\\[-8pt]
\nonumber
&&{}+ \int_{|z|>\varepsilon, z\in\gamma_t(U)} K^\rho_t(z) \nu(dz) +
\nu\bigl(\bigl\{z\dvtx |z|> \varepsilon\bigr\}\bigr).
\end{eqnarray}
The third term above is clearly bounded. To deal with the second term,
observe that by the fact that $\nu$ and $\bar\nu$ are continuous
outside any neighborhood of zero, condition \eqref{extracond} and the
fact that $U$ is compact, on the set $\{z\dvtx |z|>\varepsilon,
z\in\gamma_t(U)\}$ for $t\leq\tau_n$,
\[
K_t \leq C^{1+\alpha}_n \frac{\max\{\bar\nu(z)\dvtx z\in U, |z| \geq
\varepsilon/C_n\}}{\min\{\nu(z) \dvtx |z|\geq\varepsilon, z \in
C_n U\} }<\infty.
\]
Therefore, the second term in \eqref{5terms} is also bounded for
$t\leq\tau_n$. We finally focus on the first term in \eqref{5terms}.
First, observe that on the set where $|z|\leq\varepsilon$,
%
%e61 #&#
%
\begin{eqnarray}
\label{3termsK} \bigr|K_t(z)-1\bigr|&\leq& \biggl\llvert \frac{|z|^{1+\alpha} }{|\gamma_t^{-1}(z)|^{1+\alpha} \gamma
'_t(0)^{1+\alpha}} -1
\biggr\rrvert \frac{\gamma'_t(0)}{\gamma'_t(\gamma_t^{-1}(z))
}\frac{|\gamma_t^{-1}(z)|^{1+\alpha}\bar\nu(\gamma_t^{-1}(z))}{
|z|^{1+\alpha}\nu(z)}
\nonumber\\
&&{} +\biggl\llvert \frac{\gamma'_t(0)}{\gamma'_t(\gamma_t^{-1}(z))
}- 1\biggr\rrvert \frac{|\gamma_t^{-1}(z)|^{1+\alpha}\bar\nu(\gamma_t^{-1}(z))}{
|z|^{1+\alpha}\nu(z)} \\
&&{}+\biggl
\llvert \frac{|\gamma_t^{-1}(z)|^{1+\alpha}\bar\nu(\gamma_t^{-1}(z))}{
|z|^{1+\alpha}\nu(z)} - 1\biggr\rrvert.\nonumber
\end{eqnarray}
For the first term in \eqref{3termsK}, by Taylor's formula and using
condition \eqref{extracond},
\begin{eqnarray*}
\biggl\llvert \frac{|z|^{1+\alpha} }{|\gamma_t^{-1}(z)|^{1+\alpha}
\gamma'_t(0)^{1+\alpha}} -1\biggr\rrvert &=& \biggl\llvert
\frac{\gamma_t'(z^*)^{1+\alpha} }{\gamma'_t(0)^{1+\alpha}} -1\biggr\rrvert \leq(1+\alpha) C_n^{2\alpha+1}
\bigl\llvert \gamma_t'\bigl(z^*\bigr) -
\gamma'_t(0)\bigr\rrvert
\\
&\leq&(1+\alpha) C_n^{2\alpha+2} |z|,
\end{eqnarray*}
where $z^*\in[z\wedge0, z \vee0]$. In the second term, similarly,
\[
\biggl\llvert \frac{\gamma'_t(0)}{\gamma'_t(\gamma_t^{-1}(z))
}- 1\biggr\rrvert \leq C_n \bigl|
\gamma'_t\bigl(\gamma_t^{-1}(z)
\bigr) - \gamma'_t(0)\bigr|\leq C_n^2
\bigl|\gamma^{-1}_t(z)\bigr|\leq C^{3n} |z|.
\]
For the third term, it follows from \eqref{stable.strong} that for
some constant $C<\infty$,
\begin{eqnarray*}
\biggl\llvert \frac{|\gamma_t^{-1}(z)|^{1+\alpha}\bar\nu(\gamma_t^{-1}(z))}{
|z|^{1+\alpha}\nu(z)} - 1\biggr\rrvert &\leq&\biggl\llvert
\frac{1+ C |\gamma^{-1}_t(z)|}{1-C |z|} - 1\biggr\rrvert \leq
 \frac{C}{1-C\varepsilon} \bigl(\bigl|
\gamma_t^{-1}(z)\bigr|+|z| \bigr)
\\
&\leq& \frac{C(1+C_n)}{1-C\varepsilon} |z|.
\end{eqnarray*}
In addition, assume that $\varepsilon$ is chosen small enough so that
$C\varepsilon<1$. Therefore,
\[
\bigl|K_t(z)-1\bigr|\leq c_n |z|
\]
for some constant $c_n<\infty$ (which may later change from line to
line). This easily implies that for $\rho\geq1$,
\[
\int_{|z|\leq\varepsilon} \bigl|\sqrt{K_t(z)}-1\bigr|^{2\rho}
\nu(dz) \leq c_n.
\]
\upqed\end{pf}
\end{appendix}

\section*{Acknowledgments}
We are very grateful to the Associate Editor and to the three referees
for their careful reading of the manuscript and their very relevant
remarks.

%
% imsref loaded by akundreckaite, 2013-11-04 15:36:41

% zodis "Acknowledgments" paliekamas pagal autoriu

%suskaldyti doi

\printaddresses

\end{document}